\documentclass{l4dc2025}


\usepackage{wrapfig}
\usepackage{hyperref}
\usepackage{cleveref}
\usepackage{algorithm}
\usepackage{algorithmicx}
\usepackage[noend]{algpseudocode}
\usepackage{graphicx}
\usepackage{Shorthands}
\usepackage{enumitem}
\usepackage{ifthen}
\usepackage{floatflt}
\usepackage[font=small,labelfont=bf]{caption}
\usepackage{array}
\usepackage{amsmath}
\usepackage{booktabs}
\renewcommand{\arraystretch}{1.6}

\setlength{\textfloatsep}{5pt}

\newcounter{l4dc}

\setcounter{l4dc}{1}

\makeatletter
\renewcommand{\jmlrvolume}[1]{}
\renewcommand{\jmlryear}[1]{}
\renewcommand{\jmlrproceedings}[2]{}
\def\ps@jmlrtps{%
  \def\@oddhead{}
  \def\@evenhead{}
  \def\@oddfoot{}
  \def\@evenfoot{}
}
\pagestyle{jmlrtps}
\makeatother

\usepackage[noindentafter]{titlesec}
\titlespacing\section{0pt}{10pt}{4pt}
\titlespacing\subsection{0pt}{4pt}{3pt}

\usepackage{times}
\title[]{Domain Randomization is Sample Efficient\\ for Linear Quadratic Control}

\author{%
 \Name{Tesshu Fujinami} \Email{ftesshu@seas.upenn.edu} \\
 \Name{Bruce D. Lee} \Email{brucele@seas.upenn.edu} \\
 \Name{Nikolai Matni} \Email{nmatni@seas.upenn.edu} \\
 \Name{George J. Pappas} \Email{pappasg@seas.upenn.edu} \\
 \addr All authors are with the Department of Electrical and Systems Engineering, University of Pennsylvania}

\date{August 2023}
\begin{document}
\maketitle

\begin{abstract} 
   We study the sample efficiency of domain randomization and robust control for the benchmark problem of learning the linear quadratic regulator (LQR). Domain randomization, which synthesizes controllers by minimizing average performance over a distribution of model parameters, has achieved empirical success in robotics, but its theoretical properties remain poorly understood. We establish that with an appropriately chosen sampling distribution, domain randomization achieves the optimal asymptotic rate of decay in the excess cost, matching certainty equivalence. We further demonstrate that robust control, while potentially overly conservative, exhibits superior performance in the low-data regime due to its ability to stabilize uncertain systems with coarse parameter estimates. We propose a gradient-based algorithm for domain randomization that performs well in numerical experiments, which enables us to validate the trends predicted by our analysis. These results provide insights into the use of domain randomization in learning-enabled control, and highlight several open questions about its application to broader classes of systems. 
\end{abstract}

\section{Introduction}

\begin{wrapfigure}[12]{r}{0.55\textwidth}
    \centering
    \vspace{-10pt} 
    \includegraphics[width=\linewidth]{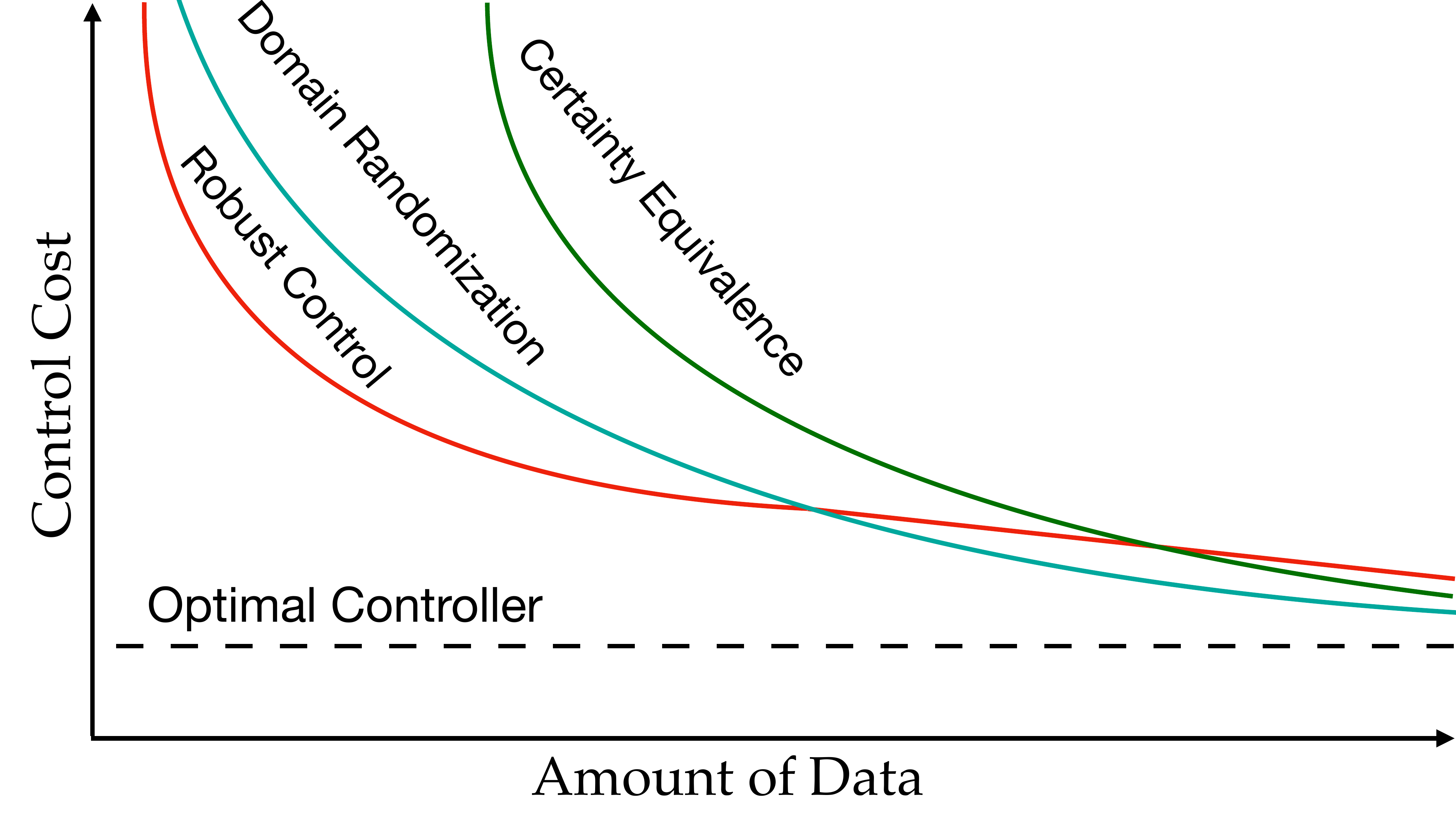}
    \vspace{-24pt} 
    \caption{Illustration of the sample efficinecy of various synthesis methods. }
    \label{fig:conceptual figure}
\end{wrapfigure}

The use of learned world models to synthesize controllers via policy optimization is becoming increasingly prevalent in reinforcement learning \citep{wu2023daydreamer, matsuo2022deep}. The performance of the resulting controller depends heavily upon the synthesis procedure. Simple approaches that do not account for uncertainty, known as certainty equivalence, can overfit to errors in the learned model. Robust control approaches can tolerate some error in the learned model, but may be overly conservative and computationally demanding. 
Consequently, \emph{domain randomization} has emerged as a dominant paradigm in robotics for enabling transfer of policies optimized in simulation on a learned or physics-based simulator to the real world by randomizing system parameters during policy optimization \citep{tobin2017dr, akkaya2019solving}. Despite the empirical success of domain randomization, it remains poorly understood how to select the randomization distribution, and how much of a discrepancy between the learned model and the real world can be tolerated. 

We seek to address these issues by restricting attention to the benchmark problem of learning the linear quadratic regulator (LQR). This problem consists of collecting experimental interaction data from a linear dynamical system, and using this data to synthesize a controller that optimizes a control objective. The linear dynamical system is described by
\begin{align}
    \label{eq: linear system}
    X_{t+1} = A(\theta^\star) X_t + B(\theta^\star) U_t + W_t \textrm{ for } t=1,\dots, T, 
\end{align}
where $X_t\in\R^{\dx}$ is the system state, $U_t\in\R^{\du}$ is the control input, $W_t\in\R^{\dx}$ is i.i.d. mean zero Gaussian noise, and $\theta^\star\in\R^{d_{\theta}}$ is an unknown parameter. The goal is to design a controller $K$ to minimize the objective $C(K, \theta^\star)$, 
defined as
\begin{align}
    \label{eq: lqr cost}
    C(K, \theta) \triangleq \limsup_{T\to\infty}\E_{\theta}^K 
    \brac{ \frac{1}{T} \sum_{t=1}^T \paren{\norm{X_t}_Q^2 + \norm{U_t}_R^2}},
\end{align}
for $Q$ and $R$ positive definite weight matrices. The subscript on the expectation denotes that the states evolves according to $X_{t+1} = A(\theta) X_t +B(\theta) U_t + W_t$, and the superscript denotes that the inputs are selected according to the linear feedback $U_t = K X_t$. This benchmark problem has been used to study the sample efficiency of certainty equivalence \citep{mania2019certainty} and robust control \citep{dean2020sample} by quantifying how many experiments from the system \eqref{eq: linear system} are sufficient to achieve some level of control performance. In this work, we study the sample efficiency of domain randomization, which chooses a control policy as 
\begin{align}
    \label{eq: domain randomization}
    K_{\mathsf{DR}}(\calD) = \argmin_{K} \E_{\theta\sim \calD} \brac{C(K, \theta)}, 
\end{align}
for a sampling distribution $\calD$ determined using the dataset of experiments collected from \eqref{eq: linear system}.

\subsection{Contributions}

Our contribution is to study the sample efficiency of domain randomization and robust control to establish the relationships visualized as a conceptual diagram in \Cref{fig:conceptual figure}. In particular, we achieve the following: 
\begin{itemize}[noitemsep, nolistsep, leftmargin=*]
    \item \textbf{Sample Effiency of Domain Randomization:} We prove that with an appropriately chosen sampling distribution $\calD$, the domain randomization procedure of \eqref{eq: domain randomization} achieves the optimal asymptotic rate of decay with the number of samples, thereby matching the performance of certainty equivalence in the large sample regime. We further conjecture that the burn-in time for DR lies between that of RC and CE.  We leave proving this to future work, but verify this conjecture numerically.
    \item \textbf{Sample Effiency of Robust Control:} We prove the tightest known bound on robust control, improving the asymptotic rate of decay with the number of samples $N$ from $1/\sqrt{N}$ to $1/N$. The upper bounds indicate a gap between the asymptotic rate of decay for robust control, and the rate of decay for domain randomization and certainty equivalence in terms of system-theoretic quantities. We conjecture that this gap is fundamental, due to the conservative nature of robust control. However, we establish that robust control can achieve a smaller burn-in time relative to certainty equivalence, due to its ability to stabilize the system with a coarse estimate. 
    \item \textbf{Algorithm for Domain Randomization:} We propose an algorithm to solve \eqref{eq: domain randomization} which proves effective in numerical experiments. This enables verification of the trends predicted in the aforementioned results, and aligns with the conceptual diagram in \Cref{fig:conceptual figure}. While the focus of this work is restricted to linear systems, the proposed algorithm can in principle extend to general nonlinear systems, whereas similar extensions for robust control face computational challenges.  
\end{itemize}
By providing this characterization, our work demonstrates the potential of domain randomization in learning-enabled control, and partially explains the empirical success that it has achieved in robotics applications. We therefore conclude by highlighting several interesting open questions regarding the use of domain randomization for learning-enabled control. 





\subsection{Related Work}

\paragraph{Domain Randomization} 
Domain randomization, introduced by \citet{tobin2017dr}, is widely used for \emph{sim-to-real transfer}. By randomizing simulator parameters during training, it aims to produce policies robust to simulator variations, thereby enabling transfer to the real-world. This approach has been applied in areas like autonomous racing \citep{loquercio2019deep}
and robotic control \citep{peng2018drforcontrol, akkaya2019solving}. However, its success depends heavily on selecting an effective sampling strategy \citep{mehta2020active}, which is often challenging. While previous work has explored generalization of domain randomization in discrete Markov Decision Processes \citep{chen2021understanding, zhong2019pacreinforcementlearningrealworld, jiang2018pac}, formalizing generalization for continuous control remains an open problem, which we address in this work.
\vspace{-3pt}
\paragraph{Identification and Control} The linear quadratic regulator problem has become a key benchmark for evaluating reinforcement learning in continuous control \citep{abbasi2011regret, recht2019tour}. The offline setting has been extensively studied: \citet{dean2020sample} analyzed the sample efficiency of robust control, while \citet{mania2019certainty, wagenmaker2021task, lee2023fundamental} showed that certainty equivalence is asymptotically instance-optimal, achieving the best possible sample efficiency with respect to system-theoretic quantities. Extensions to smooth nonlinear systems were made by \citet{wagenmaker2024optimal, lee2024active}. However, certainty equivalence can perform poorly with limited data. Alternative Bayesian approaches \citep{von2022improving, chiuso2023harnessing} can mitigate such limitations. We therefore show that such uncertainty-aware synthesis methods can match the asymptotic efficiency of certainty equivalence while achieving better performance in low-data regimes.
\vspace{-3pt}
\paragraph{Robust Control:} The control community has traditionally addressed policy synthesis with imperfect models using methods like $\calH_\infty$ control, which focuses on worst-case uncertainty \citep{zhou1996robust, bacsar2008h, doyle1982analysis, fan1991robustness}. Randomized approaches to robust control emphasizing high-probability guarantees have also been explored \citep{calafiore2006scenario, stengel1991technical, ray1993monte}. \citet{vidyasagar2001randomized} proposed an average performance metric similar to domain randomization but focused on a fixed distribution rather than one informed by data. Early data-driven synthesis efforts combined classical system identification \citep{ljung1998system} with worst-case robust control \citep{gevers2005identification}, while recent work has developed robust synthesis methods that bypass explicit models \citep{berberich2020robust}. To the best of our knowledge, existing analyses of statistical efficiency in robust control yield suboptimal rates, with excess control cost decreasing at $1/\sqrt{N}$ \citep{dean2020sample}, compared to the faster $1/N$ rate achieved by certainty equivalence. This work refines robust synthesis analysis, demonstrating the $1/N$ rate and a short burn-in period, highlighting its advantages with limited data.

\vspace{-3pt}
\paragraph{Notation: }
 \sloppy
The operator $\mathsf{vec}(A)$ stacks the columns of $A$ into a vector, and its inverse $\mathsf{vec}^{-1}(\mathsf{vec}(A), n)$ returns a matrix with $n$ rows by successively stacking chunks of $n$ elements into columns. The Kronecker product of $A$ and $B$ is $A \kron B$. $x \vee y$ denotes the max of $x$ and $y$.

\section{Problem Formulation}

Consider the linear dynamical system \eqref{eq: linear system}. We assume that$(A(\theta^\star), B(\theta^\star))$ is stabilizable. For ease of exposition, we restrict attention to the case where $\bmat{A(\theta) & B(\theta)} = \mathsf{vec}^{-1}(\theta, \dx)$, i.e. all entries of the state and input matrices are unknown. We additionally assume that $\Sigma_w = I$, and the cost matrices $Q\succeq I$ and $R = I$ are known.\footnote{Generalizing to arbitrary $\Sigma_w \succ 0$, $Q\succ 0$, and $R\succ 0$ can be performed by scaling the cost and changing the state and input basis. }

When $\theta$ is known, the optimal controller that minimizes $C(K, \theta)$ is given by 
\begin{align*}
    P(\theta) &\triangleq A(\theta)^\top P(\theta) A(\theta) - A(\theta)^\top P(\theta) B(\theta) (B(\theta)^\top P(\theta)  B(\theta)+ R)^{-1} B(\theta)^\top P(\theta) A(\theta)+ Q \\
    K(\theta) &\triangleq -(B(\theta)^\top P(\theta) B(\theta) + R)^{-1} B(\theta)^\top P(\theta) A(\theta),
\end{align*}
where $P(\theta)$ is the positive definite solution to the discrete algebraic Ricatti equation, and $K(\theta)$ is the LQR solution corresponding to a system with parameters $\theta$. 

To design a controller for the unknown system \eqref{eq: linear system}, we suppose that we have run $N$ experiments with control input $U_t \sim \calN(0, \Sigma_u)$ for $\Sigma_u \succ 0$.  From these experiments, we collect a dataset 
\begin{align}
    \label{eq: dataset}
    \curly{(X_t^n, U_t^n, X_{t+1}^n)}_{t=1, n=1}^{T,N}
\end{align} consisting of $N$ trajectories of length $T$ from \eqref{eq: linear system}. We use this dataset to design a controller $\hat K$ such that the cost, $C(\hat K, \theta^\star)$,
is small. In our analysis, we consider several approaches to achieve this objective. The approaches will be  contrasted by examining the rate at which this cost decays to the optimal cost as the number of experiments in the dataset increases.
\subsection{Controller Synthesis Approaches}
\label{s: methods}

All the synthesis approaches under consideration are model-based approaches which determine $\hat \theta$ from the dataset \eqref{eq: dataset} via the following least squares problem:
\begin{align}
    \label{eq: least squares}
    \hat \theta \triangleq \argmin_{\theta} \sum_{n=1}^N \sum_{t=1}^T \norm{X_{t+1}^n - \bmat{A(\theta) & B(\theta)} \bmat{X_t \\ U_t}}^2. 
\end{align}
The estimation error for $\hat \theta$ can be characterized using the Fisher information matrix:
\begin{align}
    \mathsf{FI}(\theta^\star) \triangleq \mathbf{E}_{\theta^\star}^{U_t \sim \calN(0, \Sigma_u)} \brac{\sum_{t=1}^T \bmat{X_t \\ U_t} \bmat{X_t \\ U_t}^\top } \otimes \Sigma_W^{-1}, \label{eq:FI}
\end{align}
see, e.g. \citet{lee2024active}. Since this quantity depends on the unknown parameter, we define the estimate
\begin{align}
     \label{eq: fisher estimate}
    \hat{\mathsf{FI}} \triangleq \frac{1}{N}\sum_{n=1}^N \sum_{t=1}^T \bmat{X_t^n \\ U_t^n } \bmat{X_t^n \\ U_t^n}^\top \otimes \Sigma_w^{-1},
\end{align}
which quantifies the uncertainty of the estimation procedure. 

We consider three approaches to control synthesis using the estimates $\hat\theta$ and $\hat{\mathsf{FI}}$:
\begin{itemize}[noitemsep, nolistsep, leftmargin=*]
    \item \textbf{Certainty Equivalence (CE)} uses the estimate $\hat{\theta}$ to minimize the control objective (\ref{eq: lqr cost}) by treating the estimate as though it were ground truth: 
        $K_{\mathsf{CE}}(\hat \theta) = \argmin_{K} C(K,\hat\theta).$
    \item \textbf{Robust Control (RC)} constructs a high confidence ellipsoid around the nominal estimate $\hat\theta$ using the estimated fisher information matrix $\hat{\mathsf{FI}}$ as
    \begin{align}
        \label{eq: confidence ellipsoid}
        G = \curly{\theta: (\theta-\hat\theta)^\top (N\hat {\mathsf{FI}}) (\theta- \hat \theta) \leq 16 (d_{\theta} +  \log(2/\delta))}.
    \end{align}
    Such a set can be shown to contain the true parameter $\theta^\star$ with probability at least $1-\delta$ as long as the number of experiments, $N$, is sufficiently large (see \Cref{s: id bound proof}). 
    Robust control then uses the confidence ellipsoid to determine a controller that minimizes the worst case  value of the control objective over all members of the confidence set as
    \begin{align*}
        {K}_{\mathsf{RC}}(G) = \argmin_{K}\sup_{\theta\in G}(C(K, {\theta}) - C(K(\theta), {\theta})).
    \end{align*}
    This formulation is nonstandard, as the controller minimizes the worst-case suboptimality gap, rather than the worst case cost \citep{gevers2005identification}. However, it simplifies the analysis. 
    \item \textbf{Domain Randomization (DR)} constructs a sampling distribution $\calD$ using the least squares estimate $\hat \theta$ and the estimated Fisher Information $\hat{\mathsf{FI}}$. It then synthesizes a controller by minimizing the average control cost as in \eqref{eq: domain randomization}. By ensuring good performance on average over a sampling distribution, domain randomization serves as a middle ground between certainty equivalence and robust control. In particular, it can be interpreted as enforcing a high probability robust stability constraint over the sampling region.\footnote{If the controller is not stabilizing for a subset of systems with nonzero mass, the cost will be infinite.} Careful choice of the sampling distribution is therefore critical for downstream performance. Our analysis informs this choice. 
\end{itemize}

The goal of this paper is to study the sample efficiency of these three approaches. In particular, we consider upper bounds on the gap $C(\hat K, \theta^\star)- C(K(\theta^\star), \theta^\star)$, where $\hat K$ is a controller synthesized with CE, RC, or DR. We express these bounds in terms of system-theoretic quantities, and the number of experiments collected from the system. Doing so provides an indication of the types of systems on which these methods perform well. We focus our attention on two key quantities: the burn-in time required to ensure finite bounds, and the asymptotic rate of decay in these bounds.

\section{Sample Efficiency Bounds for Controller Synthesis Approaches}

\vspace{-5pt}

\begin{table}[h!]
    \centering
    \begin{tabular}{|c|c|c|c|c|}
    \hline
        Method\! & Leading Term & Burn-in Time & \!Scalable Alg.\! & Source of Bounds \\
    \hline
     CE & $\frac{1}{N}\trace\paren{H(\theta^\star) \mathsf{FI}(\theta^\star)^{-1}}$ & $ {\norm{P(\theta^\star)}^{10}}$ & Yes & \!\citet{wagenmaker2021task}\footnotemark\! \\
     \hline
     RC & $\frac{1}{N} d_{\theta} \norm{H(\theta^\star) \mathsf{FI}(\theta^\star)^{-1}}$ & ${\!\norm{P(\theta^\star)}^4\!\vee\! \frac{1}{r^2}}\!$ & No & This paper \\
     \hline
     DR & $\frac{1}{N}\trace\paren{H(\theta^\star) \mathsf{FI}(\theta^\star)^{-1}}$ & $ \!{\norm{P(\theta^\star)}^{11}} \tau_{B(\theta^\star\!)}^{16}\!$ & Yes & This paper \\
     \hline
    \end{tabular}
    \vspace{-6pt}
    \caption{Comparison of sample efficiency bounds for CE, RC, and DR. The leading term is stated up to universal constants. The burn-in time reports the components which are different between the three approaches. We use the shorthand $\tau_{B(\theta^\star)} = \norm{B(\theta^\star)} \vee 1$. We classify algorithms as scalable if they are possible to implement via first order gradient-based approaches.}
    \vspace{-2pt}
    \label{tab:sample_efficiency}
\end{table}
\footnotetext{Due to slight discrepancies in the setting (e.g. we consider multiple trajectories for identification), the exact version of the certainty equivalent bound considered is presented in \Cref{s: certainty equivalence bound}.}

Our sample efficiency bounds are summarized in \Cref{tab:sample_efficiency}, where they are compared with existing bounds for certainty equivalence. The leading term in the bound characterizes the asymptotic rate of decay. For all three synthesis approaches described in \Cref{s: methods}, the leading term depends on four quantities: the parameter dimension, $d_{\theta}$, the number of experiments, $N$, the Fisher Information matrix $\mathsf{FI}(\theta^\star)$, and a matrix $H(\theta^\star)$, which captures the sensitivity of control synthesis to the estimation error. In particular, $H(\theta^\star)$ is given by
\begin{align*}
    H(\theta^\star) = \nabla_{\theta}^2 C(K(\theta), \theta^\star)\vert_{\theta=\theta^\star}.
\end{align*}
\citet{wagenmaker2021task} show that $\frac{1}{N}\trace\paren{H(\theta^\star) \mathsf{FI}(\theta^\star)^{-1}}$ is the optimal asymptotic rate achievable by any algorithm mapping a dataset \eqref{eq: dataset} to a controller.\footnote{This lower bound is specified to our setting in \Cref{s: lower bound proof}.} Accordingly, both certainty equivalence and domain randomization can be classified as sample-efficient, as they achieve this optimal rate of decay with respect to system-theoretic quantities. The bound on robust control instead has a leading term of $d_{\theta} \norm{H(\theta^\star) \mathsf{FI}(\theta^\star)^{-1}} \geq \trace\paren{H(\theta^\star) \mathsf{FI}(\theta^\star)^{-1}}$, and therefore cannot be classified as efficient.\footnote{While we lack a formal lower bound proving the inefficiency of RC, experiments support this conclusion (see Fig.~\ref{fig:result dr lqr}).} 

\Cref{tab:sample_efficiency} also highlights the burn-in time, the number of samples that suffice for the sample efficiency bounds to hold. The reported values omit terms common to all three methods. Among the differing quantities, the burn-in for CE scales with $\norm{P(\theta^\star)}^{10}$, for DR it scales with $\norm{P(\theta^\star)}^{11}\tau_{B(\theta^\star)}^{12}$, and for RC it scales with the max of $\norm{P(\theta^\star)}^{4}$ and $\frac{1}{r^2}$, a term quantifying the robust stabilizability of $\theta^\star$. Although $\frac{1}{r^2}$ can be as large as $\norm{P(\theta^\star)}^{10}$, it is often much smaller (see \Cref{s: robust control}), suggesting that robust control can achieve a much lower burn-in than the alternative approaches for many system instances.\footnote{We emphasize that these burn-in conditions are sufficient but not necessary; refining them is left to future work.} Experiments further suggest that domain randomization's burn-in can potentially fall between that of certainty equivalence and robust control.

Finally, \Cref{tab:sample_efficiency} highlights that certainty equivalence and domain randomization give rise to scalable gradient-based policy optimization algorithms, which can easily be extended to nonlinear and high dimensional systems. In contrast, the solving the robust control problem requires computationally challenging LMI-based approaches, even for fully observed linear systems.

\subsection{Sample Efficiency of Domain Randomization}

We now establish the characterization of domain randomization in the final row of \Cref{tab:sample_efficiency}. To this end, we first define a burn-in time which enables a bound on the the least squares  error: 
\begin{align}
    \label{eq: id burn-in}
    N_{\mathsf{ID}} \triangleq \mathsf{poly}\paren{\sum_{t=0}^{T-1} \norm{A(\theta^\star)^t \bmat{I & B(\theta^\star)} }, \norm{\Sigma_u}, \norm{\Sigma_w}, \norm{\mathsf{FI}(\theta^\star)}, \frac{1}{\lambda_{\min}\paren{\mathsf{FI}(\theta^\star)}} }.
\end{align}
A bound on least squares using this quantity is shown in \Cref{s: id bound proof}. Proofs for the remainder of the results in this section may be found in \Cref{s: domain randomization proofs}. 

We first state a bound on the performance of domain randomization that holds for general sampling distributions centered at $\hat \theta$. 
\begin{lemma}
    \label{lem: domain randomization general}
    Suppose the dataset $\curly{(X_t^n, U_t^n, X_{t+1}^n)}_{t=1, n=1}^{T,N}$ is collected from N trajectories of the system \eqref{eq: linear system} via a random control input $U_t \sim \calN(0, \Sigma_u)$. Let $\hat\theta$ be the least square estimate computed by \eqref{eq: least squares}. Let $\calD$ be any distribution with mean $\hat \theta$, and which is supported on a set with diameter bounded by $\frac{1}{256} \norm{P(\theta^\star)}^{-5}$.
    It holds with probability at least $1-\delta$ that 
    \begin{align}
        &C(K_{\mathsf{DR}}(\calD), \theta^\star) - C(K(\theta^\star), \theta^\star) \nonumber \leq \frac{8{\trace\paren{H(\theta^\star}{\mathsf{FI}}(\theta^\star)^{-1})}}{N} + 2\trace\paren{\mathbf{V}(\calD) H(\theta^\star)} \\& + 16\frac{\norm{H(\theta^\star)\mathsf{FI}(\theta^\star)^{-1}}}{N}\log\frac{2}{\delta} + L_{\mathsf{DR}}(\theta^\star)\frac{\norm{\mathsf{FI}(\theta^\star)^{-1}}^{3/2}}{N^{3/2}}, \label{eq:DR upper bound}
    \end{align}
    where $
        L_{\mathsf{DR}}(\theta^\star, \delta) = \mathsf{poly}(d_{\theta}, \max\curly{1, \norm{B(\theta^\star}}, \norm{P(\theta^\star)}, \log\frac{1}{\delta}),$ as long as the number of experiments $N$ satisfies $N \geq \max\curly{N_{\mathsf{ID}}, ~ \frac{c\norm{P(\theta^\star)}^{11} (\norm{B(\theta^\star)} \vee 1)^{16}\paren{d_\theta+\log\frac{2}{\delta}}}{\lambda_{\min}\paren{\mathsf{FI}(\theta^\star)}}}$ for a universal constant $c$.
\end{lemma}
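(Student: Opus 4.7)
The plan is to Taylor-expand $C(\cdot,\theta^\star)$ around the optimal controller $K(\theta^\star)$ and then to characterize $K_{\mathsf{DR}}(\calD)$ via the first-order optimality condition of the randomized objective. Because $K(\theta^\star)$ minimizes $C(\cdot,\theta^\star)$, a second-order expansion gives $C(K_{\mathsf{DR}},\theta^\star) - C(K(\theta^\star),\theta^\star) \approx \frac{1}{2}\norm{K_{\mathsf{DR}} - K(\theta^\star)}_{M(\theta^\star)}^2$, where $M(\theta^\star) \triangleq \nabla_K^2 C(K(\theta^\star),\theta^\star)$. The small-diameter assumption on $\calD$ together with the burn-in $N_{\mathsf{ID}}$ ensures that $\hat\theta$ and every $\theta$ in the support of $\calD$ lie in a neighborhood of $\theta^\star$ on which Riccati--Lyapunov perturbation estimates hold uniformly, so that the cubic Taylor remainder can be absorbed into the $L_{\mathsf{DR}}\norm{\mathsf{FI}^{-1}}^{3/2}/N^{3/2}$ term.

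The core step is to bound $\E_{\theta\sim\calD}[C(K_{\mathsf{DR}},\theta) - C(K(\theta),\theta)]$ and then bridge this ``DR objective value'' back to the true excess cost. Optimality of $K_{\mathsf{DR}}$ with respect to $\E_\calD[C(\cdot,\theta)]$ (equivalently, with respect to $\E_\calD[C(\cdot,\theta) - C(K(\theta),\theta)]$) against the comparator $K(\theta^\star)$ yields $\E_\calD[C(K_{\mathsf{DR}},\theta) - C(K(\theta),\theta)] \leq \E_\calD[C(K(\theta^\star),\theta) - C(K(\theta),\theta)]$. Expanding the right-hand side via the identity $C(K(\theta^\star),\theta) - C(K(\theta),\theta) \approx \frac{1}{2}(\theta-\theta^\star)^\top H(\theta^\star)(\theta-\theta^\star)$---which follows from the Taylor expansion $K(\theta^\star) - K(\theta) \approx DK(\theta^\star)(\theta^\star - \theta)$ combined with the chain-rule identity $H(\theta^\star) = DK(\theta^\star)^\top M(\theta^\star) DK(\theta^\star)$---and taking $\E_\calD$ produces exactly $\frac{1}{2}\trace(H(\theta^\star)[\mathbf{V}(\calD) + (\hat\theta-\theta^\star)(\hat\theta-\theta^\star)^\top])$. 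This simultaneously yields the CE-like quadratic form $(\hat\theta-\theta^\star)^\top H(\theta^\star)(\hat\theta-\theta^\star)$ and the variance term $\trace(\mathbf{V}(\calD)H(\theta^\star))$.

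To relate the DR objective value back to $C(K_{\mathsf{DR}},\theta^\star) - C(K(\theta^\star),\theta^\star)$, I would use a second Taylor expansion of $C(K_{\mathsf{DR}},\theta) - C(K(\theta),\theta)$ in $\theta$ about $\theta^\star$, noting that the zeroth-order coefficient is exactly $C(K_{\mathsf{DR}},\theta^\star) - C(K(\theta^\star),\theta^\star)$ while the gradient-in-$\theta$ coefficients are proportional to $K_{\mathsf{DR}} - K(\theta^\star)$ and therefore of higher order that falls into the $1/N^{3/2}$ residual. Applying a Hanson--Wright-type inequality to $(\hat\theta-\theta^\star)^\top H(\theta^\star)(\hat\theta-\theta^\star)$ converts this random quadratic form into its expectation $\trace(H(\theta^\star)\mathsf{FI}(\theta^\star)^{-1})/N$ plus the deviation $\norm{H(\theta^\star)\mathsf{FI}(\theta^\star)^{-1}}\log(1/\delta)/N$, using the Fisher-information least-squares bound from \Cref{s: id bound proof}. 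The numerical factors of $8$ and $2$ in the leading terms arise from an $(a+b)^2 \leq 2a^2 + 2b^2$-style split used to decouple the CE-like and variance contributions along the bridging step.

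The main obstacle will be producing precisely $H(\theta^\star)$ (and not a mismatched Hessian such as $\nabla_\theta^2 C(K(\theta^\star),\theta)\vert_{\theta^\star}$) in the variance coefficient, while keeping every Taylor remainder controlled uniformly over both the random support of $\calD$ and the random direction $\hat\theta - \theta^\star$. This is what forces uniform third-order smoothness estimates for the LQR map and its value function, and is the source of the $\norm{P(\theta^\star)}^{11}(\norm{B(\theta^\star)}\vee 1)^{16}$ dependence in the burn-in condition. By comparison, the Fisher-information concentration step and the Hanson--Wright bound are routine once the optimality-condition decomposition has isolated the correct quadratic forms.
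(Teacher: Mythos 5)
Your overall architecture matches the paper's: bound the averaged objective $\E_{\theta\sim\calD}[C(K_{\mathsf{DR}},\theta)-C(K(\theta),\theta)]$ by a comparator argument, then bridge back to the true excess cost $C(K_{\mathsf{DR}},\theta^\star)-C(K(\theta^\star),\theta^\star)$. Your comparator step is sound (the paper uses $K(\hat\theta)$ as comparator and so obtains only the variance term at that stage, whereas your choice of $K(\theta^\star)$ produces both $\trace(\mathbf{V}(\calD)H(\theta^\star))$ and $\norm{\hat\theta-\theta^\star}_{H(\theta^\star)}^2$ at once; either works). The gap is in your bridge. You propose to Taylor-expand $g(\theta)\triangleq C(K_{\mathsf{DR}},\theta)-C(K(\theta),\theta)$ about $\theta^\star$ and dismiss the first-order term as $O(N^{-3/2})$ because $\nabla g(\theta^\star)$ is proportional to $K_{\mathsf{DR}}-K(\theta^\star)$. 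But $\norm{K_{\mathsf{DR}}-K(\theta^\star)}=\Theta(N^{-1/2})$ (it is controlled by the DR objective plus $\norm{K(\hat\theta)-K(\theta^\star)}$, both of order $N^{-1/2}$), and $\E_\calD[\theta]-\theta^\star=\hat\theta-\theta^\star$ is also $\Theta(N^{-1/2})$, so the cross term $\nabla g(\theta^\star)^\top(\hat\theta-\theta^\star)$ is $\Theta(1/N)$ --- the same order as the leading terms. It cannot be absorbed into the $L_{\mathsf{DR}}\norm{\mathsf{FI}^{-1}}^{3/2}/N^{3/2}$ residual, and bounding it crudely by Cauchy--Schwarz with the unweighted norms would contaminate the leading term with a $\mathsf{poly}(\norm{P(\theta^\star)})\,d_\theta\norm{\mathsf{FI}(\theta^\star)^{-1}}/N$ contribution, destroying the claimed instance-optimal constant $8\trace(H(\theta^\star)\mathsf{FI}(\theta^\star)^{-1})/N$.

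The paper avoids this by never Taylor-expanding $g$ in $\theta$: it writes $g(\theta^\star)$ exactly via the performance difference lemma, splits $K_{\mathsf{DR}}-K(\theta^\star)=(K_{\mathsf{DR}}-K(\theta))+(K(\theta)-K(\theta^\star))$ inside an expectation over $\theta\sim\calD$, and applies the inequality $\trace((a+b)M(a+b)^\top\Psi)\le 2\trace(aMa^\top\Psi)+2\trace(bMb^\top\Psi)$ --- precisely the $(a+b)^2\le 2a^2+2b^2$ split you mention for the constants, but applied in controller space to kill the cross term, not as a post-hoc accounting of where $8$ and $2$ come from. This is the step your proposal is missing; it also forces the $\Sigma^{K_{\mathsf{DR}}}(\theta^\star)\leftrightarrow\Sigma^{K_{\mathsf{DR}}}(\theta)$ and $\Psi(\theta^\star)\leftrightarrow\Psi(\theta)$ swaps (and the prerequisite that $K_{\mathsf{DR}}$ stabilizes $\theta^\star$ with $\norm{\Sigma^{K_{\mathsf{DR}}}(\theta^\star)}$ bounded), which occupy the paper's Lemma~\ref{lem: DR helper lemmas} and account for much of the $\norm{P(\theta^\star)}^{11}\tau_{B(\theta^\star)}^{16}$ burn-in. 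Your identification/Hanson--Wright step and the variance computation are otherwise consistent with the paper.
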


To minimize the above upper bound, choosing a sampling distribution with a variance of zero would be best and would lead to completely canceling the term with $\mathbf{V}(\calD)$. However, this would eliminate any benefits that we hope to see in the low data regime. Instead, to maximize the potential robustness benefits, we should choose the distribution with the most spread, which does not significantly degrade the performance achieved for large $N$ (the regime where the above bound is valid). We therefore propose choosing $\calD$ as a uniform distribution over the confidence ellipsoid constructed for RC \eqref{eq: confidence ellipsoid}. Computing the variance of this quantity demonstrates that the term $\trace\paren{\mathbf{V}(\calD) H(\theta^\star)}$ scales as $\frac{1}{N}\trace\paren{\hat{\mathsf{FI}}^{-1} H(\theta^\star)}\paren{1 + \frac{1}{d_{\theta}}\log\frac{2}{\delta}}$, leading to the following bound. 

\begin{theorem}
    \label{thm: Domain Randomization Upper Bound}
    Under the setting of \Cref{lem: domain randomization general}, let $\calD$ be a uniform distribution over the confidence ellipsoid $G$ defined in \Cref{eq: confidence ellipsoid}. Then it holds with probability at least $1-\delta$ that 
    \begin{align}
        &C(K_{\mathsf{DR}}(\calD), \theta^\star) - C(K(\theta^\star), \theta^\star) \nonumber \leq \frac{40\paren{1+\log\paren{\frac{2}{\delta}}/d_\theta}\trace\paren{H(\theta^\star}{\mathsf{FI}}(\theta^\star)^{-1})}{N} \\
        &+ 16\frac{\norm{H(\theta^\star)\mathsf{FI}(\theta^\star)^{-1}}}{N}\log\frac{2}{\delta} + L_{\mathsf{DR}}(\theta^\star, \delta)\frac{\norm{\mathsf{FI}(\theta^\star)^{-1}}^{3/2}}{N^{3/2}},\label{eq:DR upper bound w/ unif dist}
    \end{align}
    as long as $N$ satisfies the burn-in time of \Cref{lem: domain randomization general}.
\end{theorem}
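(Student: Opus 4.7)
The plan is to apply \Cref{lem: domain randomization general} with $\calD$ taken to be the uniform distribution on the confidence ellipsoid $G$, and then carry out two supporting calculations: (i) evaluate the covariance term $\mathbf{V}(\calD)$ explicitly, and (ii) transfer factors involving $\hat{\mathsf{FI}}$ back to $\mathsf{FI}(\theta^\star)$ via a concentration argument. The variance term is where the content of this specialization lives; everything else is bookkeeping inherited from the general lemma.

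First, I would verify the two structural hypotheses of \Cref{lem: domain randomization general}. The distribution $\calD$ is centered at $\hat\theta$ by symmetry of $G$ around $\hat\theta$. The support diameter of $G$ equals $2 \sqrt{16(d_\theta + \log(2/\delta))/(N\lambda_{\min}(\hat{\mathsf{FI}}))}$; using the standard concentration $\hat{\mathsf{FI}} \succeq \tfrac{1}{2}\mathsf{FI}(\theta^\star)$ on the burn-in event (which is the content of $N \geq N_{\mathsf{ID}}$ together with the larger burn-in listed in the lemma), this diameter is at most a constant times $\sqrt{(d_\theta+\log(2/\delta))/(N\lambda_{\min}(\mathsf{FI}(\theta^\star)))}$. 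The burn-in condition $N \gtrsim \norm{P(\theta^\star)}^{11}\tau_{B(\theta^\star)}^{16}(d_\theta + \log(2/\delta))/\lambda_{\min}(\mathsf{FI}(\theta^\star))$ is precisely what is needed to bound this by $\tfrac{1}{256}\norm{P(\theta^\star)}^{-5}$, so the diameter hypothesis is met.

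Next I would compute $\mathbf{V}(\calD)$ in closed form. For the uniform distribution on the ellipsoid $\{\theta : (\theta-\hat\theta)^\top M (\theta-\hat\theta) \leq r^2\}$ in $\R^{d_\theta}$, an affine change of variables $\theta - \hat\theta = r\, M^{-1/2} z$ reduces the computation to the covariance of the uniform distribution on the unit ball, which is $\tfrac{1}{d_\theta + 2}I$. Therefore, with $M = N\hat{\mathsf{FI}}$ and $r^2 = 16(d_\theta + \log(2/\delta))$,
\begin{equation*}
    \mathbf{V}(\calD) \;=\; \frac{16\bigl(d_\theta + \log(2/\delta)\bigr)}{d_\theta + 2}\,\frac{1}{N}\,\hat{\mathsf{FI}}^{-1}.
\end{equation*}
Taking the trace against $H(\theta^\star)$ and using $d_\theta + 2 \geq d_\theta$ gives $2\trace(\mathbf{V}(\calD)H(\theta^\star)) \leq \tfrac{32(1 + \log(2/\delta)/d_\theta)}{N}\trace(\hat{\mathsf{FI}}^{-1}H(\theta^\star))$.

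The remaining step is to replace $\hat{\mathsf{FI}}^{-1}$ by $\mathsf{FI}(\theta^\star)^{-1}$. On the burn-in event I would use $\hat{\mathsf{FI}} \succeq \tfrac{1}{2}\mathsf{FI}(\theta^\star)$ (equivalently $\hat{\mathsf{FI}}^{-1} \preceq 2\mathsf{FI}(\theta^\star)^{-1}$) together with the positive semidefiniteness of $H(\theta^\star)$ to bound $\trace(\hat{\mathsf{FI}}^{-1}H(\theta^\star)) \leq 2\trace(\mathsf{FI}(\theta^\star)^{-1}H(\theta^\star))$; a slightly tighter $1+o(1)$ constant would work too, but a factor of $2$ is sufficient and lets the $32$ above merge with the $8$ already present in \eqref{eq:DR upper bound} to yield the stated $40$. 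Combining the three terms of \eqref{eq:DR upper bound}, absorbing the covariance contribution into the first term, and leaving the $\log(2/\delta)$ concentration term and the higher-order $1/N^{3/2}$ remainder unchanged yields \eqref{eq:DR upper bound w/ unif dist}.

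The one place where care is required is the concentration step for $\hat{\mathsf{FI}}$: the leading coefficient in the final bound is sensitive to the constant obtained there, so one must be explicit that the two-sided bound $\tfrac{1}{2}\mathsf{FI}(\theta^\star) \preceq \hat{\mathsf{FI}} \preceq \tfrac{3}{2}\mathsf{FI}(\theta^\star)$ (or similar) already holds under the burn-in, rather than only an order-of-magnitude estimate. This is standard for covariates from a stable LQR rollout and is presumably established in \Cref{s: id bound proof}; invoking it is what lets the computation go through with the clean constant $40$ rather than a larger universal constant.
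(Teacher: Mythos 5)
Your proposal is correct and takes essentially the same route as the paper's own proof: compute the covariance of the uniform distribution over the ellipsoid as $\frac{1}{d_\theta+2}$ times the inverse of its shape matrix, plug the resulting bound on $\trace(\mathbf{V}(\calD)H(\theta^\star))$ into \Cref{lem: domain randomization general}, and convert $\hat{\mathsf{FI}}^{-1}$ to $\mathsf{FI}(\theta^\star)^{-1}$ via the two-sided concentration $0.5\,\mathsf{FI}(\theta^\star)\preceq\hat{\mathsf{FI}}\preceq 2\,\mathsf{FI}(\theta^\star)$ from \Cref{thm: identification bound}. The only quibble is arithmetic: once the factor of $2$ from $\hat{\mathsf{FI}}^{-1}\preceq 2\,\mathsf{FI}(\theta^\star)^{-1}$ is actually applied, the leading coefficient is $8+2\cdot 32=72$ rather than $40$, so your claim that the $32$ merges with the $8$ to yield $40$ silently drops that factor --- but the paper's own proof contains the identical slip, so this is a shared constant-tracking issue rather than a gap in your argument.
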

The burn-in time from \Cref{lem: domain randomization general} provides the powers of $\norm{P(\theta^\star)}$ and $\tau_{B(\theta^\star)}$ listed in the final row of \Cref{tab:sample_efficiency}. The leading term follows from the fact that $L_{\mathsf{DR}}(\theta^\star)$ is multiplied by $N^{-3/2}$, and therefore decays faster. The deviation terms (quantities multiplied by $\log\frac{2}{\delta}$) are not considered leading terms, because they are dominated by the term 
$\frac{1}{N}\trace\paren{H(\theta^\star) \mathsf{FI}(\theta^\star)^{-1}}$ when the bounds are converted from high probability bounds to bounds in expectation.
This leaves a universal constant multiplied by  $\trace\paren{H(\theta^\star) \mathsf{FI}(\theta^\star)^{-1}}$, leading to the characterization of domain randomization with the chosen distribution as sample efficient.

The bound above fails to demonstrate a clear advantage of domain randomization over certainty equivalence in the low-data regime, despite empirical evidence suggesting such benefits (\Cref{fig:result dr lqr}). By designing the sampling distribution to have support on the confidence ellipsoid \eqref{eq: confidence ellipsoid}, we ensure that the true system $\theta^\star$ has positive density in the sampling distribution with high probability. This design raises the hope that domain randomization could reduce the burn-in time. However, we have not been able to prove this property, as we cannot exclude the possibility that for distributions with large support, the domain-randomized controller \eqref{eq: domain randomization} might incur very high costs  near $\theta^\star$ while performing well elsewhere. In contrast, we show in the sequel that robust control can provide such benefits, albeit at the expense of sacrificing asymptotic efficiency.

\subsection{Sample Efficiency of Robust Control}
\label{s: robust control}

We now establish the second row of \Cref{tab:sample_efficiency} by analyzing the efficiency of robust control. Our goal is to demonstrate how robust control addresses the limitations of certainty equivalence with limited data. To achieve this, we introduce a formal definition of robust stabilizability for the system. In this definition, we denote the state covariance of system $\theta$ under controller $K$ by $\Sigma^K(\theta)$. 

\begin{definition}[Robust Stabilizability]
    Let $M\in\R$ and $G$ be a set. $G$ is $M$-robustly stabilizable by CE if $\exists \theta \in G $ such that for all $\theta' \in G$, $\norm{\Sigma^{K(\theta)}(\theta')} \leq M$. 
    Let also $r\in\R$. $\theta$ is $(M,r)$-robustly stabilizable by CE if $\forall A\subseteq\mathcal{B}(\theta, r)$, $A$ is $M$-robustly stablizable by CE. 
\end{definition}

The above definition captures the idea that a CE controller synthesized for a some parameter in a set can stabilize every member of that set. This is stronger than merely assuming the existence of an arbitrary controller that stabilizes all members, and plays a key role in our analysis of robust control. Relaxing this condition is left for future work. Nevertheless, for any stabilizable system, we can choose $r$ sufficiently small to satisfy the condition. Specifically, by Theorem 3 of \citet{simchowitz2020naive}, if $r \leq \frac{1}{256} \norm{P(\theta)}^{-5}$, then $\theta$ is $(M,r)$-robustly stabilizable by CE with $M = 2 \norm{P(\theta)}$. However, the given condition is system-specific and can often be satisfied with larger $r$.
Consider the following example.

\begin{example}
    Consider a scalar linear dynamical system:
        $X_{t+1} = a^\star X_t + b^\star U_t + W_t \quad \forall t\geq 0,$
    where $a^\star = 1.05$ and $b^\star = 1$. Suppose that only $a^\star$ is unknown. Consider the LQR problem defined by $Q=1$, and $R=1000$. Computing $\norm{P(\theta^\star)}$, it holds by \citet{simchowitz2020naive} that the instance is $(M, r)$ stabilizable with $M = 225$ and $r = 2\times 10^{-13}$. However, we can achieve a better characterization for this instance by noting that for any subset $G$ of the interval $[0.3, 1.8]$, synthesizing an LQR controller $k$ using the largest value of the parameter in $G$ ensures that $a + b^\star k < 0.97$ for all $a \in G$, thereby ensuring that $\norm{\Sigma^k(a)} \leq \frac{1}{1-.97^2}\leq 20$ for all $a \in G$. Then the instance is $(20, 0.75)$-robustly stabilizable by CE. 

    To illustrate the impact on control performance, suppose we have an estimate $\hat a = 1.01$. The certainty-equivalent controller derived from this estimate is $k=-0.0424$, which fails to stabilize the true system. In contrast, if we apply robust control over any uncertainty set within the interval 
    $[0.3,1.8]$ that includes $a^\star$, we synthesize a controller that stabilizes the system. Therefore, the robust control procedure selects such a controller to avoid infinite cost.
\end{example}

With the definition of robust stabilizability by certainty equivalence in hand, we proceed to state an upper bound on the excess cost incurred by the robust controller. 

\begin{theorem}
    \label{thm: Robust Control upper bound}
    Suppose the dataset $\curly{(X_t^n, U_t^n, X_{t+1}^n)}_{t=1, n=1}^{T,N}$ is collected from N trajectories of the system \eqref{eq: linear system} via a random control input $U_t \sim \calN(0, \Sigma_u)$. Let $\hat\theta$ be the least square estimate computed by \eqref{eq: least squares}, and $G$ be the confidence ellipsoid of \eqref{eq: confidence ellipsoid}.
    Choose $r > 0$. Let $M$ be the smallest real number such that $\theta^\star$ is $(M, r)$-robustly stabilizable.
    It holds that with probability at least $1-\delta$
    \begin{align}
        &C(K_{RC}(G), \theta^\star)\! -\! C(K(\theta^\star), \theta^\star)\nonumber\!\leq\! \frac{64\paren{d_\theta\!+\!\log\frac{2}{\delta}}\norm{H(\theta^\star)\mathsf{FI}(\theta^\star)^{\!-\!1}}}{N} \!+\! L_{\mathsf{RC}}(\theta^\star\!,\! \delta, M\!)\frac{\norm{\mathsf{FI}(\theta^\star)^{-1}\!}^{3/2}}{N^{3/2}}, 
    \end{align}
    \sloppy where
        $L_{\mathsf{RC}}(\theta^\star, \delta, M) = \mathsf{poly}(d_{\theta}, \max\curly{1, \norm{B(\theta^\star)}}, \norm{P(\theta^\star)}, \log\frac{1}{\delta}, M),$
    as long as the number of experiments satisfies
    $N \geq \max\curly{N_{\mathsf{ID}}, \frac{c\norm{P(\theta^\star)}^4\paren{d_\theta+\log\frac{2}{\delta}}}{\lambda_{\min}\paren{\mathsf{FI}(\theta^\star)}}, \frac{c \paren{d_\theta+\log\frac{2}{\delta}}}{r^2 \lambda_{\min}\paren{\mathsf{FI}(\theta^\star)}}}$ for a universal constant $c$.  
\end{theorem}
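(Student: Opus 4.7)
The plan is to leverage the min-max structure defining $K_{RC}(G)$ together with the $(M,r)$-robust stabilizability of $\theta^\star$. First, under the stated burn-in, I would establish three simultaneous high-probability events. The first is $\theta^\star \in G$, which follows from the self-normalized concentration argument that motivates the confidence ellipsoid (and is already invoked for the identification bound in \Cref{s: id bound proof}). The second is the matrix concentration bound $\tfrac{1}{2}\mathsf{FI}(\theta^\star) \preceq \hat{\mathsf{FI}} \preceq 2\mathsf{FI}(\theta^\star)$, which holds once $N \geq N_{\mathsf{ID}}$ together with the condition scaling with $1/\lambda_{\min}(\mathsf{FI}(\theta^\star))$. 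The third, obtained by combining these with the triangle inequality in the Fisher metric, is the diameter bound $\|\theta - \theta'\|_{\mathsf{FI}(\theta^\star)}^2 \leq 128(d_\theta + \log(2/\delta))/N$ for all $\theta, \theta' \in G$. The remaining burn-in term $N \geq c(d_\theta + \log(2/\delta))/(r^2 \lambda_{\min}(\mathsf{FI}(\theta^\star)))$ then forces the Euclidean diameter of $G$ below $r$, so that $G \subseteq \mathcal{B}(\theta^\star, r)$.

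Second, invoking the $(M, r)$-robust stabilizability of $\theta^\star$ on the set $G$, there exists some $\tilde\theta \in G$ such that the CE controller $K(\tilde\theta)$ uniformly stabilizes every system in $G$, with $\|\Sigma^{K(\tilde\theta)}(\theta)\| \leq M$ for all $\theta \in G$. This witness serves two purposes: it proves the robust synthesis problem has finite value, and it provides a concrete comparison controller for the upper bound.

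Third, since $\theta^\star \in G$, the min-max definition of $K_{RC}(G)$ and feasibility of $K(\tilde\theta)$ give
\begin{align*}
    C(K_{RC}(G), \theta^\star) - C(K(\theta^\star), \theta^\star) &\leq \sup_{\theta \in G} \paren{C(K_{RC}(G), \theta) - C(K(\theta), \theta)} \\
    &\leq \sup_{\theta \in G} \paren{C(K(\tilde\theta), \theta) - C(K(\theta), \theta)},
\end{align*}
reducing the problem to pointwise suboptimality of $K(\tilde\theta)$ on each $\theta \in G$. Since $K(\theta)$ is optimal for $\theta$, the map $\theta' \mapsto C(K(\theta'), \theta)$ has vanishing gradient at $\theta' = \theta$, so a second-order Taylor expansion yields $C(K(\tilde\theta), \theta) - C(K(\theta), \theta) = \tfrac{1}{2}(\tilde\theta - \theta)^\top H_\theta (\tilde\theta - \theta) + R_\theta$, with $H_\theta = \nabla_{\theta'}^2 C(K(\theta'), \theta)\vert_{\theta'=\theta}$ and cubic remainder $R_\theta$. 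Applying $x^\top H_\theta x \leq \|H_\theta \mathsf{FI}(\theta^\star)^{-1}\| \cdot x^\top \mathsf{FI}(\theta^\star) x$ together with the diameter bound of the first paragraph produces the leading term $64\|H(\theta^\star)\mathsf{FI}(\theta^\star)^{-1}\|(d_\theta + \log(2/\delta))/N$, modulo replacing $H_\theta$ by $H(\theta^\star)$.

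The main technical obstacle will be the uniform perturbation bound $\|H_\theta - H(\theta^\star)\| \leq C \|\theta - \theta^\star\|$ across $\theta \in G$, together with a matching uniform bound on $R_\theta$. These require Lipschitz estimates for the Riccati solution and its first three derivatives, which can be obtained via implicit function theorem arguments using the stability margin guaranteed by $M$; this is exactly why the bound depends on $M$ through $L_{RC}(\theta^\star, \delta, M)$. Carefully tracking how $M$, $\|P(\theta^\star)\|$, and $\|B(\theta^\star)\|$ enter these Lipschitz constants produces the polynomial dependence in $L_{RC}$, and folds both the Hessian perturbation and the cubic remainder into the stated $\|\mathsf{FI}(\theta^\star)^{-1}\|^{3/2}/N^{3/2}$ term.
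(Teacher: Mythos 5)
Your proposal is correct and follows essentially the same route as the paper's proof: establish $\theta^\star \in G$ and a diameter bound forcing $G \subseteq \calB(\theta^\star, r)$, use $(M,r)$-robust stabilizability to produce a CE witness $K(\tilde\theta)$ that stabilizes all of $G$ and serves as the comparator in the min-max bound, then expand the suboptimality gap to second order and control the quadratic term via $\norm{H(\theta^\star)\mathsf{FI}(\theta^\star)^{-1}}$ times the ellipsoid diameter. The only cosmetic difference is that the paper Taylor-expands the controller map $K(\cdot)$ inside the performance-difference identity (with $M$ entering only through $\norm{\Sigma^{K(\tilde\theta)}(\theta)}$ in the remainder, and the Riccati derivative bounds coming from the $\norm{P(\theta^\star)}^4$ burn-in rather than from $M$), whereas you expand the cost directly; both yield the same leading term and fold the same perturbation errors into $L_{\mathsf{RC}}$.
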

A proof of this result is provided in \Cref{s: robust control proof}. Similar to \Cref{thm: Domain Randomization Upper Bound}, the leading term in \Cref{tab:sample_efficiency} is obtained by omitting the deviation term ($\log\frac{2}{\delta}$) and the lower-order term scaling with $N^{-3/2}$
 . While we lack a lower bound for robust control, we conjecture that the leading term is tight. The reasoning behind this conjecture is that robust control selects the worst-case perturbation of the parameter within the confidence set, which naturally leads to dependence on the operator norm, as opposed to the trace observed in alternative approaches.

The burn-in time from in the above theorem demonstrates how the robust stabilizability condition combined with the robust synthesis approach leads to the value reported in \Cref{tab:sample_efficiency}. 


\section{Numerical Experiments}
\label{s: numerical}


\subsection{Linear System}

\begin{wrapfigure}[9]{r}{0.5\textwidth}
    \centering
    \vspace{-25pt} 
    \includegraphics[width=\linewidth]{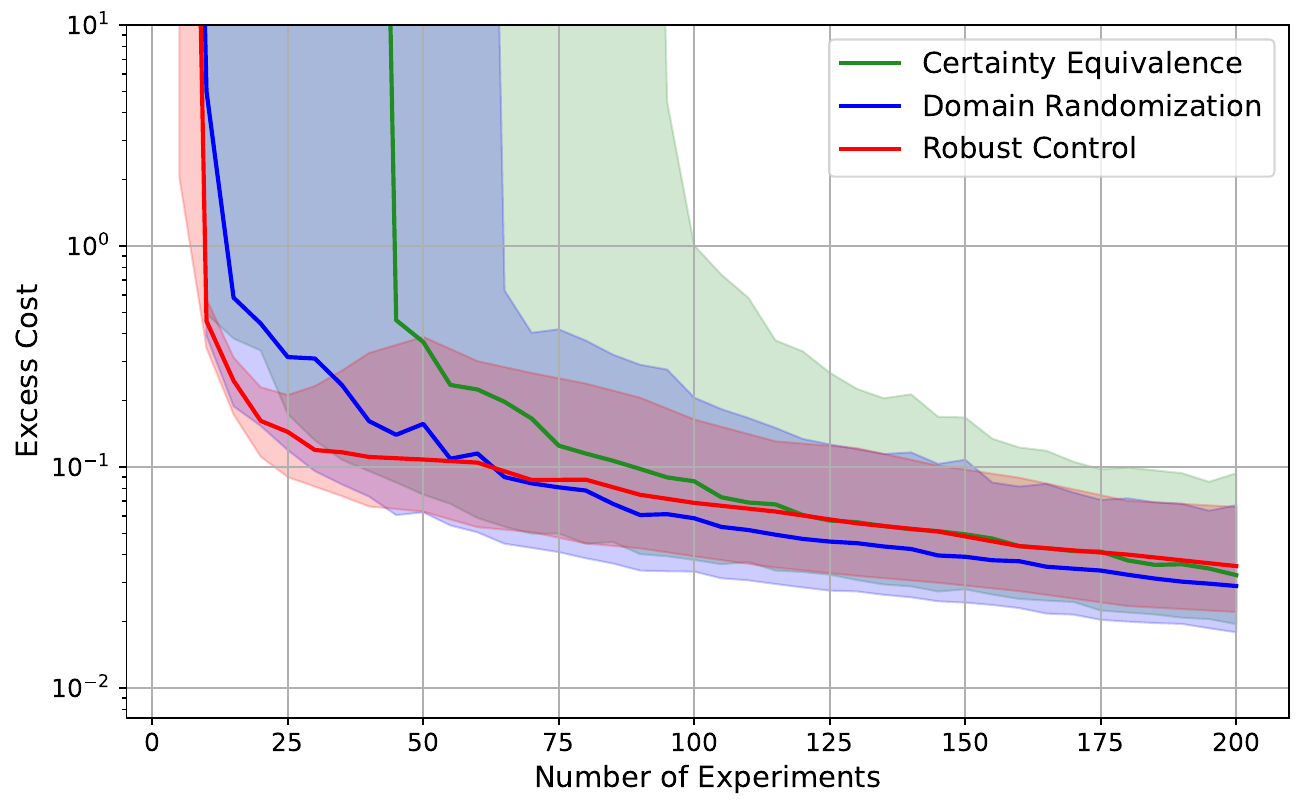}
    \vspace{-21pt} 
    \caption{Excess cost of controllers found via three methods using models fit with various amounts of data.}
    \label{fig:result dr lqr}
\end{wrapfigure}
\renewcommand{\arraystretch}{1.0}
We validate the trends predicted in \Cref{tab:sample_efficiency} through a case study on the linear system 
\vspace{-3pt}
\begin{align}
    &A = \bmat{
        1.01 & 0.01 & 0 \\
        0.01 & 1.01 & 0.01 \\
        0 & 0.01 & 1.01}, \nonumber\\
    &B = I, Q = 10^{-3}I, R = I. \label{eq:numerical experiments params}
\end{align}
We first estimate $A$ and $B$ using least squares identification, then synthesize CE, DR, and RC controllers based on the identified models. Further details are provided in \Cref{s: implementation details}.

In \Cref{fig:result dr lqr}, we plot the median and shade 25 \% to 75 \% quantile over 500 random seeds. 
This result reveals two key observations. First, both DR and RC stabilize the system with fewer experiments than CE. While the improved performance of DR in the low-data regime lacks concrete theoretical justification, \Cref{thm: Robust Control upper bound} supports this trend for RC. Second, after an initial period, DR converges faster than RC, eventually matching the convergence rate of CE. This aligns with the conclusion of \Cref{thm: Domain Randomization Upper Bound} and is consistent with the conceptual illustration in \Cref{fig:conceptual figure}.

\subsection{Pendulum}
\begin{wrapfigure}{r}{0.5\textwidth}
    \centering
    \vspace{-90pt}
    \includegraphics[width=\linewidth]{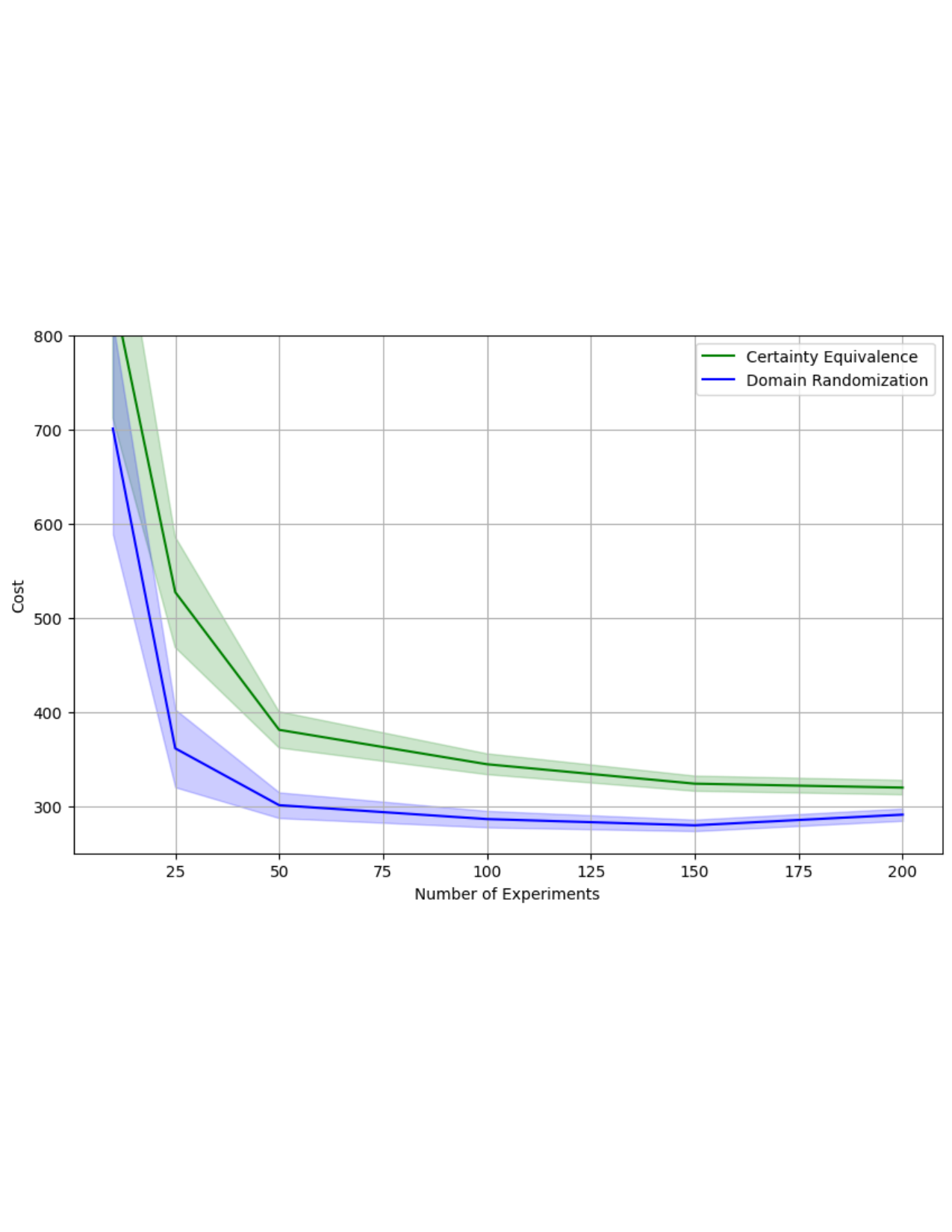}
    \vspace{-90pt}
    \caption{Cost of CE and DR controllers based on models fit with various amounts of data.}
    \label{fig:pendulum}    
\end{wrapfigure}

We extend the approach to nonlinear systems of pendulum by applying receding horizon control. For CE, planning occurs by minimizing a finite horizon cost for a trajectory planned using the nominal system estimate. For domain randomization, planning occurs by minimizing the average finite horizon cost for trajectories planned with a collection of systems sampled from a distribution around the system estimate. 

The result is shown in \Cref{fig:pendulum}, which plots the mean and standard error over 100 random seeds. As observed in the linear system, the trends that (i) DR outperforms CE in the low-data regime and (ii) the convergence rate of CE and DR matches hold even for the nonlinear system. 
Implementation details are deferred to the \Cref{s: implementation details}.

\section{Discussion}

\paragraph{Algorithmic considerations:}
Our numerical experiments use a gradient-based algorithm for DR, detailed in \Cref{alg: dr lqr}. This algorithm builds on the scenario approach of \citet{vidyasagar2001randomized} and the policy gradient method introduced for the linear quadratic regulator by \citet{fazel2018global}. The algorithm initially samples a number of scenarios from the distribution. At each iteration, the gradient update is performed on the cost summed over all scenarios. Since the control cost gradient becomes infinite if the current iterate does not stabilize a system, we incorporate only the gradients for system which the current iterate stabilizes. While we lack a formal convergence guarantee for this algorithm, numerical experiments indicate that it converges with a sufficiently small step size.
\begin{algorithm}[t]
 \caption{Domain Randomized Policy-Gradient for the Linear Quadratic Regulator} 
 \label{alg: dr lqr}
\begin{algorithmic}[1]
\vspace{-2pt}
\State \textbf{Input: } Randomization distribution $\calD$, estimate $\hat \theta$, stepsize $\eta$, $\#$ iterations $M$, $\#$ scenarios $N$
\State \textbf{Initialize: } $\hat K_1 \gets K_{CE}(\hat \theta)$,
\State \textbf{Sample: }
 $N$ scenarios $\theta_1, \dots, \theta_N \sim \calD$
    \For{$i = 1, 2, \dots, M$}  $\backslash\backslash$ Gradient descent on stable scenarios 
    \State $\hat K_{i+1} = \hat K_i - \eta \sum_{j=1}^N \nabla C(\hat K_i, \theta_{j}) \mathbf{1}(\rho(A(\theta_{j}) + B(\theta_{j}) \hat K_i) <1)$
    \EndFor
\State \textbf{Return: } $\hat K_{M}$. 
\end{algorithmic}
\vspace{-2pt}
\end{algorithm}

This algorithm raises numerous questions which may be fruitful directions for future work. 
\begin{itemize} [noitemsep,nolistsep,leftmargin=*]
    \item \textbf{Convergence analysis of \Cref{alg: dr lqr}}: Extending the convergence analysis of policy gradient methods for LQR by \citet{fazel2018global, hu2023toward} to the proposed algorithm could provide theoretical guarantees for convergence to the solution of \eqref{eq: domain randomization}. Such an analysis would offer strong evidence of the algorithm's applicability beyond the toy numerical example presented here.
    \item \textbf{Alternative choices of distribution:} We proposed sampling from a uniform distribution over the confidence ellipsoid \eqref{eq: confidence ellipsoid} to maximize the spread of the sampling distribution without sacrificing asymptotic efficiency (\Cref{thm: Domain Randomization Upper Bound}). Exploring alternative distributions, such as truncated normal distributions, could reveal differences in empirical performance. These alternatives might also enable refined analyses of \Cref{thm: Domain Randomization Upper Bound}, particularly with respect to burn-in time.
    \item  \textbf{Extension to nonlinear systems:}  \Cref{alg: dr lqr} can, in principle, be extended to nonlinear systems, provided that gradients of the control objective can be obtained via Monte Carlo sampling. The least-squares analysis for CE in \Cref{tab:sample_efficiency} also extends nonlinear systems \citep{lee2024active}, suggesting that the proposed domain randomization approach could also be effective for such systems. This may yield sample efficiency guarantees analogous to those studied here.
\end{itemize}

\paragraph{Theoretical extensions: } There are additionally numerous directions to tighten the analysis and generalize the setting. 
\begin{itemize}[noitemsep,nolistsep,leftmargin=*]
    \item \textbf{Improved burn-in time for domain randomization: }
    While this work empirically demonstrates that DR can stabilize the system even in the low-data regime, the burn-in time derived in \Cref{lem: domain randomization general} does not reflect this advantage, as it is larger than  that of CE in \Cref{thm: certainty equivalence bound}. The only known way to improve the burn-in time is by adopting a robustly stabilizable condition, which considers the worst case but results in a conservative upper bound on the excess cost, as shown in \Cref{thm: Robust Control upper bound}. A promising direction for future work is to tighten the analysis for burn-in requirements and higher order terms of DR while maintaining its asymptotic efficiency.
    \item \textbf{Robust control lower bound:} Our upper bounds feature a gap between the asymptotic convergence rate of RC and that of CE and DR. We conjecture that this gap is fundamental; however, a lower bound on the sample efficiency of RC would be required to formalize this.
    \item \textbf{Misspecification:}  This work has focused on the use of DR to address model uncertainty arising from variance in model fitting. Specifically, the assumption of the dynamics in \eqref{eq: linear system} imposes a realizability condition, ensuring that a suitably constructed distribution $\calD$ contains $\theta^\star$ in its support with high probability. However, a key explanation for the empirical success of DR in many robotics applications is its robustness to model misspecification \citep{tobin2017dr}. Investigating this theoretically represents a promising direction for future work.
\end{itemize}

\section{Conclusion}

By analyzing the sample efficiency of learning the linear quadratic regulator via domain randomization and robust control, our work provides insights into the tradeoffs present for approaches to incorporate uncertainty quantification into learning-enabled control. Our analysis demonstrates that if one is strategic about the design of the sampling distribution, then the benefits of domain randomization over robust control may extend beyond computational considerations, and to the sample efficiency. This is particularly exciting due to the prominence of domain randomization in practice for robot learning. We believe that this line of analysis exposes a wide spread of interesting questions regarding the use of domain randomization for learning-enabled control. 

\acks{We thank Manfred Morari, Anastasios Tsiamis, Ingvar Ziemann, and Thomas Zhang for several instructive conversations. 
TF is supported by JASSO Exchange Support program and UTokyo-TOYOTA Study Abroad Scholarship. TF and GP are supported in part by NSF Award SLES 2331880 and NSF TRIPODS EnCORE 2217033.  BL and NM are supported by NSF Award SLES-2331880, NSF CAREER award ECCS-2045834 and AFOSR Award FA9550-24-1-0102.}

\bibliography{refs}
\clearpage
\appendix

\section{Notation}
The Euclidean norm of a vector $x$ is denoted $\norm{x}$. For a matrix $A$, the spectral norm is denoted $\norm{A}$, and the Frobenius norm is denoted $\norm{A}_F$. 
A symmetric, positive semi-definite matrix $A = A^\top$ is denoted $A \succeq 0$.  $A \succeq B$ denotes that $A-B$ is positive semi-definite. Similarly, a symmetric, positive definite matrix $A$ is denoted $A \succ 0$. 
The minimum eigenvalue of a symmetric, positive semi-definite matrix $A$ is denoted $\lambda_{\min}(A)$. For a positive definite matrix $A$, we define the $A$-norm as $\norm{x}_A^2 = x^\top A x$. 
The gradient of a scalar valued function $f: \R^n \to \R$ is denoted $\nabla f$, and the Hessian is denoted $\nabla^2 f$. 
The Jacobian of a vector-valued function $g: \R^n \to \R^m$ is denoted $D g$, and follows the convention for any $x\in\R^n$, the rows of $D g(x)$ are the transposed gradients of $g_i(x)$.
The $p^{th}$ order derivative of $g$  is  denoted by $D^{p} g$. Note that for $p \geq 2$, $D^{p} g(x)$ is a tensor for any $x\in\R^{n}$. 
The operator norm of such a tensor is denoted by $\norm{D^p g(x)}_{\mathsf{op}}$. 
For a function $f: \mathsf{X} \to \R^{\dy}$, we define $\norm{f}_{\infty} \triangleq \sup_{x \in \mathsf{X}} \norm{f(x)}$. 
A Euclidean norm ball of radius $r$ centered at $x$ is denoted $\calB(x,r)$. The state covariance matrix of the system under controller K is denoted as $\Sigma^K(\theta) \triangleq \dlyap((A(\theta)+B(\theta)K)^T, I )$. 
The solution to the discrete algebraic Ricatti equation satisfies $P \succeq I$ as long as $Q\succeq I$. 
The Hessian of the objective function is calculated as 
\begin{align*}
 H(\theta) &\triangleq \nabla^2_\theta C(K(\theta), \theta) = \mathsf{D}_{\theta}\VEC K(\theta)^T(\Psi(\theta)\kron\Sigma_X^{K(\theta)}(\theta))\mathsf{D}_{\theta}\VEC K(\theta)
\end{align*}
where $\Psi(\theta) \triangleq B(\theta)^TP(\theta)B(\theta) + R$.

\section{Perturbation Analysis}


Here we present a number of perturbation results that we re-use throughout our analysis.


\begin{lemma}[Performance Difference Lemma, Lemma 12 of \citet{fazel2018global}]
    \label{lem: performance difference}
    Let $\theta$ denote the parameter for a dynamical system, and $K$ be an arbitrary gain that stabilizes this system. Then it holds that
    \begin{align*}
        C(K, \theta) - C(K(\theta), \theta) = \trace\paren{(K-K(\theta) \Sigma^K(\theta) (K-K(\theta))^\top \Psi(\theta)}, 
    \end{align*}
    where we recall that $\Sigma^K(\theta)$ is the state covariance of the system under controller $K$. 
\end{lemma}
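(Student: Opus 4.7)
The plan is to prove the identity by expressing both costs in terms of solutions to a discrete Lyapunov equation, then deriving an algebraic identity relating the difference of these Lyapunov solutions to a quadratic in $K - K(\theta)$, and finally taking a trace against the noise covariance.

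First I would recall that for any stabilizing gain $K$, the infinite-horizon average LQR cost admits the closed form
\begin{equation*}
C(K, \theta) = \trace\!\bigl(\Sigma_w \, P_K(\theta)\bigr),
\end{equation*}
where $P_K(\theta)$ is the unique positive semidefinite solution of the Lyapunov equation
\begin{equation*}
P_K = (A + BK)^\top P_K (A + BK) + Q + K^\top R K,
\end{equation*}
writing $A = A(\theta)$, $B = B(\theta)$ for brevity. In particular, $P_{K(\theta)}(\theta) = P(\theta)$ by definition of the DARE solution, so $C(K(\theta),\theta) = \trace(\Sigma_w P(\theta))$.

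The next step is the key algebraic identity. Subtract the Lyapunov equation for $P$ from the one for $P_K$ and complete the square using the optimality condition $K(\theta) = -\Psi(\theta)^{-1} B^\top P A$. After expanding $(A+BK)^\top P (A+BK)$ and grouping, the cross terms collapse precisely against $Q + (K(\theta))^\top R K(\theta)$ (which the DARE tells us equals $P - (A+BK(\theta))^\top P (A+BK(\theta))$), leaving
\begin{equation*}
P_K - P = (A+BK)^\top (P_K - P)(A+BK) + (K - K(\theta))^\top \Psi(\theta) (K - K(\theta)).
\end{equation*}
This is a Lyapunov equation in $P_K - P$ whose driving term is the claimed quadratic in $K - K(\theta)$. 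Verifying this identity is essentially an exercise in careful bookkeeping, and is the only nontrivial step.

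Finally, I would take the trace against $\Sigma_w = I$ on both sides and use the dual Lyapunov characterization of the state covariance, namely that for any matrix $M$,
\begin{equation*}
\trace\!\bigl(M \Sigma^K(\theta)\bigr) = \sum_{t \geq 0} \trace\!\bigl((A+BK)^{t\top} M (A+BK)^t\bigr),
\end{equation*}
so iterating the Lyapunov equation for $P_K - P$ yields
\begin{equation*}
\trace\!\bigl(\Sigma_w (P_K - P)\bigr) = \trace\!\bigl((K - K(\theta))^\top \Psi(\theta) (K - K(\theta)) \Sigma^K(\theta)\bigr),
\end{equation*}
which, after cycling the trace, is exactly the claimed expression for $C(K,\theta) - C(K(\theta),\theta)$.

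The main obstacle is the algebraic manipulation in the second step: one must use the DARE-optimality of $K(\theta)$ in exactly the right place to ensure the linear-in-$(K-K(\theta))$ terms vanish and leave a clean quadratic with $\Psi(\theta)$ in the middle. Once that identity is in hand, the remaining trace and Lyapunov summation arguments are routine.
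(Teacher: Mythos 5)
Your proof is correct. Note that the paper does not actually prove this lemma -- it is imported verbatim as Lemma 12 of \citet{fazel2018global} -- so there is no in-paper argument to compare against; what you have written is a valid self-contained derivation. Your route differs slightly in flavor from the cited source: Fazel et al.\ obtain the identity from a trajectory-level cost-difference (advantage) decomposition, summing the per-step advantage of $K$ against the optimal value function, whereas you work entirely at the level of the value matrices, deriving the Lyapunov equation $P_K - P = (A+BK)^\top (P_K - P)(A+BK) + (K-K(\theta))^\top \Psi(\theta)(K-K(\theta))$ and then dualizing the trace onto the state covariance $\Sigma^K(\theta)$. The key cancellation you flag -- that the linear-in-$(K - K(\theta))$ terms vanish because $(A+BK(\theta))^\top P B + K(\theta)^\top R = A^\top P B + K(\theta)^\top \Psi(\theta) = 0$ by the definition of $K(\theta)$ -- is exactly right, and the final step $\trace(\Sigma_w(P_K - P)) = \trace\bigl((K-K(\theta))^\top \Psi(\theta)(K-K(\theta))\,\Sigma^K(\theta)\bigr)$ is the standard Lyapunov adjoint identity, consistent with the paper's convention $\Sigma^K(\theta) = \dlyap((A+BK)^\top, I)$. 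The Lyapunov-algebra route buys a purely algebraic, distribution-free argument (no need to reason about stationary trajectories), while the advantage-decomposition route generalizes more readily to non-quadratic or finite-horizon settings; for this lemma they are interchangeable.
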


\begin{lemma}[Lyapunov Perturbation]
    \label{lem: lyap perturbation}
    Let $A_1, A_2 \in \R^{d \times d}$ satisfy  $\rho(A_1) < 1$ and $\rho(A_2) < 1$. Let $Q$ be a $d$ dimensional positive definite matrix. Define $P_1 = \dlyap(A_1, Q)$, and $P_2 = \dlyap(A_2, Q)$. Then it holds that 
    \begin{align*}
        \norm{P_1 - P_2} \leq \frac{1}{\lambda_{\min}(Q)}\norm{P_1} \norm{P_2}\paren{2\norm{A_2} \norm{A_1-A_2}+\norm{A_1 - A_2}^2}. 
    \end{align*}
\end{lemma}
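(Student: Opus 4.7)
The plan is to express $P_1 - P_2$ itself as the solution of a Lyapunov equation driven by a small perturbation, then bound the perturbation and apply a comparison inequality. First, subtracting the two defining Lyapunov equations $P_i = A_i^\top P_i A_i + Q$ and writing $\Delta = A_1 - A_2$, I would add and subtract $A_1^\top P_2 A_1$ to get
\begin{align*}
P_1 - P_2 = A_1^\top (P_1 - P_2) A_1 + E, \qquad E \triangleq A_2^\top P_2 \Delta + \Delta^\top P_2 A_2 + \Delta^\top P_2 \Delta.
\end{align*}
Since $\rho(A_1)<1$, this Lyapunov equation has the unique solution $P_1 - P_2 = \sum_{t=0}^\infty (A_1^\top)^t E\, A_1^t$.

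Next I would bound $E$ in spectral norm by the triangle inequality: $\norm{E} \le \norm{P_2}\bigl(2\norm{A_2}\norm{\Delta} + \norm{\Delta}^2\bigr)$. Because $E$ is not necessarily positive semidefinite, I would sandwich it as $-\norm{E} I \preceq E \preceq \norm{E} I$ and pass this through the Lyapunov map, which is monotone: defining $S = \dlyap(A_1, I) = \sum_{t=0}^\infty (A_1^\top)^t A_1^t$, the sandwich gives $-\norm{E} S \preceq P_1 - P_2 \preceq \norm{E} S$ and therefore $\norm{P_1 - P_2} \le \norm{E}\,\norm{S}$.

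Finally, to convert $\norm{S}$ into the stated $\norm{P_1}/\lambda_{\min}(Q)$ factor, I would use the monotonicity of $\dlyap(A_1, \cdot)$ in the second argument together with $Q \succeq \lambda_{\min}(Q) I$:
\begin{align*}
P_1 = \dlyap(A_1, Q) \succeq \lambda_{\min}(Q)\, \dlyap(A_1, I) = \lambda_{\min}(Q)\, S,
\end{align*}
so $\norm{S} \le \norm{P_1}/\lambda_{\min}(Q)$. Combining this with the bound on $\norm{E}$ yields the claimed inequality.

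I do not expect a serious obstacle here; everything reduces to standard Lyapunov-operator monotonicity. The only step that requires a moment of care is the sign issue for $E$, which is handled by the two-sided sandwich argument above rather than by applying the PSD version of the Lyapunov comparison directly.
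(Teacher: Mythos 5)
Your proof is correct and follows essentially the same route as the paper: both rewrite $P_1 - P_2$ as the solution of a Lyapunov equation driven by the perturbation $A_2^\top P \Delta + \Delta^\top P A_2 + \Delta^\top P \Delta$ and then use monotonicity of the Lyapunov map together with $\dlyap(A,I) \preceq P/\lambda_{\min}(Q)$; the only difference is that you solve in $\dlyap(A_1,\cdot)$ with $P_2$ in the perturbation while the paper solves in $\dlyap(A_2,\cdot)$ with $P_1$, which yields the identical bound. Your explicit two-sided sandwich $-\norm{E} I \preceq E \preceq \norm{E} I$ is a welcome bit of extra care on a step the paper passes over quickly.
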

\begin{proof}
    By definition of $P_1$ and $P_2$, it holds that 
    \begin{align*}
        P_1 - P_2 &= A_1^\top P_1 A_1 - A_2^\top P_2 A_2 \\
        &= (A_2 + (A_1-A_2))^\top P_1 (A_2 + (A_1-A_2)) - A_2^\top P_2 A_2 \\
        &= \dlyap(A_2, A_2^\top P_1 (A_1-A_2) +(A_1-A_2)^\top P_1 A_2 +  (A_1-A_2)^\top P_1(A_1-A_2)) \\
        &\preceq \dlyap(A_2, I) \norm{A_2^\top P_1 (A_1-A_2) +(A_1-A_2)^\top P_1 A_2 +  (A_1-A_2)^\top P_1(A_1-A_2)}.
    \end{align*}
    Note that $\dlyap(A_2, I)\preceq \frac{1}{\lambda_{\min}(Q)} P_2$. The result then follows by the triangle inequality and submultiplicativity.
\end{proof}

\begin{lemma}
    \label{lem: cov lower bound}
    Fix a system $\theta$ and two stabilizing controllers $K_1$ and $K_2$. As long as $\norm{B(\theta)(K_1 - K_2)} \leq \frac{1}{12 \norm{\Sigma^{K_1}(\theta)}^{5/2}}$, it holds that 
    \begin{align*}
        \Sigma^{K_2}(\theta) \succeq \frac{1}{2} \Sigma^{K_1}(\theta).
    \end{align*}
\end{lemma}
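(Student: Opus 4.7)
The plan is to view $\Sigma^{K_1}(\theta)$ and $\Sigma^{K_2}(\theta)$ as solutions of a discrete Lyapunov equation under a perturbation of the closed-loop matrix, and then apply the Lyapunov perturbation lemma (\Cref{lem: lyap perturbation}). Let $A_i \triangleq A(\theta) + B(\theta) K_i$ and $\Sigma_i \triangleq \Sigma^{K_i}(\theta)$, so that $\Sigma_i = A_i^\top \Sigma_i A_i + I$, and set $\Delta \triangleq A_2 - A_1 = B(\theta)(K_2 - K_1)$. The hypothesis bounds $\norm{\Delta} \leq \tfrac{1}{12}\norm{\Sigma_1}^{-5/2}$, and the Lyapunov identity immediately implies $\Sigma_1 \succeq I$; in particular $\norm{\Sigma_1} \geq 1$, so $\norm{\Delta} \leq \tfrac{1}{12}$.

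A preliminary step is to bound $\norm{A_i}$ in terms of $\norm{\Sigma_1}$. The identity $\Sigma_1 = A_1^\top \Sigma_1 A_1 + I$ combined with $\Sigma_1 \succeq I$ gives $\Sigma_1 \succeq A_1^\top A_1$, hence $\norm{A_1} \leq \sqrt{\norm{\Sigma_1}}$, and so $\norm{A_2} \leq \sqrt{\norm{\Sigma_1}} + \norm{\Delta} \leq 2\sqrt{\norm{\Sigma_1}}$. Applying \Cref{lem: lyap perturbation} with $Q = I$ then gives
\begin{align*}
\norm{\Sigma_1 - \Sigma_2} \leq \norm{\Sigma_1}\norm{\Sigma_2}\paren{2\norm{A_2}\norm{\Delta} + \norm{\Delta}^2},
\end{align*}
and plugging in the bounds on $\norm{A_2}$ and $\norm{\Delta}$ collapses this to $\norm{\Sigma_1 - \Sigma_2} \leq c\, \norm{\Sigma_2}/\norm{\Sigma_1}$ for an explicit constant $c < 1/4$ (the cross term $\tfrac{1}{3}\norm{\Sigma_1}^{-2}$ dominates the quadratic $\norm{\Delta}^2 \lesssim \norm{\Sigma_1}^{-5}$).

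The main obstacle is that this bound still contains the unknown quantity $\norm{\Sigma_2}$. The plan is to close the loop with a short bootstrap: the triangle inequality gives $\norm{\Sigma_2} \leq \norm{\Sigma_1} + \norm{\Sigma_1 - \Sigma_2} \leq \norm{\Sigma_1} + c\norm{\Sigma_2}/\norm{\Sigma_1}$, and solving yields $\norm{\Sigma_2} \leq \norm{\Sigma_1}/\paren{1 - c/\norm{\Sigma_1}} \leq 2\norm{\Sigma_1}$, where the last step uses $\norm{\Sigma_1} \geq 1$. Substituting this back gives $\norm{\Sigma_1 - \Sigma_2} \leq 2c < \tfrac{1}{2}$. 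The exponent $5/2$ and constant $1/12$ in the hypothesis are calibrated precisely so that the product $\norm{\Sigma_1}\norm{A_2}\norm{\Delta}$ scales as $\norm{\Sigma_1}^{-1}$ times a small absolute constant, which is exactly what makes the bootstrap close with room to spare.

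Finally, the spectral-norm bound promotes to the desired semidefinite inequality through
\begin{align*}
\Sigma_2 \succeq \Sigma_1 - \norm{\Sigma_1 - \Sigma_2}\, I \succeq \Sigma_1 - \tfrac{1}{2} I \succeq \tfrac{1}{2}\Sigma_1,
\end{align*}
where the last step once more invokes $\Sigma_1 \succeq I$ inherited from the Lyapunov equation.
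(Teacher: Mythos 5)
Your proof follows exactly the route the paper intends (its entire proof is the sentence ``apply the reverse triangle inequality along with \Cref{lem: lyap perturbation}''): perturb the closed-loop Lyapunov equation, bootstrap away the unknown $\norm{\Sigma^{K_2}(\theta)}$ via the (reverse) triangle inequality, and promote the spectral-norm bound to a semidefinite one using $\Sigma^{K_1}(\theta)\succeq I$. The structure is sound and your calibration remark about the exponent $5/2$ is exactly the point. One arithmetic slip, however: with your stated bound $\norm{A_2}\leq 2\sqrt{\norm{\Sigma_1}}$ the cross term is $2\cdot 2\cdot\tfrac{1}{12}\norm{\Sigma_1}^{-2}=\tfrac13\norm{\Sigma_1}^{-2}$, so $c=\tfrac13+\tfrac{1}{144}=\tfrac{49}{144}$, which is \emph{not} below $\tfrac14$, and the bootstrap then only yields $\norm{\Sigma_1-\Sigma_2}\leq \tfrac{c}{1-c}=\tfrac{49}{95}>\tfrac12$, so the final display fails by a hair. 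The repair is to keep the tighter estimate $\norm{A_2}\leq\sqrt{\norm{\Sigma_1}}+\tfrac{1}{12}\leq\tfrac{13}{12}\sqrt{\norm{\Sigma_1}}$, giving $c=\tfrac{13}{72}+\tfrac{1}{144}=\tfrac{3}{16}<\tfrac14$ and hence $\norm{\Sigma_1-\Sigma_2}\leq\tfrac{3}{13}<\tfrac12$, after which your closing chain of semidefinite inequalities goes through as written.
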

\begin{proof}
    The result follows by applying the reverse triangle inequality along with \Cref{lem: lyap perturbation}.
\end{proof}

\begin{lemma}[Riccati Perturbation, Proposition 4 and 6 of \citet{simchowitz2020naive}]
    \label{lem: Riccati perturbation}
    Let $\theta_1$ denote the parameter for a stabilizable system, and $\theta_2$ denote the parameter for an alternate system. Suppose that $\norm{\theta_1-\theta_2} \leq \frac{1}{16}\norm{P(\theta_1)}^{-2}$. Then the system described by $\theta_2$ is stabilizable, and the following inequalities hold.
    \begin{itemize}
        \item $\norm{P(\theta_2)}\leq \sqrt{2} \norm{P(\theta_1)}$
        \item $\max\curly{\norm{K(\theta_2) - K(\theta_1)}, \norm{B(\theta_1) (K(\theta_2) - K(\theta_1))}
        }\leq  32\norm{P(\theta_1)}^{7/2} \norm{\theta_1-\theta_2}.$
        \item $\norm{P(\theta_2) - P(\theta_1)} \leq 8\sqrt{2}\norm{P(\theta_1)}^3\norm{\theta_1-\theta_2}$
    \end{itemize}
\end{lemma}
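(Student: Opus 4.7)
The plan is to follow the certainty-equivalence argument of Simchowitz and Tu, organizing the proof into three stages that progressively leverage the Lyapunov Perturbation lemma already established. The overarching idea is that $K(\theta_1)$ serves as a near-optimal controller for $\theta_2$ when the perturbation is small, which simultaneously upper bounds $P(\theta_2)$ and establishes stabilizability; direct algebra on the closed-form LQR solution then gives the Lipschitz bounds on $K$ and $P$.

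First, I would establish stabilizability of $\theta_2$ jointly with $\norm{P(\theta_2)} \leq \sqrt{2}\norm{P(\theta_1)}$. Using the closed form $K(\theta_1) = -\Psi(\theta_1)^{-1} B(\theta_1)^\top P(\theta_1) A(\theta_1)$ together with $\Psi(\theta_1) \succeq R = I$, I would bound $\norm{K(\theta_1)}$ by a polynomial in $\norm{P(\theta_1)}$. The closed-loop matrix $A(\theta_2) + B(\theta_2) K(\theta_1)$ then differs from $A(\theta_1) + B(\theta_1) K(\theta_1)$ by at most $(1 + \norm{K(\theta_1)})\norm{\theta_1-\theta_2}$. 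Applying the Lyapunov Perturbation lemma with $P(\theta_1)$ playing the role of $P_1$ yields a cost-to-go matrix $\tilde P$ for $K(\theta_1)$ under $\theta_2$ satisfying $\norm{\tilde P} \leq (1 + c \norm{P(\theta_1)}^2 \norm{\theta_1-\theta_2})\norm{P(\theta_1)}$. The hypothesis $\norm{\theta_1-\theta_2} \leq \frac{1}{16}\norm{P(\theta_1)}^{-2}$ forces this to be at most $\sqrt{2}\norm{P(\theta_1)}$, which both establishes stabilizability of $\theta_2$ (since $K(\theta_1)$ is a stabilizer) and, via $P(\theta_2) \preceq \tilde P$, the norm bound.

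Next, for the gain perturbation $\norm{K(\theta_2) - K(\theta_1)}$ I would use the explicit LQR formula and decompose the difference via a telescoping identity over the perturbations in $A$, $B$, and $P$. The Neumann-series expansion of $\Psi(\theta)^{-1}$ (again exploiting $\Psi \succeq I$) keeps the inverse factor bounded by $1$, while individual factor norms are controlled via the first stage. The sharp $\norm{P(\theta_1)}^{7/2}$ scaling arises from combining $\norm{P(\theta_1)}$ from $P$ itself, $\norm{P(\theta_1)}^{1/2}$ from $\norm{K(\theta_1)}$-type terms, and perturbation products. For the $B$-weighted version, the $B(\theta_1)$ factor is absorbed into a $B(\theta_1)^\top P(\theta_1)$-type product whose norm is again $O(\norm{P(\theta_1)})$, yielding a bound that is independent of $\norm{B(\theta_1)}$.

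Finally, for $\norm{P(\theta_2) - P(\theta_1)}$, I would write both matrices as closed-loop Lyapunov solutions $P(\theta_i) = \dlyap((A(\theta_i) + B(\theta_i) K(\theta_i))^\top, Q + K(\theta_i)^\top R K(\theta_i))$ and invoke the Lyapunov Perturbation lemma using the gain bound from the previous stage. The closed-loop perturbation is $O(\norm{P(\theta_1)}^{7/2}\norm{\theta_1-\theta_2})$, which combined with the two $\norm{P}$ factors appearing in the Lyapunov Perturbation bound and $\lambda_{\min}(Q)^{-1} \leq 1$ (since $Q \succeq I$) yields the $8\sqrt{2}\norm{P(\theta_1)}^3\norm{\theta_1-\theta_2}$ scaling. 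The main obstacle is tracking polynomial exponents of $\norm{P(\theta_1)}$: each Lyapunov-perturbation step adds a $\norm{P}^2$ factor, the LQR gain contributes another $\norm{P}$, and the inversion steps can be either inflated or tamed by how $R \succeq I$ is used. Reproducing the exact constants $\sqrt{2}$, $32$, and $8\sqrt{2}$ requires the specific bookkeeping of Propositions 4 and 6 of \citet{simchowitz2020naive}, which I would follow closely rather than re-derive, possibly with a bootstrapping step to handle the mutual dependence between the $K$-gap and the $P$-gap.
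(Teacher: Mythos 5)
The paper does not prove this lemma at all: it is imported verbatim as Propositions 4 and 6 of \citet{simchowitz2020naive}, whose own proof proceeds by the self-bounding ODE method --- parametrize $P(t) = P((1-t)\theta_1 + t\theta_2)$ along the segment, derive a differential inequality of the form $\|P'(t)\| \lesssim \|P(t)\|^3\|\theta_1-\theta_2\|$, and integrate to obtain simultaneously existence (stabilizability), the $\sqrt{2}$ inflation bound, and the Lipschitz bound on $P$; the gain bounds then follow from the closed-form expression for $K$. Your route is genuinely different and more elementary: you establish stabilizability and $\|P(\theta_2)\|\le\sqrt{2}\|P(\theta_1)\|$ by exhibiting $K(\theta_1)$ as a suboptimal stabilizing controller for $\theta_2$ and invoking \Cref{lem: lyap perturbation} plus the comparison $P(\theta_2)\preceq\tilde P$. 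This is the Mania--Dean style argument; it buys simplicity and reuses machinery already in the paper, at the cost of looser constants and more delicate bookkeeping for the $\|B(\theta_1)(K(\theta_2)-K(\theta_1))\|$ bound, whose independence of $\|B(\theta_1)\|$ is exactly the point of the cited Proposition 6 and is not fully justified by ``absorbing $B$ into $B^\top P$.''

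The one substantive gap in your plan is the circularity you yourself flag: your stage two bounds $\|K(\theta_2)-K(\theta_1)\|$ using $\|P(\theta_2)-P(\theta_1)\|$, while your stage three bounds $\|P(\theta_2)-P(\theta_1)\|$ via closed-loop Lyapunov solutions whose perturbation is controlled by the gain gap from stage two. ``A bootstrapping step'' is not enough to close this; you need to break the cycle explicitly. The standard fix within your framework is to bound $P(\theta_2)-P(\theta_1)$ two-sidedly \emph{without} the gain gap, using the cross-controller comparisons $P(\theta_2)\preceq\dlyap((A_2+B_2K(\theta_1))^\top,\,Q+K(\theta_1)^\top R K(\theta_1))$ and $P(\theta_1)\preceq\dlyap((A_1+B_1K(\theta_2))^\top,\,Q+K(\theta_2)^\top R K(\theta_2))$, each compared to the corresponding exact $P(\theta_i)$ via \Cref{lem: lyap perturbation} using only norm bounds on $K(\theta_i)$ (from \Cref{lem: simplifying inequalities}), not on their difference. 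Only then can the explicit LQR formula be differenced to get the gain bound. Alternatively, the ODE method of the source avoids the circularity entirely, which is presumably why the paper defers to it; either way, the exact constants $\sqrt{2}$, $32$, and $8\sqrt{2}$ come from that source's bookkeeping and your sketch correctly declines to re-derive them.
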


\begin{lemma}[Simplifying inequalities]
    \label{lem: simplifying inequalities}
    Let $\theta$ be a parameter describing a stabilizable system instance. 
    Define $\tau_{B(\theta)} = \max\curly{1, \norm{B(\theta)}}$.
    The following inequalities hold 
    \begin{itemize}
        \item $\norm{\Sigma^{K(\theta)}(\theta)} \leq \norm{P(\theta)}$
        \item $\norm{A(\theta) + B(\theta) K(\theta)} \leq \norm{P(\theta)}^{1/2}$. 
        \item $\norm{K(\theta)}\leq\norm{P(\theta)}^{1/2}$
        \item $\norm{\Psi(\theta)} \leq 2 \tau_{B(\theta)}^2 \norm{P(\theta)}$.
        \item $\norm{\Psi(\theta)\kron\Sigma_X^{K(\theta)}(\theta)} \leq 2\tau^2_{B(\theta)}\norm{P(\theta)}^2$. 
    \end{itemize}
\end{lemma}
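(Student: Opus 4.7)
The plan is to handle each of the five bullets directly from the defining Riccati/Lyapunov equations together with the normalization $Q \succeq I$, $R = I$, which forces $P(\theta) \succeq I$. All of the inequalities are elementary consequences of ``bounding each summand in the Riccati equation by $P(\theta)$,'' so the proof is a short sequence of semidefinite comparisons rather than a new technical argument. The hard part is really just organizing the estimates; there is no conceptual obstacle.

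\textbf{Bullets 1--3.} Let $A_{\mathsf{cl}}(\theta) = A(\theta) + B(\theta) K(\theta)$ and use the closed-loop Riccati identity
\[
P(\theta) = A_{\mathsf{cl}}(\theta)^\top P(\theta) A_{\mathsf{cl}}(\theta) + Q + K(\theta)^\top R K(\theta).
\]
Since $Q \succeq I$ and $R = I$, each of the three terms on the right-hand side is upper bounded in the Loewner order by $P(\theta)$. For bullet 1, drop $K^\top R K$, replace $Q$ by $I$, and iterate to get $P(\theta) \succeq \sum_{t\ge 0} (A_{\mathsf{cl}}^\top)^t I A_{\mathsf{cl}}^t = \Sigma^{K(\theta)}(\theta)$. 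For bullet 2, drop $Q + K^\top R K$ and use $P(\theta) \succeq I$ to write $\norm{P} \ge \norm{A_{\mathsf{cl}}^\top P A_{\mathsf{cl}}} \ge \lambda_{\min}(P) \norm{A_{\mathsf{cl}}}^2 \ge \norm{A_{\mathsf{cl}}}^2$. For bullet 3, drop everything except $K^\top R K = K^\top K$ to conclude $\norm{P} \ge \norm{K}^2$.

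\textbf{Bullet 4.} By definition $\Psi(\theta) = B(\theta)^\top P(\theta) B(\theta) + R = B(\theta)^\top P(\theta) B(\theta) + I$. Triangle inequality and submultiplicativity give $\norm{\Psi(\theta)} \le \norm{B(\theta)}^2 \norm{P(\theta)} + 1$; using $P(\theta) \succeq I$ to absorb the $+1$ into $\norm{P(\theta)}$ and casing on whether $\norm{B(\theta)} \le 1$ (in which case $\tau_{B(\theta)} = 1$) or $\norm{B(\theta)} > 1$ (in which case $\tau_{B(\theta)} = \norm{B(\theta)}$) yields $\norm{\Psi(\theta)} \le (1 + \norm{B(\theta)}^2)\norm{P(\theta)} \le 2\tau_{B(\theta)}^2 \norm{P(\theta)}$.

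\textbf{Bullet 5.} The Kronecker product satisfies $\norm{X \otimes Y} = \norm{X}\,\norm{Y}$, so combining bullets 1 and 4 gives $\norm{\Psi(\theta) \otimes \Sigma^{K(\theta)}(\theta)} \le 2\tau_{B(\theta)}^2 \norm{P(\theta)} \cdot \norm{P(\theta)} = 2\tau_{B(\theta)}^2 \norm{P(\theta)}^2$, completing the list. No step requires more than a one-line semidefinite comparison, so I would write this up as a single short proof listing the five deductions in order.
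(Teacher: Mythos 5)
Your proof is correct and takes essentially the same route as the paper's: every bullet is a one-line Loewner comparison obtained by dropping nonnegative terms from the closed-loop identity $P(\theta) = (A(\theta)+B(\theta)K(\theta))^\top P(\theta)(A(\theta)+B(\theta)K(\theta)) + Q + K(\theta)^\top R K(\theta)$ together with $Q \succeq I$, $R = I$, plus submultiplicativity of the Kronecker product. The one point of shared looseness is the first bullet, where both you and the paper identify $\sum_{t}\bigl((A+BK)^\top\bigr)^t (A+BK)^t$ (which is what the iteration actually bounds by $P(\theta)$) with the state covariance $\Sigma^{K(\theta)}(\theta) = \sum_t (A+BK)^t \bigl((A+BK)^\top\bigr)^t$; these two Gramians can differ in spectral norm, so a fully rigorous write-up would need to address that transpose.
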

\begin{proof}
    The first inequality follows by observing that 
    \begin{align*}
        \norm{\Sigma^{K(\theta)}(\theta)} &= \norm{\dlyap((A+BK(\theta))^\top, I)} 
        \leq \norm{\dlyap((A+BK(\theta)), Q + K(\theta)^\top R K(\theta))},
    \end{align*}
    by the fact that $Q \succeq I$. The second inequality follows by noting that 
    \begin{align*}
        \norm{(A(\theta) + B(\theta) K(\theta))^\top (A(\theta) + B(\theta) K(\theta))}^{1/2} \leq \norm{\dlyap(A(\theta)+B(\theta)K(\theta), I)}^{1/2}.    
    \end{align*}
    The third inequality follows from
    \begin{align*}
        \norm{K(\theta)} \leq \norm{Q+K(\theta)^TRK(\theta)}^{1/2} \leq \norm{\dlyap(A(\theta)+B(\theta)K(\theta), Q+K(\theta)^TRK(\theta))}^{1/2}.
    \end{align*}
    The fourth and fifth inequality follow from
    \begin{align*}
        \norm{\Psi(\theta)\kron\Sigma_X^{K(\theta)}(\theta)} &\leq \norm{\Psi(\theta)}\norm{\Sigma_X^{K(\theta)}(\theta)} \leq (\norm{B(\theta)}^2+1)\|P(\theta)\|^2\leq 2\tau^2_{B(\theta)}\norm{P(\theta)}^2. 
    \end{align*}
    where we used $\norm{X\kron Y}\leq\norm{X}\norm{Y}$.
\end{proof}

\begin{lemma}[Certainty Equivalent Stabilization]
    \label{lem: CE stabilization}
    Let $\theta_1$ denote the parameter for a stabilizable system, and $\theta_2$ denote the parameter for an alternate system. Suppose that $\norm{\theta_1-\theta_2} \leq \frac{1}{256} \norm{P(\theta_1)}^{-5}$. Then the system described by $\theta_2$ is stabilizable
    \begin{align*}
        \norm{\Sigma^{K(\theta_2)}(\theta_1)} \leq 2 \norm{P(\theta_1)}. 
    \end{align*}
\end{lemma}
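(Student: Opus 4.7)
The plan is to combine the Riccati perturbation bound (\Cref{lem: Riccati perturbation}) with a direct Lyapunov inequality on the closed-loop matrix $A(\theta_1)+B(\theta_1)K(\theta_2)$. Because $Q\succeq I$ forces $P(\theta_1)\succeq I$, the hypothesis $\|\theta_1-\theta_2\|\le \tfrac{1}{256}\|P(\theta_1)\|^{-5}$ is stronger than the hypothesis $\|\theta_1-\theta_2\|\le \tfrac{1}{16}\|P(\theta_1)\|^{-2}$ of \Cref{lem: Riccati perturbation}. Applying that lemma yields the stabilizability of $\theta_2$ together with the estimate
\[
\|B(\theta_1)(K(\theta_2)-K(\theta_1))\|\;\le\; 32\|P(\theta_1)\|^{7/2}\|\theta_1-\theta_2\|\;\le\;\tfrac{1}{8}\|P(\theta_1)\|^{-3/2}.
\]

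I then write $A(\theta_1)+B(\theta_1)K(\theta_2)=A_*+\Delta$ with $A_*:=A(\theta_1)+B(\theta_1)K(\theta_1)$ and $\Delta:=B(\theta_1)(K(\theta_2)-K(\theta_1))$, and set $\Sigma_1:=\Sigma^{K(\theta_1)}(\theta_1)$, which solves $A_*\Sigma_1 A_*^\top+I=\Sigma_1$. \Cref{lem: simplifying inequalities} gives $\|A_*\|\le \|P(\theta_1)\|^{1/2}$ and $\|\Sigma_1\|\le \|P(\theta_1)\|$. Expanding
\[
(A_*+\Delta)\Sigma_1(A_*+\Delta)^\top \;=\; \Sigma_1 - I + A_*\Sigma_1\Delta^\top+\Delta\Sigma_1 A_*^\top+\Delta\Sigma_1\Delta^\top,
\]
the three correction terms have operator norm at most $2\|A_*\|\|\Sigma_1\|\|\Delta\|+\|\Sigma_1\|\|\Delta\|^2\le \tfrac{1}{4}+\tfrac{1}{64}\le\tfrac{1}{2}$, so $(A_*+\Delta)\Sigma_1(A_*+\Delta)^\top\preceq \Sigma_1-\tfrac{1}{2}I$. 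Doubling yields
\[
(A_*+\Delta)(2\Sigma_1)(A_*+\Delta)^\top + I \;\preceq\; 2\Sigma_1.
\]

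This last inequality exhibits $2\Sigma_1$ as a strict Lyapunov certificate for $A_*+\Delta$, which forces it to be Schur stable; iterating shows that $\Sigma^{K(\theta_2)}(\theta_1)=\sum_{t\ge 0}(A_*+\Delta)^t(A_*+\Delta)^{\top t}\preceq 2\Sigma_1$, and therefore $\|\Sigma^{K(\theta_2)}(\theta_1)\|\le 2\|\Sigma_1\|\le 2\|P(\theta_1)\|$, as claimed. I do not expect a conceptual obstacle: the whole argument is just careful bookkeeping of the powers of $\|P(\theta_1)\|$, and the constant $\tfrac{1}{256}$ in the hypothesis is precisely what makes the correction terms sum to at most $\tfrac{1}{2}$, producing the clean factor of $2$ in the statement.
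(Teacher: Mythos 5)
Your proposal is correct, and the numerology checks out: $32\|P(\theta_1)\|^{7/2}\cdot\tfrac{1}{256}\|P(\theta_1)\|^{-5}=\tfrac18\|P(\theta_1)\|^{-3/2}$, so the correction terms are bounded by $\tfrac14+\tfrac{1}{64}\|P(\theta_1)\|^{-2}\le\tfrac12$ using $P(\theta_1)\succeq I$, exactly as you say. The ingredients are the same as the paper's (\Cref{lem: Riccati perturbation} for $\|B(\theta_1)(K(\theta_2)-K(\theta_1))\|$ and \Cref{lem: simplifying inequalities} for $\|A_*\|$ and $\|\Sigma_1\|$), but the endgame differs. The paper invokes \Cref{lem: lyap perturbation} to get
\begin{equation*}
\norm{\Sigma^{K(\theta_2)}(\theta_1)}\le\norm{P(\theta_1)}+\tfrac12\norm{\Sigma^{K(\theta_2)}(\theta_1)}
\end{equation*}
and rearranges; note that this self-bounding step, and the hypothesis $\rho(A_2)<1$ in \Cref{lem: lyap perturbation}, presuppose that $K(\theta_2)$ stabilizes $\theta_1$ so that $\Sigma^{K(\theta_2)}(\theta_1)$ is finite --- a fact the paper's proof does not separately establish. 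Your route replaces this with an explicit Lyapunov certificate: from $(A_*+\Delta)\Sigma_1(A_*+\Delta)^\top\preceq\Sigma_1-\tfrac12 I$ you obtain $(A_*+\Delta)(2\Sigma_1)(A_*+\Delta)^\top+I\preceq 2\Sigma_1$, which simultaneously proves Schur stability of the perturbed closed loop and the bound $\Sigma^{K(\theta_2)}(\theta_1)\preceq 2\Sigma_1$. This is a modest but genuine improvement in rigor --- stability comes out of the argument rather than being assumed --- at the cost of not reusing the packaged perturbation lemma. One cosmetic point: you get the stronger positive-semidefinite conclusion $\Sigma^{K(\theta_2)}(\theta_1)\preceq 2\Sigma^{K(\theta_1)}(\theta_1)$, from which the stated operator-norm bound follows via $\|\Sigma^{K(\theta_1)}(\theta_1)\|\le\|P(\theta_1)\|$.
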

\begin{proof}
    To verify this fact, first apply
 \Cref{lem: lyap perturbation} and \Cref{lem: simplifying inequalities} to find
 \begin{align*}
    & \norm{\Sigma^{K(\theta_2)}(\theta_1)} \leq  \norm{\Sigma^{K(\theta_1)}(\theta_1)} + \norm{\Sigma^{K(\theta_1)}(\theta_1)}\norm{\Sigma^{K(\theta_2)}(\theta_1)} \\
    & \paren{2\norm{A(\theta_1) + B(\theta_1) K(\theta_1)} \norm{B(\theta_1) (K(\theta_2)-K(\theta_1)} + \norm{B(\theta_1) (K(\theta_2)-K(\theta_1)}^2  } \\
    &\leq \norm{P(\theta_1)} + \norm{P(\theta_1)} \norm{\Sigma^{K(\theta_2)}(\theta_1)} \paren{2 \norm{P(\theta_1)}^{1/2} \norm{B(\theta_1)(K(\theta_2)  - K(\theta_1))} +\norm{B(\theta_1)(K(\theta_2)  - K(\theta_1))}^2}.
 \end{align*}
 By \Cref{lem: Riccati perturbation} it holds that $\norm{B(\theta_1) (K(\theta_2) - K(\theta_1))} \leq  32 \norm{P(\theta_1)}^{7/2} \norm{\theta_2 - \theta_1}$. Leveraging that $\norm{\theta_2 - \theta_1}\leq\frac{1}{256} \norm{P(\theta_1)}^{-5}$, we conclude 
 \begin{align*}
     \norm{\Sigma^{K(\theta_2)}(\theta_1)} \leq \norm{P(\theta_1)} + \frac{1}{2}\norm{\Sigma^{K(\theta_2)}(\theta_1)}.
 \end{align*}
 Rearranging provides the desired inequality. 
 \end{proof}

\begin{lemma}[Bound on the first and second derivative of K]
    \label{lem: Bound on K' and K''}
    Let $\theta_1, \theta_2$ be the parameters describing two stabilizable systems satisfying $\norm{\theta_1-\theta_2}\leq \frac{1}{16} \norm{P(\theta_1)}^{-2}$. Then for any $t\in[0,1]$, the first and second derivative is bounded as 
    \begin{align*}
        \|\mathsf{D}\VEC K(\tilde\theta)\|_{\mathsf{op}}\leq 24\|P\paren{\theta_1}\|^{7/2}, ~ 
        \|\mathsf{D}^2\VEC K(\tilde\theta)\|_{\mathsf{op}}\leq 4000\|P\paren{\theta_1}\|^{15/2},
    \end{align*}
    where $\tilde \theta = t\theta_1 + (1-t)\theta_2$.
\end{lemma}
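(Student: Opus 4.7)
The plan is to differentiate the DARE implicitly to obtain $\mathsf{D} P(\theta)$ and $\mathsf{D}^2 P(\theta)$, then combine these with the explicit formula $K(\theta) = -\Psi(\theta)^{-1} B(\theta)^\top P(\theta) A(\theta)$ via repeated application of the product and inverse rules to derive $\mathsf{D}\VEC K$ and $\mathsf{D}^2\VEC K$. All computations are localized at $\tilde\theta$, which lies on the segment between $\theta_1$ and $\theta_2$ and hence satisfies $\|\tilde\theta - \theta_1\| \leq \frac{1}{16}\|P(\theta_1)\|^{-2}$. Applying \Cref{lem: Riccati perturbation} at $\tilde\theta$ gives $\|P(\tilde\theta)\| \leq \sqrt{2}\|P(\theta_1)\|$, while \Cref{lem: simplifying inequalities} supplies $\|K(\tilde\theta)\| \leq \|P(\tilde\theta)\|^{1/2}$, $\|A(\tilde\theta)+B(\tilde\theta)K(\tilde\theta)\| \leq \|P(\tilde\theta)\|^{1/2}$, $\|\Psi(\tilde\theta)^{-1}\| \leq \|R^{-1}\| = 1$, and $\|\Sigma^{K(\tilde\theta)}(\tilde\theta)\| \leq \|P(\tilde\theta)\|$. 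These uniform bounds control every quantity that will appear in the derivative expressions.

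For the first derivative, I would invoke an envelope-theorem argument at the Riccati optimum: viewing the DARE as $P = F(P,\theta)$ with $F$ the Riccati operator, the variation of $F$ through $K$ vanishes at the optimum, so $\mathsf{D} P(\tilde\theta)[h]$ satisfies the Lyapunov equation
\begin{align*}
\mathsf{D} P(\tilde\theta)[h] = \dlyap\!\left(A_{cl}^\top,\; (\mathsf{D} A[h]+\mathsf{D} B[h]\,K)^\top P\, A_{cl} + A_{cl}^\top P\, (\mathsf{D} A[h]+\mathsf{D} B[h]\,K)\right),
\end{align*}
with $A_{cl} = A(\tilde\theta)+B(\tilde\theta)K(\tilde\theta)$, where $\mathsf{D} A$ and $\mathsf{D} B$ are unit-norm tensors since $(A,B)$ are linear in $\theta$. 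Combining $\|\dlyap(A_{cl}^\top, I)\| \leq \|P(\tilde\theta)\|$ with the factor-of-$\|P\|^{1/2}$ bounds on $A_{cl}$ and $K$ yields a bound of order $\|P(\theta_1)\|^{3}$ on $\|\mathsf{D} P(\tilde\theta)\|_{\mathsf{op}}$, consistent with the explicit $8\sqrt{2}\|P(\theta_1)\|^3$ constant in \Cref{lem: Riccati perturbation}. Differentiating $K = -\Psi^{-1} B^\top P A$ then produces a handful of terms of the form (inverse)$\times$(derivative factor); bounding each using $\|\Psi^{-1}\|\leq 1$ and the optimality identity $B^\top P A = -\Psi K$ (to avoid isolating $\|B\|$ or $\|A\|$) gives the claimed $24\|P(\theta_1)\|^{7/2}$ bound.

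The second derivative follows the same scheme with one extra differentiation. Applying $\mathsf{D}$ to the Lyapunov equation for $\mathsf{D} P$ produces a second inhomogeneous Lyapunov equation for $\mathsf{D}^2 P$, whose right-hand side is quadratic in $\mathsf{D} P$, $\mathsf{D} K$, and the constant tensors $\mathsf{D} A, \mathsf{D} B$; substituting the first-order estimates bounds $\|\mathsf{D}^2 P(\tilde\theta)\|_{\mathsf{op}}$ by a constant multiple of $\|P(\theta_1)\|^{13/2}$. Differentiating $K = -\Psi^{-1} B^\top P A$ once more via the product and inverse rules produces roughly a dozen terms, the dominant one being $\Psi^{-1} B^\top (\mathsf{D}^2 P)\,A$ at order $\|P(\theta_1)\|^{15/2}$; collecting all terms yields the stated $4000\|P(\theta_1)\|^{15/2}$ bound.

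The main obstacle is bookkeeping. The second differentiation of both the Riccati equation and the explicit formula for $K$ produces many cross terms, and naively splitting factors as $\|B\|\|K\|$ rather than using $\|A_{cl}\|\leq \|P\|^{1/2}$, or failing to exploit $B^\top P A = -\Psi K$, inflates the exponent on $\|P\|$ and introduces spurious $\tau_{B(\theta^\star)}$ factors. Using the sharper identities from \Cref{lem: simplifying inequalities} at every step is what keeps the accumulated power exactly $15/2$ rather than something looser. No conceptually new ideas beyond \Cref{lem: Riccati perturbation}, the envelope theorem, and careful chain-rule manipulations are required.
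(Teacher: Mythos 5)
Your plan is essentially the paper's own proof: the paper simply cites the derivative computations in the proof of Lemma 3.2 / C.5 of Simchowitz and Foster (2020) --- which are exactly the implicit differentiation of the DARE, the envelope-theorem step, and the self-bounding operator-norm estimates you reconstruct --- and then converts $\|P(t)\|\le\sqrt{2}\|P(\theta_1)\|$ via \Cref{lem: Riccati perturbation} and takes a supremum over $t\in[0,1]$. The only caveat is that the entire technical content of the lemma (keeping $\|A\|$ and $\|B\|$ out of the bounds via identities such as $\|\Psi^{-1/2}B^\top P^{1/2}\|\le 1$ and $\|A_{cl}\|\le\|P\|^{1/2}$, and arriving at the specific constants $24$ and $4000$) lives in the bookkeeping you explicitly defer, so what you have is a faithful outline of the cited computation rather than a self-contained verification of it.
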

\begin{proof}
    It suffices to consider the quantity $K(t)$ defined as
    \begin{align*}
        &K(t) \triangleq K(t\theta_1 + (1-t)\theta_2). 
    \end{align*}
    Additionally define
    \begin{align*}
        P(t) \triangleq P(t\theta_1 + (1-t)\theta_2).
    \end{align*}
    By the proof of Lemma 3.2, C.5 of \citet{simchowitz2020naive} and keeping track of the degree of the term $\norm{P(\theta_1)}$, we get
    \begin{align*}
        \|P'(t)\|&\leq 4\|P(t)\|^3 , 
        ~ \|P''(t)\|\leq178\|P(t)\|^7 \\
        \|K'(t)\|&\leq7\|P(t)\|^{7/2} , ~\|K''(t)\|\leq290\|P(t)\|^{15/2}.
    \end{align*}
    From \Cref{lem: Riccati perturbation}, it holds that $\norm{P(t)} \leq \sqrt{2}\norm{P(\theta_1)}$. Therefore, by noting that the derivatives of $K(t)$ are directional derivatives of $K(\theta)$, it follows that
    \begin{align*}
        &\|\mathsf{D}\VEC K(\tilde\theta)\|_{op}\leq \sup_{t\in[
        0,1]}7\|P(t)\|^{7/2} < 24\|P\paren{\theta_1}\|^{7/2} \\
        &\|\mathsf{D}^2\VEC K(\tilde\theta)\|_{op}\leq \sup_{t\in[
        0,1]}290\|P(t)\|^{15/2} <
        4000\|P\paren{\theta_1}\|^{15/2}.
    \end{align*}
\end{proof}

Leveraging the above result along with a Taylor expansion leads to the following lemma. 
\begin{lemma} [LQR Taylor Expansion]
    \label{lem: LQR Taylor expansion}
    Let $\theta_1, \theta_2$ be the parameters describing two stabilizable systems satisfying $\norm{\theta_1-\theta_2}\leq \frac{1}{16} \norm{P(\theta_1)}^{-2}$. It holds that
    \begin{align*}
        \VEC (K(\theta_2) - K(\theta_1)) &= \mathsf{D}_{\theta}\VEC K(\theta_1)[\theta_2 - \theta_1] + R,
    \end{align*}
    where $R$ is the remainder term that satisfies
    \begin{align*}
        \norm{R} &\leq \frac{1}{2}\sup_{\tilde\theta\in[\theta_1, \theta_2]}\|\mathsf{D}^2\VEC K(\tilde\theta)\|_{op}\|\theta_2-\theta_1\|^2 \leq 2000\|P_{\theta_1}\|^{15/2}\|\theta_2-\theta_1\|^2.
    \end{align*}
\end{lemma}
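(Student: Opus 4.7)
The plan is to invoke the second-order Taylor expansion of the vector-valued map $\theta \mapsto \mathsf{vec}\, K(\theta)$ about the point $\theta_1$, evaluated at $\theta_2$, and then bound the remainder term using the Hessian-operator-norm bound already established in \Cref{lem: Bound on K' and K''}. Since the two claimed inequalities on $\|R\|$ have very different characters (one is a supremum of an operator norm, the other is an explicit polynomial in $\|P(\theta_1)\|$), I would prove them in sequence: first the Taylor remainder bound, then specialize using \Cref{lem: Bound on K' and K''}.

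First, I would verify that $\mathsf{vec}\, K$ is $C^2$ along the entire segment $[\theta_1, \theta_2]$. The hypothesis $\|\theta_1 - \theta_2\| \leq \frac{1}{16}\|P(\theta_1)\|^{-2}$ places $\theta_2$ inside the ball over which \Cref{lem: Riccati perturbation} guarantees stabilizability of every convex combination (via a simple monotonicity/continuity argument on $t \mapsto t\theta_1 + (1-t)\theta_2$, since $\|P(\tilde\theta)\| \leq \sqrt{2}\|P(\theta_1)\|$ and hence the perturbation condition persists). Smoothness of the DARE solution on the stabilizable set then gives that $K$ is analytic (in particular $C^2$) on this segment.

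Next, I apply the standard Taylor theorem with Lagrange-form (or integral-form) remainder to each scalar component of $\mathsf{vec}\, K$, parameterizing along $g(t) = \mathsf{vec}\, K\bigl(\theta_1 + t(\theta_2 - \theta_1)\bigr)$ for $t \in [0,1]$. Writing $g(1) = g(0) + g'(0) + \int_0^1 (1-t)\, g''(t)\, dt$, the linear term identifies with $\mathsf{D}_\theta \mathsf{vec}\, K(\theta_1)[\theta_2 - \theta_1]$, and the remainder satisfies
\begin{align*}
\|R\| \;\leq\; \int_0^1 (1-t)\, \|g''(t)\|\, dt \;\leq\; \tfrac{1}{2} \sup_{\tilde\theta \in [\theta_1,\theta_2]} \|\mathsf{D}^2 \mathsf{vec}\, K(\tilde\theta)\|_{\mathsf{op}} \, \|\theta_2 - \theta_1\|^2,
\end{align*}
since $\|g''(t)\| \leq \|\mathsf{D}^2 \mathsf{vec}\, K(\tilde\theta)\|_{\mathsf{op}}\, \|\theta_2 - \theta_1\|^2$ by the definition of the operator norm of the Hessian tensor acting bilinearly on the direction $\theta_2 - \theta_1$. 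This yields the first claimed inequality.

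Finally, to obtain the explicit polynomial bound, I substitute the estimate from \Cref{lem: Bound on K' and K''}, namely $\|\mathsf{D}^2 \mathsf{vec}\, K(\tilde\theta)\|_{\mathsf{op}} \leq 4000\,\|P(\theta_1)\|^{15/2}$, valid uniformly for $\tilde\theta$ on the segment thanks to the stabilizability check above. Multiplying by $1/2$ gives the constant $2000$ and completes the proof. The entire argument is essentially bookkeeping; there is no genuine obstacle, since both the Taylor remainder formula and the Hessian bound are already in hand. The only mild care required is to ensure that the Hessian bound in \Cref{lem: Bound on K' and K''}, which was stated under the same hypothesis $\|\theta_1-\theta_2\|\leq \tfrac{1}{16}\|P(\theta_1)\|^{-2}$, holds uniformly along the segment rather than just at the endpoints, which follows directly from its proof.
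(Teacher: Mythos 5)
Your proposal is correct and matches the paper's (implicit) argument exactly: the paper presents this lemma as an immediate consequence of \Cref{lem: Bound on K' and K''} via a standard Taylor expansion with integral remainder along the segment $[\theta_1,\theta_2]$, which is precisely what you carry out, including the factor $\tfrac12\cdot 4000 = 2000$. Your extra care in checking that the Hessian bound holds uniformly along the segment (since \Cref{lem: Bound on K' and K''} is stated for $\tilde\theta = t\theta_1+(1-t)\theta_2$ with arbitrary $t\in[0,1]$) is exactly the right detail to verify.
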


\begin{lemma}[Suboptimality Gap Bound]
    \label{lem: excess cost decomposition}
    Let $K$ be any controller that stabilizes $\theta$. Then it holds that
    \begin{align*}
        &C(K,\theta) - C(K(\theta), \theta) \\
        &\leq \trace ((K-K(\theta))\Sigma_X^{K(\theta)}(\theta)(K-K(\theta))^T\Psi(\theta))\\
        &+ 2 \norm{P(\theta)}^2 \|\Sigma_X^{K}(\theta)\|\tau_{B(\theta)}^3\|(K-K(\theta))\|^3(\|B(\theta)(K-K(\theta))\| + 2 \norm{P(\theta)}^{1/2}).
    \end{align*}
\end{lemma}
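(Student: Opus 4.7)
The plan is to invoke the Performance Difference Lemma and then quantify the discrepancy between the closed-loop covariances under $K$ and $K(\theta)$. By the Performance Difference Lemma,
\begin{align*}
    C(K,\theta) - C(K(\theta), \theta) = \trace\paren{(K-K(\theta)) \Sigma^K(\theta) (K-K(\theta))^\top \Psi(\theta)}.
\end{align*}
The covariance on the right is $\Sigma^K(\theta)$, whereas the leading term in the target statement involves $\Sigma^{K(\theta)}(\theta)$. Writing $\Sigma^K(\theta) = \Sigma^{K(\theta)}(\theta) + \Delta$ with $\Delta \triangleq \Sigma^K(\theta) - \Sigma^{K(\theta)}(\theta)$ and setting $E \triangleq K-K(\theta)$, the excess cost splits as $\trace(E \Sigma^{K(\theta)}(\theta) E^\top \Psi(\theta)) + \trace(E \Delta E^\top \Psi(\theta))$. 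The first summand is precisely the leading term in the statement, so the remaining task reduces to upper bounding $|\trace(E \Delta E^\top \Psi(\theta))|$ by the advertised cubic expression.

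To control the remainder, I would first bound $\norm{\Delta}$ via \Cref{lem: lyap perturbation}. Both $\Sigma^K(\theta)$ and $\Sigma^{K(\theta)}(\theta)$ solve Lyapunov equations of the form $\dlyap((A(\theta)+B(\theta)K')^\top, I)$, and the two closed-loop matrices differ by $(B(\theta)(K(\theta)-K))^\top$. Applying \Cref{lem: lyap perturbation} with $Q=I$ and substituting $\norm{A(\theta)+B(\theta)K(\theta)} \leq \norm{P(\theta)}^{1/2}$ and $\norm{\Sigma^{K(\theta)}(\theta)} \leq \norm{P(\theta)}$ from \Cref{lem: simplifying inequalities} yields
\begin{align*}
    \norm{\Delta} \leq \norm{\Sigma^K(\theta)}\norm{P(\theta)}\paren{2\norm{P(\theta)}^{1/2}\norm{B(\theta)E} + \norm{B(\theta)E}^2}.
\end{align*}
Next, I would bound the remainder trace by $|\trace(E\Delta E^\top \Psi)| \leq \norm{E}^2\norm{\Delta}\norm{\Psi}$ by rewriting as $\trace(\Delta E^\top \Psi E)$ and applying the inequality $|\trace(\Delta Y)| \leq \norm{\Delta}\trace(Y)$ for symmetric $\Delta$ and PSD $Y$, combined with $\norm{\Psi(\theta)} \leq 2\tau_{B(\theta)}^2\norm{P(\theta)}$ from \Cref{lem: simplifying inequalities}. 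Finally, I would use $\norm{B(\theta)E} \leq \tau_{B(\theta)}\norm{E}$ to trade one factor of $\norm{B(\theta)E}$ for $\tau_{B(\theta)}\norm{E}$ in each summand; after regrouping, both pieces share the common factor $2\norm{P(\theta)}^2\norm{\Sigma^K(\theta)}\tau_{B(\theta)}^3\norm{E}^3$, and the remaining factors collapse to $\norm{B(\theta)E} + 2\norm{P(\theta)}^{1/2}$, yielding the stated bound exactly.

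The main obstacle is the trace-to-operator-norm step: the statement writes the remainder with the operator norm $\norm{K-K(\theta)}$, while the most natural bound $\trace(\Delta E^\top \Psi E) \leq \norm{\Delta}\trace(E^\top \Psi E)$ produces at the next step either a Frobenius quantity $\norm{E}_F^2$ (via $\trace(E^\top \Psi E) \leq \norm{\Psi}\trace(EE^\top)$) or a trace $\trace(\Psi)$ (via $\trace(E^\top \Psi E) \leq \norm{E}^2\trace(\Psi)$). One therefore needs to route the inequality carefully so that only operator norms survive in the final expression, or absorb any dimensional constants into the prefactors that the statement implicitly allows. Once this step is handled, every other piece of the argument is a direct composition of the Performance Difference Lemma, \Cref{lem: lyap perturbation}, and \Cref{lem: simplifying inequalities}, together with routine algebraic regrouping.
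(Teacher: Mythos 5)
Your proposal follows essentially the same route as the paper's proof: apply the performance difference lemma, split $\Sigma^K(\theta)$ as $\Sigma^{K(\theta)}(\theta)$ plus a perturbation, bound that perturbation via \Cref{lem: lyap perturbation}, and finish with \Cref{lem: simplifying inequalities}. The trace-to-operator-norm subtlety you flag at the end is genuine, but the paper's own proof makes exactly the same move---bounding the remainder trace directly by a product of operator norms with no rank or dimension factor---so your argument is at the same level of rigor as the original.
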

\begin{proof}
    By \Cref{lem: performance difference}, 
    \begin{align*}
        &C(K,\theta) - C(K(\theta), \theta) \\
        &\leq \trace ((K-K(\theta))\Sigma_X^{K(\theta)}(\theta)(K-K(\theta))^T\Psi(\theta) + (K-K(\theta))(\Sigma_X^K(\theta) - \Sigma_X^{K(\theta)}(\theta))(K-K(\theta))^T\Psi(\theta))
\end{align*}
where $\Sigma_X^K(\theta) = \dlyap((A(\theta)+B(\theta) K)^T, I)$. 
By applying \Cref{lem: lyap perturbation} to the second term, we get
\begin{align*}
    &\trace(K-K(\theta))(\Sigma_X^K(\theta) - \Sigma_X^{K(\theta)}(\theta))(K-K(\theta))^T\Psi(\theta))\\
    &\leq \|\Sigma_X^K(\theta)\|\|\Sigma_X^{K(\theta)}(\theta)\|\|K-K(\theta)\|^3\|\Psi(\theta)\|\|B(\theta)\|(\|B(\theta)(K-K(\theta))\| + 2\|A(\theta)+B(\theta) K(\theta)\|).
\end{align*}
To conclude, we apply the inequalities of \Cref{lem: simplifying inequalities} to $\norm{A(\theta) + B(\theta) K(\theta)}$, $\norm{\Sigma^{K(\theta)}(\theta)}$, and $\norm{\Psi(\theta)}$. 
\end{proof}

\begin{lemma}[Taylor Expansion Substitution for Suboptimality Gap]
    \label{lem: cost gap taylor substitution}
    Let $K(\theta_1), K(\theta_2)$ be any certainty equivalence controller that stabilize $\theta_1, \theta_2$, respectively, where $\theta_1, \theta_2$ satisfy $\norm{\theta_1-\theta_2}\leq \frac{1}{16} \norm{P(\theta_1)}^{-2}$. 
    Then it holds that
    \begin{align*}
        &\trace ((K(\theta_2)-K(\theta_1))\Sigma_X^{K(\theta_1)}(\theta_1)(K(\theta_2)-K(\theta_1))^T\Psi(\theta_1)) \\
        &\leq \|\theta_2-\theta_1\|_{H(\theta_1)}^2 + 2e5\tau_{B(\theta_1)}^2\|P(\theta_1)\|^{13}\|\theta_2-\theta_1\|^3 + 8e6\tau_{B(\theta_1)}^2\|P(\theta_1)\|^{17}\|\|\theta_2-\theta_1\|^4.
    \end{align*}
\end{lemma}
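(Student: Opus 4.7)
The plan is to vectorize the quadratic form, substitute the first-order Taylor expansion of $K(\theta_2)-K(\theta_1)$ supplied by \Cref{lem: LQR Taylor expansion}, and control the cross and remainder terms using the operator norm bounds from \Cref{lem: Bound on K' and K''} and \Cref{lem: simplifying inequalities}. The leading term will match $\|\theta_2-\theta_1\|_{H(\theta_1)}^2$ exactly by the definition of $H$, and the two lower-order polynomial terms will absorb the cross and quadratic remainder contributions.

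First, using the identity $\trace(X Y X^\top Z) = \VEC(X)^\top (Z \kron Y) \VEC(X)$, rewrite
\[
\trace\bigl((K(\theta_2)-K(\theta_1))\Sigma_X^{K(\theta_1)}(\theta_1)(K(\theta_2)-K(\theta_1))^\top \Psi(\theta_1)\bigr) = \VEC(\Delta K)^\top M \VEC(\Delta K),
\]
with $\Delta K \triangleq K(\theta_2)-K(\theta_1)$ and $M \triangleq \Psi(\theta_1)\kron \Sigma_X^{K(\theta_1)}(\theta_1)$. By \Cref{lem: LQR Taylor expansion}, under the stipulated closeness hypothesis, $\VEC(\Delta K) = v + R$ with $v \triangleq \mathsf{D}_\theta \VEC K(\theta_1)[\theta_2-\theta_1]$ and $\|R\| \leq 2000\,\|P(\theta_1)\|^{15/2}\|\theta_2-\theta_1\|^2$. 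Expanding,
\[
\VEC(\Delta K)^\top M \VEC(\Delta K) = v^\top M v + 2 v^\top M R + R^\top M R.
\]
By the definition of $H(\theta_1) = \mathsf{D}_\theta\VEC K(\theta_1)^\top M \mathsf{D}_\theta \VEC K(\theta_1)$, the first term equals $\|\theta_2-\theta_1\|_{H(\theta_1)}^2$.

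Second, I would bound the cross term and the quadratic remainder using Cauchy--Schwarz together with the bookkeeping inequalities: from \Cref{lem: simplifying inequalities} we have $\|M\| \leq 2\tau_{B(\theta_1)}^2\|P(\theta_1)\|^2$, and from \Cref{lem: Bound on K' and K''} we have $\|v\| \leq 24\|P(\theta_1)\|^{7/2}\|\theta_2-\theta_1\|$. Combining these,
\[
2\,|v^\top M R| \leq 2\cdot 24\cdot 2\cdot 2000\,\tau_{B(\theta_1)}^2\|P(\theta_1)\|^{7/2+2+15/2}\|\theta_2-\theta_1\|^3 \leq 2\times 10^5\,\tau_{B(\theta_1)}^2\|P(\theta_1)\|^{13}\|\theta_2-\theta_1\|^3,
\]
and
\[
R^\top M R \leq 2\cdot 2000^2\,\tau_{B(\theta_1)}^2\|P(\theta_1)\|^{2+15}\|\theta_2-\theta_1\|^4 \leq 8\times 10^6\,\tau_{B(\theta_1)}^2\|P(\theta_1)\|^{17}\|\theta_2-\theta_1\|^4.
\]
Summing the three contributions yields the claimed inequality.

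The main bookkeeping obstacle is simply tracking the exponents of $\|P(\theta_1)\|$ and $\tau_{B(\theta_1)}$ so that they agree with the stated constants; everything substantive is already contained in the Taylor remainder bound from \Cref{lem: LQR Taylor expansion}, the Jacobian bound from \Cref{lem: Bound on K' and K''}, and the Kronecker norm bound from \Cref{lem: simplifying inequalities}. No additional perturbation arguments are required.
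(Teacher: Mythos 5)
Your proposal is correct and follows essentially the same route as the paper: vectorize the quadratic form into $\VEC(\Delta K)^\top(\Psi(\theta_1)\kron\Sigma_X^{K(\theta_1)}(\theta_1))\VEC(\Delta K)$, substitute the Taylor expansion from \Cref{lem: LQR Taylor expansion}, identify the leading term with $\|\theta_2-\theta_1\|_{H(\theta_1)}^2$, and bound the cross and remainder terms via \Cref{lem: Bound on K' and K''} and \Cref{lem: simplifying inequalities}. Your constant bookkeeping ($2\cdot 24\cdot 2\cdot 2000 \le 2\mathrm{e}5$ and $2\cdot 2000^2 = 8\mathrm{e}6$) matches the stated bounds.
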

\begin{proof}
    By \Cref{lem: simplifying inequalities}, \Cref{lem: Bound on K' and K''} and \Cref{lem: LQR Taylor expansion}, it follows that
    \begin{align*}
        &\trace ((K(\theta_2)-K(\theta_1))\Sigma_X^{K(\theta_1)}(\theta_1)(K(\theta_2)-K(\theta_1))^T\Psi(\theta_1)) \\
        &= \VEC (K(\theta_2) - K(\theta_1))^T(\Psi(\theta_1)\kron\Sigma_X^{K(\theta_1)}(\theta_1))\VEC(K(\theta_2)-K(\theta_1)) \\
        &\leq [\theta_2 - \theta_1]^T\mathsf{D}_{\theta}\VEC K(\theta_1)^T(\Psi(\theta_1)\kron\Sigma_X^{K(\theta_1)}(\theta_1))\mathsf{D}_{\theta}\VEC K(\theta_1)[\theta_2 -\theta_1] \\
        &\hspace{5mm} + \sym([\theta_2 - \theta_1]^T\mathsf{D}_{\theta}\VEC K(\theta_1)^T(\Psi(\theta_1)\kron\Sigma_X^{K(\theta_1)}(\theta_1))R) + R^T(\Psi(\theta_1)\kron\Sigma_X^{K(\theta_1)}(\theta_1))R \\
        &\leq \|\theta_2-\theta_1\|_{H(\theta_1)}^2 + 2e5\tau_{B(\theta_1)}^2\|P(\theta_1)\|^{13}\|\theta_2-\theta_1\|^3 + 8e6\tau_{B(\theta_1)}^2\|P(\theta_1)\|^{17}\|\|\theta_2-\theta_1\|^4.
    \end{align*} 
    where it follows from $\trace(A^T, B) = \VEC (A)^T\VEC B$ and $\VEC (ABC) = (C^T\kron A)\VEC (B)$ in the first equality, and we let the operator $\sym$ denote $\sym(A) = A+A^T$. 
\end{proof}

\begin{lemma}[Perturbation on $B(\theta), H(\theta)$]
    \label{lem: helper lemma for RC}
    Let $\theta_1, \theta_2$ be the parameters describing two stabilizable systems satisfying $\norm{\theta_1 - \theta_2}\leq\frac{1}{16}\norm{P(\theta_1)}^{-2}$. Then it holds that
    \begin{itemize}
        \item $\norm{B(\theta_2)}\leq \norm{B(\theta_1)} + \norm{\theta_1 - \theta_2}$
        \item $\norm{\Psi(\theta_2) - \Psi(\theta_1)} \leq 15\tau^2_{B(\theta_1)}\norm{P(\theta_1)}^3\norm{\theta_1-\theta_2}$
        \item $\norm{H(\theta_2)-H(\theta_1)}\leq 5e6\tau_{B(\theta_1)}^2\norm{P(\theta_1)}^{17}\norm{\theta_1-\theta_2}$.
        \item $\norm{H(\theta_2)} \leq 8e3\tau^2_{B(\theta_1)}\norm{P(\theta_1)}^9$
    \end{itemize}
\end{lemma}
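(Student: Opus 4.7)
The plan is to establish the four bullets in order, each leveraging the earlier ones together with the perturbation tools already assembled: the Riccati perturbation bounds of \Cref{lem: Riccati perturbation}, the Lyapunov perturbation of \Cref{lem: lyap perturbation}, the simplifying inequalities of \Cref{lem: simplifying inequalities}, and the derivative bounds on $\mathsf{D}\VEC K$ of \Cref{lem: Bound on K' and K''}. Since $P(\theta)\succeq I$, the standing hypothesis gives $\norm{\theta_1-\theta_2}\leq \tfrac{1}{16}\norm{P(\theta_1)}^{-2}\leq \tfrac{1}{16}$, so I will freely use the crude consequences $\norm{P(\theta_2)}\leq\sqrt{2}\norm{P(\theta_1)}$ (from \Cref{lem: Riccati perturbation}) and, once the first bullet is in hand, $\norm{B(\theta_2)}\leq 2\tau_{B(\theta_1)}$.

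The first bullet is immediate from the parameterization $\bmat{A(\theta) & B(\theta)}=\mathsf{vec}^{-1}(\theta,\dx)$: the map $\theta\mapsto B(\theta)$ is a reshape followed by a projection onto the last $\du$ columns, so $\norm{B(\theta_2)-B(\theta_1)}\leq\norm{B(\theta_2)-B(\theta_1)}_F\leq \norm{\theta_1-\theta_2}$, and the triangle inequality finishes. For the second bullet, I telescope
\begin{align*}
\Psi(\theta_2)-\Psi(\theta_1) &= (B(\theta_2)-B(\theta_1))^\top P(\theta_2)B(\theta_2) + B(\theta_1)^\top (P(\theta_2)-P(\theta_1))B(\theta_2) \\
&\quad + B(\theta_1)^\top P(\theta_1)(B(\theta_2)-B(\theta_1)),
\end{align*}
and bound each factor using the first bullet, the $P$-perturbation $\norm{P(\theta_2)-P(\theta_1)}\leq 8\sqrt{2}\norm{P(\theta_1)}^3\norm{\theta_1-\theta_2}$ from \Cref{lem: Riccati perturbation}, and the crude envelope above; the constants collapse into the stated $15\,\tau_{B(\theta_1)}^2\norm{P(\theta_1)}^3\norm{\theta_1-\theta_2}$.

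The fourth bullet follows from the factorization $H(\theta_2)=D_2^\top(\Psi(\theta_2)\kron\Sigma^{K(\theta_2)}(\theta_2))D_2$ with $D_2\triangleq \mathsf{D}_\theta\VEC K(\theta_2)$. Submultiplicativity and $\norm{X\kron Y}=\norm{X}\norm{Y}$ reduce the task to bounding $\norm{D_2}\leq 24\norm{P(\theta_1)}^{7/2}$ via \Cref{lem: Bound on K' and K''}, and $\norm{\Psi(\theta_2)\kron\Sigma^{K(\theta_2)}(\theta_2)}\leq 2\tau_{B(\theta_2)}^2\norm{P(\theta_2)}^2$ via the Kronecker bullet of \Cref{lem: simplifying inequalities} applied at $\theta_2$, then transported back to the base point $\theta_1$ using Riccati perturbation and the first bullet.

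The third bullet is the main obstacle and is where the bulk of the effort goes. I telescope
\begin{align*}
H(\theta_2)-H(\theta_1) = (D_2-D_1)^\top M_2 D_2 + D_1^\top (M_2-M_1) D_2 + D_1^\top M_1 (D_2-D_1),
\end{align*}
where $M_i\triangleq \Psi(\theta_i)\kron\Sigma^{K(\theta_i)}(\theta_i)$. The gap $\norm{D_2-D_1}$ is controlled by a mean-value argument together with the second-derivative bound $\norm{\mathsf{D}^2\VEC K}_{\mathsf{op}}\leq 4000\norm{P(\theta_1)}^{15/2}$ from \Cref{lem: Bound on K' and K''}. For $\norm{M_2-M_1}$, I split the Kronecker difference as
\begin{align*}
M_2-M_1 = (\Psi(\theta_2)-\Psi(\theta_1))\kron\Sigma^{K(\theta_2)}(\theta_2) + \Psi(\theta_1)\kron(\Sigma^{K(\theta_2)}(\theta_2)-\Sigma^{K(\theta_1)}(\theta_1)),
\end{align*}
handle the first piece with the second bullet, and obtain the second by invoking \Cref{lem: lyap perturbation} on the closed-loop matrices $A_i=(A(\theta_i)+B(\theta_i)K(\theta_i))^\top$; the closed-loop difference decomposes into three pieces controlled by $\norm{A(\theta_1)-A(\theta_2)}\leq \norm{\theta_1-\theta_2}$, by the first bullet combined with the $\norm{K(\theta_1)}$-bound of \Cref{lem: simplifying inequalities}, and by $\norm{B(\theta_2)(K(\theta_2)-K(\theta_1))}$ bounded via \Cref{lem: Riccati perturbation} (and another application of bullet one to move from $B(\theta_1)$ to $B(\theta_2)$). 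The hard part is then purely bookkeeping: tracking the accumulated powers of $\norm{P(\theta_1)}$ and $\tau_{B(\theta_1)}$ cascading through the three telescoped terms, and selecting a safe envelope that matches the stated exponent $17$ on $\norm{P(\theta_1)}$ and quadratic dependence on $\tau_{B(\theta_1)}$.
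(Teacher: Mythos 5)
Your proposal is correct and follows essentially the same route as the paper: the same triangle-inequality argument for $\norm{B(\theta_2)}$, the same expansion of $\Psi(\theta_2)-\Psi(\theta_1)$ via \Cref{lem: Riccati perturbation}, the same factorization $H=D^\top(\Psi\kron\Sigma)D$ for the norm bound, and the same three-term telescoping of $H(\theta_2)-H(\theta_1)$ controlled by the second-derivative bound of \Cref{lem: Bound on K' and K''}, the Kronecker split of $M_2-M_1$, and \Cref{lem: lyap perturbation} applied to the closed-loop matrices. The only differences are cosmetic choices of which factor carries the increment in each telescoped term.
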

\begin{proof}
    By the triangle inequality,
    \begin{align*}
        \norm{B(\theta_2)} = \norm{B(\theta_1) + (B(\theta_2) - B(\theta_1))} \leq \norm{B(\theta_1)} + \norm{\theta_1 - \theta_2}.
    \end{align*}
    It follows that
    \begin{align*}
        \tau^2_{B(\theta_2)} &\leq \max\{\norm{B_2}^2, 1\} \leq \max\{\norm{B(\theta_1)}^2+2\norm{B(\theta_1)}\norm{\theta_1-\theta_2} + \norm{\theta_1-\theta_2}, 1\} \leq 2\tau_{B(\theta_1)}^2 \\
        \tau^3_{B(\theta_2)} &\leq \max\{\norm{B_2}^3, 1\} \leq \max\{\norm{B(\theta_1)}^3+3\norm{B(\theta_1)}^2\norm{\theta_1-\theta_2} + 3\norm{B(\theta_1)}\norm{\theta_1-\theta_2}^2
        \norm{\theta_1-\theta_2}^3, 1\} \\
        &\leq 2\tau_{B(\theta_1)}^3
    \end{align*}
    where we applied $\norm{\theta_1-\theta_2} \leq \frac{1}{16}\norm{P(\theta_1)}^{-2}$ in the last inequality.
    For the second inequality, it holds from \Cref{lem: Riccati perturbation} that
    \begin{align*}
        &\norm{\Psi(\theta_2) - \Psi(\theta_1)} \\
        &\leq \norm{\{B(\theta_1) + B(\theta_2)-B(\theta_1)\}^TP(\theta_2)\{B(\theta_1) + B(\theta_2)-B(\theta_1)\} - B(\theta_1)^TP(\theta_1)B(\theta_1)} \\
        &\leq \norm{B(\theta_1)^T\paren{P(\theta_2)-P(\theta_1)}B(\theta_1)} + \norm{\sym\paren{B(\theta_1)^TP(\theta_2)\paren{B(\theta_2)-B(\theta_1)}}} \\
        &\quad+ \norm{\paren{B(\theta_2)-B(\theta_1)}^TP(\theta_2)\paren{B(\theta_2)-B(\theta_1)}} \\
        &\leq 8\sqrt{2}\norm{B(\theta_1)}^2\norm{P(\theta_1)}^3\norm{\theta_1-\theta_2} + 
        2\sqrt{2}\norm{B(\theta_1)}\norm{B(\theta_2)-B(\theta_1)}\norm{P(\theta_1)}  \\
        &\quad + \sqrt{2}\norm{B(\theta_2)-B(\theta_1)}^2\norm{P(\theta_1)} \\
        &\leq 15\tau^2_{B(\theta_1)}\norm{P(\theta_1)}^3\norm{\theta_1-\theta_2}.
    \end{align*}
    where we applied $\norm{\theta_1-\theta_2} \leq \frac{1}{16}\norm{P(\theta_1)}^{-2}$ in the last inequality. 
    Next, let $K'(\theta) \triangleq \mathsf{D}_{\theta}\VEC K(\theta)$. Then from \Cref{lem: Bound on K' and K''}, 
    \begin{align*}
        \norm{K'(\theta_2)-K'(\theta_1)} 
        &= \sup_{\tilde\theta\in[\theta_1, \theta_2]}\norm{K''(\tilde\theta)}\norm{\theta_1-\theta_2}
        \leq 4000\norm{P(\theta_1)}^{15/2}\norm{\theta_1-\theta_2}.
    \end{align*}
    From \Cref{lem: Riccati perturbation} and \Cref{lem: simplifying inequalities}
    \begin{align*}
        &\norm{A(\theta_2)+B(\theta_2)K(\theta_2) - A(\theta_1)+B(\theta_1)K(\theta_1)} \\
        &\leq \norm{A(\theta_2)-A(\theta_1)} + \norm{(B(\theta_2)-B(\theta_1))K(\theta_2)} + \norm{B(\theta_1)(K(\theta_2)-K(\theta_1))} \\
        &\leq \norm{\theta_1-\theta_2} + \norm{\theta_1-\theta_2}\norm{P(\theta_2)}^{1/2} + 32\norm{P(\theta_1)}^{7/2}\norm{\theta_1-\theta_2} \\
        &\leq 35\norm{P(\theta_1)}^{7/2}\norm{\theta_1-\theta_2}. 
    \end{align*}
    For $\Sigma_X^{K(\theta)}(\theta)$, from \Cref{lem: lyap perturbation} and above calculation, 
    \begin{align*}
        &\norm{\Sigma_X^{K(\theta_2)}(\theta_2) - \Sigma_X^{K(\theta_1)}(\theta_1)} \\
        &\leq \norm{\Sigma_X^{K(\theta_1)}(\theta_1)}\norm{\Sigma_X^{K(\theta_2)}(\theta_2)}\paren{70\norm{P(\theta_1)}^{4}\norm{\theta_1-\theta_2} + 35^2\norm{P(\theta_1)}^{7}\norm{\theta_1-\theta_2}^2} \\
        &\leq \sqrt{2}\norm{P(\theta_1)}^2\paren{70\norm{P(\theta_1)}^{4}\norm{\theta_1-\theta_2} + 35^2\norm{P(\theta_1)}^{7}\norm{\theta_1-\theta_2}^2} \\
        &\leq 225\norm{P(\theta_1)}^{7}\norm{\theta_1-\theta_2},
    \end{align*}
    where we applied $\norm{\theta_1-\theta_2} \leq \frac{1}{16}\norm{P(\theta_1)}^{-2}$ in the last inequality.
    Now let $M(\theta)$ denote $M(\theta) \triangleq \Psi(\theta)\kron\Sigma_X^{K(\theta)}(\theta)$. 
    Then we get
    \begin{align*}
        &\norm{M(\theta_2) - M(\theta_1)} \\
        &= \norm{\paren{\Psi(\theta_1) + \Psi(\theta_2) - \Psi(\theta_1)}\kron\paren{\Sigma_X^{K(\theta_1)}(\theta_1) + \Sigma_X^{K(\theta_2)}(\theta_2) - \Sigma_X^{K(\theta_1)}(\theta_1)} - \Psi(\theta_1)\kron\Sigma_X^{K(\theta_1)}(\theta_1)} \\
        &\leq \norm{\Psi(\theta_2) - \Psi(\theta_1)}\norm{\Sigma_X^{K(\theta_1)}(\theta_1)} + \norm{\Psi(\theta_1)}\norm{\Sigma_X^{K(\theta_2)}(\theta_2) - \Sigma_X^{K(\theta_1})(\theta_1)} \\
        &\hspace{5mm}+ \norm{\Psi(\theta_2) - \Psi(\theta_1)}\norm{\Sigma_X^{K(\theta_2)}(\theta_2) - \Sigma_X^{K(\theta_1)}(\theta_1)} \\
        &\leq 4e3\tau_{B(\theta_1)}^2\norm{P(\theta_1)}^{10}\norm{\theta_1-\theta_2}
    \end{align*}
    where we used $\norm{X\kron Y}\leq\norm{X}\norm{Y}$.
    Thus from \Cref{lem: Bound on K' and K''}
    \begin{align*}
        &\norm{H(\theta_2) - H(\theta_1)} \\
        &= \norm{K'(\theta_2)^TM(\theta_2)K'(\theta_2) - K'(\theta_1)^TM(\theta_1)K'(\theta_1)} \\
        &= \norm{\{K'(\theta_1) + K'(\theta_2) - K'(\theta_1)\}^TM(\theta_2)\{K'(\theta_1) + K'(\theta_2) - K'(\theta_1)\} -K'(\theta_1)^TM(\theta_1)K'(\theta_1)} \\
        &= \norm{K'(\theta_1)^T\{M(\theta_2) - M(\theta_1)\}K'(\theta_1)} + \norm{\sym\{K'(\theta_1)^TM(\theta_2)(K'(\theta_2) - K'(\theta_1))\}} \\
        &\hspace{5mm} + \norm{\{K'(\theta_2) - K'(\theta_1)\}^TM(\theta_2)\{K'(\theta_2) - K'(\theta_1)\}} \\
        &\leq \norm{K'(\theta_1)}^2\norm{M(\theta_2)-M(\theta_1)} + 2\norm{K'(\theta_1)}\norm{K'(\theta_2) - K'(\theta_1)}\norm{M(\theta_2)} \\
        &\quad + \norm{K'(\theta_2) - K'(\theta_1)}^2\norm{M(\theta_2)} \\
        &\leq 5e6\tau_{B(\theta_1)}^2\norm{P(\theta_1)}^{17}\norm{\theta_1-\theta_2}
    \end{align*}
    For the last fact, from \Cref{lem: simplifying inequalities} and \Cref{lem: Bound on K' and K''}, 
    \begin{align*}
        \norm{H(\theta_2)} &\leq \norm{K'(\theta_2)}^2\norm{M(\theta_2)} 
        \leq 2e3\norm{P(\theta_1)}^7\cdot4\tau^2_{B(\theta_1)}\norm{P(\theta_1)}^2 = 8e3\tau^2_{B(\theta_1)}\norm{P(\theta_1)}^9
    \end{align*}
\end{proof}

\section{Least Squares Analysis}
\label{s: id bound proof}

In this section, we provide a characterization of the least squares error for the procedure discussed in \Cref{s: methods}. The following result characterizes the weighted parameter identification error of the least squares. 

\begin{lemma}[Least Squares Identification Bound]
    \label{thm: identification bound}
    Let $H$ be a positive definite matrix belonging to $\mathbb{R}^{d_{\theta}\times d_{\theta}}$. Suppose the dataset $\curly{(X_t^n, U_t^n, X_{t+1}^n)}_{t=1, n=1}^{T,N}$ is collected from system \eqref{eq: linear system} using random noise $U_t \sim \calN(0, \Sigma_u)$. Let $\hat \theta$ be the least squares estimate \eqref{eq: least squares}, and $\hat {\mathsf{FI}}$ be the Fisher Information estimate \eqref{eq: fisher estimate}. Let $\delta \in (0,1)$. It holds that with probability at least $1-\delta$ that 
    \begin{align}
        \norm{\hat \theta - \theta^\star}_H^2 \leq 4 \frac{\trace\paren{H \mathsf{FI}(\theta^\star)^{-1}}}{N} + 8 \frac{\norm{H \mathsf{FI}(\theta^\star)^{-1}}}{N} \log\frac{2}{\delta},
    \end{align}
    and
        $0.5 \mathsf{FI}(\theta^\star) \preceq \hat{\mathsf{FI}} \preceq 2 \mathsf{FI}(\theta^\star)$
    as long as the number of experiments satisfies 
        $N \geq N_{\mathsf{ID}},$
    for $N_{\mathsf{ID}}$ in \eqref{eq: id burn-in}.  
\end{lemma}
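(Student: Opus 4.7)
My approach is to combine two standard ingredients: a multiplicative concentration result for the empirical Fisher information $\hat{\mathsf{FI}}$ around $\mathsf{FI}(\theta^\star)$, and a Hanson--Wright-style bound on the Gaussian quadratic form that expresses the least-squares error. This mirrors the template used in \citet{lee2024active, wagenmaker2021task} for LQR identification.

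First, I would reduce the least-squares error to a normalized martingale. Let $\Theta^\star = [A(\theta^\star)\ B(\theta^\star)]$ and $Z_t^n = [X_t^n;\,U_t^n]$, so that $X_{t+1}^n = \Theta^\star Z_t^n + W_t^n$. The normal equations give
\begin{align*}
\hat\Theta - \Theta^\star = \Bigl(\sum_{n,t} W_t^n (Z_t^n)^\top\Bigr)\Bigl(\sum_{n,t} Z_t^n (Z_t^n)^\top\Bigr)^{-1}.
\end{align*}
Vectorizing and using $\Sigma_w = I$ so that the right Gram matrix matches \eqref{eq: fisher estimate}, this becomes $\hat\theta - \theta^\star = (N\hat{\mathsf{FI}})^{-1}\xi$ with $\xi = \sum_{n,t} Z_t^n \otimes W_t^n$, hence
\begin{align*}
\|\hat\theta - \theta^\star\|_H^2 = \xi^\top (N\hat{\mathsf{FI}})^{-1}\,H\,(N\hat{\mathsf{FI}})^{-1}\xi.
\end{align*}

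Second, to establish the sandwich $\tfrac{1}{2}\mathsf{FI}(\theta^\star) \preceq \hat{\mathsf{FI}} \preceq 2\mathsf{FI}(\theta^\star)$ with probability at least $1-\delta/2$, I would exploit independence across trajectories: $\hat{\mathsf{FI}}$ is an average of $N$ i.i.d.\ PSD matrices $\sum_{t} Z_t^n(Z_t^n)^\top \otimes \Sigma_w^{-1}$, each with expectation $\mathsf{FI}(\theta^\star)$. The summands are only sub-exponential (they are quadratic in Gaussians), but their Orlicz norms can be controlled via the trajectory growth constant $\sum_{t=0}^{T-1}\|A(\theta^\star)^t [I\ B(\theta^\star)]\|$, together with $\|\Sigma_u\|$ and $\|\Sigma_w\|$. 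A matrix Bernstein inequality then yields the multiplicative sandwich as soon as $N$ exceeds the polynomial quantity $N_{\mathsf{ID}}$ in \eqref{eq: id burn-in}. Third, I would condition on the $\sigma$-algebra generated by $\{Z_t^n\}$ and invoke Hanson--Wright on the resulting Gaussian quadratic form. Since $U_t^n$ is drawn open-loop from $\calN(0,\Sigma_u)$, $W_t^n$ is independent of $Z_t^n$ and of all past noises, so conditionally $\xi$ is centered Gaussian with covariance $N\hat{\mathsf{FI}}$. With $M = (N\hat{\mathsf{FI}})^{-1} H (N\hat{\mathsf{FI}})^{-1}$ and $M\Sigma = H(N\hat{\mathsf{FI}})^{-1}$, Hanson--Wright gives with probability at least $1-\delta/2$
\begin{align*}
\xi^\top M\xi \leq 2\trace\bigl(H(N\hat{\mathsf{FI}})^{-1}\bigr) + c\,\|H(N\hat{\mathsf{FI}})^{-1}\|\log(2/\delta).
\end{align*}
Substituting $\hat{\mathsf{FI}}^{-1} \preceq 2\mathsf{FI}(\theta^\star)^{-1}$ from the sandwich (which passes to the trace via $H^{1/2}\hat{\mathsf{FI}}^{-1}H^{1/2} \preceq 2 H^{1/2}\mathsf{FI}^{-1}H^{1/2}$) and a union bound produce the claimed bound.

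The step I expect to be hardest is the Fisher-information concentration: the per-trajectory summands are quadratic in a Gaussian process whose covariance depends on $T$ and on the spectrum of $A(\theta^\star)$, so bounding the sub-exponential norm and converting this into a multiplicative sandwich with an explicit burn-in of the form \eqref{eq: id burn-in} requires careful bookkeeping of the system-theoretic quantities that appear there. Once this is done, the Hanson--Wright step and the algebraic manipulations following the closed-form OLS expression are essentially routine.
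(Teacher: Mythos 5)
Your reduction to the closed-form normal equations and the identity $\|\hat\theta-\theta^\star\|_H^2 = \xi^\top (N\hat{\mathsf{FI}})^{-1}H(N\hat{\mathsf{FI}})^{-1}\xi$ matches the paper, and your plan for the Fisher-information sandwich (concentration of the empirical Gram matrix, burn-in absorbed into $N_{\mathsf{ID}}$) is the same in spirit as the paper's Covariance Concentration lemma, which uses Hanson--Wright plus a covering argument rather than matrix Bernstein; either route is workable for that step.

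The genuine gap is in your third step. You propose to condition on the $\sigma$-algebra generated by all the regressors $\{Z_t^n\}$ and then treat $\xi=\sum_{n,t}Z_t^n\otimes W_t^n$ as a centered Gaussian with covariance $N\hat{\mathsf{FI}}$. This fails for autoregressive data: while $W_t^n$ is independent of $Z_t^n$ and of all \emph{earlier} regressors, the \emph{later} regressors are functions of $W_t^n$ (indeed $W_t^n = X_{t+1}^n - A(\theta^\star)X_t^n - B(\theta^\star)U_t^n$ is exactly recoverable from $Z_t^n$ and $Z_{t+1}^n$), so conditional on the full regressor $\sigma$-algebra the noise is essentially deterministic, not Gaussian, and $\E[\xi\mid\{Z_t^n\}]\neq 0$. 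The conditioning shortcut is valid only for exogenous (fixed-design) regressors. The paper avoids this by invoking a self-normalized martingale inequality (its Lemma~\ref{lem: sn martingale bnd}, adapted from Lemma A.8 of \citet{lee2024active} and Theorem 14.7 of \citet{pena2009self}), whose hypothesis --- $Z_k$ independent of $W_\tau$ for $\tau\geq k$ --- is precisely the one-sided independence that holds here; that lemma, applied on the covariance-concentration event, directly yields the bound $2(1+4\beta/\lambda_{\min}(\Sigma_Z))(\trace((N\Sigma_Z)^{-1}H)+2\|(N\Sigma_Z)^{-1}H\|\log(1/\delta))$ with $\Sigma_Z=\mathsf{FI}(\theta^\star)$. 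Replacing your conditioning-plus-Hanson--Wright step with such a self-normalized bound repairs the argument; the rest of your outline (union bound, passing from $\hat{\mathsf{FI}}^{-1}$ to $2\,\mathsf{FI}(\theta^\star)^{-1}$ inside the trace and operator norm) then goes through as you describe.
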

The result follows standard arguments applying concentration inequalities to linear systems \cite{ziemann2023tutorial, tu2024learning}. The specific form with the weighting matrix $H$ is an instantiation of Theorem 3.1 in \cite{lee2024active}. For completeness, a proof is provided below. 

To prove \Cref{thm: identification bound}, we first state two supporting lemmas. The first is a standard result on covariance concentration from \citet{jedra2020finite}. We define the matrix $\Gamma_x$ as 
\begin{align*}
    \Gamma_x = \bmat{0 \\ \bmat{I & B} \\ A \bmat{I & B} &\bmat{I & B} \\ \vdots \\ A^{T-1} \bmat{I & B} & \dots \bmat{I & B}}
\end{align*}
such that for any experiment $n$,
\begin{align*}
    \bmat{X_1^n \\ \vdots \\ X_T^n} = \Gamma_x \bmat{W_1^n \\ U_1^n \\ \vdots \\ W_{T-1}^n \\ U_{T-1}^n}. 
\end{align*}

\begin{lemma}[Covariance Concentration]
    \label{lem: covariance concentration}
    Consider collecting $N$ trajectories of length $T$ from \eqref{eq: linear system}. 
    Let $\beta \in \paren{0, \frac{1}{4}\lambda_{\min}\paren{\mathbf{E} \sum_{t=1}^T \bmat{X_t \\ U_t} \bmat{X_t\\ U_t}^\top}}$. Define the event 
    \begin{align}
        \label{eq: covariance concentration event}
        \calE \triangleq \norm{ \frac{1}{N} \sum_{n=1}^N \sum_{t=1}^T \bmat{X_t^n \\ U_t^n} \bmat{X_t^n\\ U_t^n}^\top - \mathbf{E} \sum_{t=1}^T \bmat{X_t \\ U_t} \bmat{X_t\\ U_t}^\top} \leq \beta.
    \end{align}
    There exists a universal positive constant $c$ such that if 
    \begin{align*}
        N \geq c \frac{\norm{\Gamma_x}^2 \norm{\Sigma_u} \norm{\Sigma_w} \norm{\mathbf{E} \sum_{t=1}^T \bmat{X_t \\U_t}\bmat{X_t \\U_t}^\top}^2}{\lambda_{\min}\paren{\bfE  \sum_{t=1}^T \bmat{X_t \\U_t}\bmat{X_t \\U_t}^\top}\beta^2} \paren{\log\frac{1}{\delta} + \dx + \du},
    \end{align*}
    then $\calE$ holds with probability at least $1-\delta$.
\end{lemma}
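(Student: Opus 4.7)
The plan is to recast the event $\calE$ as a deviation between the i.i.d. sample mean and expectation of a sequence of random matrices, and then invoke a matrix Bernstein-type inequality for unbounded sub-exponential random matrices. Concretely, define
\begin{align*}
M_n \triangleq \sum_{t=1}^T \bmat{X_t^n \\ U_t^n}\bmat{X_t^n \\ U_t^n}^\top,\quad n=1,\dots,N,
\end{align*}
so that $\calE$ is exactly the event $\|\frac{1}{N}\sum_n M_n - \mathbf{E} M_1\| \leq \beta$, and the $M_n$ are i.i.d. by the independence of the $N$ trajectories.

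The first step is to express the trajectory as a linear function of the driving Gaussian noise, in order to control the covariance of a single $M_n$. Using the definition of $\Gamma_x$, the stacked trajectory vector $Z^n \triangleq (X_1^n;U_1^n;\dots;X_T^n;U_T^n)$ can be written as $\tilde\Gamma_x\xi^n$, where $\xi^n \triangleq (W_0^n;U_0^n;\dots;W_{T-1}^n;U_{T-1}^n)$ is a zero-mean Gaussian with block-diagonal covariance formed from $\Sigma_w$ and $\Sigma_u$, and $\tilde\Gamma_x$ interleaves $\Gamma_x$ with identity blocks for the inputs. This yields the operator-norm bound
\begin{align*}
\bigl\|\mathrm{Cov}(Z^n)\bigr\| \lesssim \|\Gamma_x\|^2 \,\|\Sigma_w\|\,\|\Sigma_u\|,
\end{align*}
and, together with $\mathbf{E} M_n = \mathbf{E}\!\sum_t \bmat{X_t\\U_t}\bmat{X_t\\U_t}^\top$, shows that $M_n$ is a Gaussian quadratic form with mean of known norm.

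The second step is to apply Hanson--Wright to control the sub-exponential (Orlicz $\psi_1$) norm of $M_n - \mathbf{E} M_n$ by a constant times $\|\mathrm{Cov}(Z^n)\|$, and to bound the matrix variance proxy $\sigma^2 \triangleq \|\mathbf{E}[(M_n-\mathbf{E} M_n)^2]\|$ by $\lesssim \|\mathbf{E} M_n\|^2$. I then apply the sub-exponential matrix Bernstein inequality (e.g.\ Tropp's user-friendly version) to the centered sum $\frac{1}{N}\sum_n (M_n - \mathbf{E} M_n)$, obtaining
\begin{align*}
\Pr\!\Bigl[\Bigl\|\tfrac{1}{N}\!\sum_n (M_n-\mathbf{E} M_n)\Bigr\|>\beta\Bigr]\leq 2(\dx+\du)\exp\!\Bigl(-cN\min\!\bigl(\tfrac{\beta^2}{\sigma^2},\tfrac{\beta}{K}\bigr)\Bigr),
\end{align*}
with $K\lesssim \|\Gamma_x\|^2\|\Sigma_w\|\|\Sigma_u\|$. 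Imposing $\beta \leq \tfrac{1}{4}\lambda_{\min}(\mathbf{E} M_1)$ ensures the sub-Gaussian regime dominates, and solving for $N$ in terms of $\delta$ produces the stated burn-in, with the $(\dx+\du)$ factor emerging from the $\log\mathrm{tr}$ dimensional constant in matrix Bernstein.

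The main obstacle is bookkeeping rather than new ideas: careful tracking of the dependence on $\|\Gamma_x\|$, $\|\Sigma_w\|$, $\|\Sigma_u\|$, $\|\mathbf{E} M_1\|$, and $\lambda_{\min}(\mathbf{E} M_1)$ through Hanson--Wright and matrix Bernstein so that the final inequality matches the stated form to within the universal constant $c$. Because exactly this concentration appears as Proposition~1 of \citet{jedra2020finite}, a cleaner alternative is to directly invoke that result with our parameterization and verify that their hypotheses reduce to the displayed burn-in, which I would do to avoid re-deriving the standard constants.
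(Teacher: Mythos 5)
Your proposal is correct in substance but takes a genuinely different route from the paper. The paper treats the entire dataset as one large Gaussian vector $\tilde\Gamma\eta$, whitens the deviation by $M = (N\,\mathbf{E}\sum_t [X_t;U_t][X_t;U_t]^\top)^{-1/2}$, reduces the operator norm to a supremum over unit directions $v$ of the scalar Gaussian quadratic form $\|\sigma_{Mv}^\top\tilde\Gamma\eta\|^2 - \mathbf{E}\|\sigma_{Mv}^\top\tilde\Gamma\eta\|^2$, and applies Hanson--Wright for each fixed $v$ followed by a covering argument over the sphere; the additive $\dx+\du$ in the burn-in is exactly the metric-entropy cost of the net, and the $\lambda_{\min}$ enters when converting the whitened deviation back to the absolute tolerance $\beta$. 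You instead decompose the sum into $N$ i.i.d.\ trajectory-level matrices $M_n$ and invoke sub-exponential matrix Bernstein, paying a multiplicative dimensional prefactor (hence only $\log(\dx+\du)$ in the exponent) rather than a covering bound. Both are standard and both reach the stated conclusion; the paper's route avoids ever needing operator-norm Orlicz control of a random \emph{matrix}, which is precisely where your bookkeeping is optimistic. Specifically, the claims $K \leq c\|\Gamma_x\|^2\|\Sigma_w\|\|\Sigma_u\|$ for the $\psi_1$ norm of $\|M_n - \mathbf{E} M_n\|$ and $\sigma^2 \leq c\|\mathbf{E} M_1\|^2$ for the matrix variance proxy both drop trace factors that Gaussian quadratic forms genuinely carry: already for $Z\sim\calN(0,I_d)$ one has $\mathbf{E}[(ZZ^\top - I)^2] = (d+1)I$, so $\sigma^2$ scales with $\trace(\mathbf{E} M_1)\cdot\|\mathbf{E} M_1\|$, not $\|\mathbf{E} M_1\|^2$, and similarly $\|\,\|Z\|^2\|_{\psi_1}$ scales with $\trace(\mathrm{Cov}(Z))$ rather than $\|\mathrm{Cov}(Z)\|$. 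These extra factors are of order $\dx+\du$ (times $T$-dependent quantities absorbed elsewhere) and so can be folded into the $(\dx+\du)$ term already present in the stated burn-in, which is why your argument still closes, but as written the constants do not match the form you assert, and the claim that the $(\dx+\du)$ "emerges from the $\log$-trace prefactor of matrix Bernstein" is not where it actually comes from. Your fallback of directly citing \citet{jedra2020finite} is exactly the attribution the paper itself makes before re-deriving the result.
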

\begin{proof}
Let  $\eta \sim \calN(0, I_{\dx \du T})$ and $\Gamma$ be the matrix mapping from 
\begin{align*}
    \bmat{W_1^n \\ U_1^n \\ \vdots \\ W_{T-1}^n \\ U_{T-1}} \textrm{ to } \bmat{X_1^n \\ U_1^n \\X_2^n \\ U_2^n \\ \vdots \\ X_T^n \\ U_T^n}. \mbox{ Then  } \bmat{X_1^n \\ U_1^n \\X_2^n \\ U_2^n \\ \vdots \\ X_T^n \\ U_T^n} \overset{d}{=} \tilde \Gamma \eta, \mbox{ where } \tilde \Gamma = \Gamma \paren{I_T \otimes \bmat{\Sigma_w^{1/2} \\ & \Sigma_u^{1/2}}}.
\end{align*}
Let $M = \left(N \mathbf{E} \sum_{t=1}^T \bmat{X_t \\ U_t} \bmat{X_t\\ U_t}^\top\right)^{-1/2}.$
It holds that
\begin{align*}
    &\norm{M \sum_{n=1}^N \sum_{t=1}^T \bmat{X_t^n \\ U_t^n} \bmat{X_t^n \\ U_t^n}^\top M - I}\\
    &= \sup_{v \in \calS^{d-1}} v^\top \paren{M \sum_{n=1}^N \sum_{t=1}^T \bmat{X_t^n \\ U_t^n} \bmat{X_t^n \\ U_t^n}^\top M - I}v \\
    &= \sup_{v \in \calS^{d-1}} v^\top \paren{M \sum_{n=1}^N \sum_{t=1}^T \bmat{X_t^n \\ U_t^n} \bmat{X_t^n \\ U_t^n}^\top M - \mathbf{E}\brac{\paren{M \sum_{n=1}^N \sum_{t=1}^T \bmat{X_t^n \\ U_t^n} \bmat{X_t^n \\ U_t^n}^\top} M}}v \\
    &= \sup_{v \in \calS^{d-1}} \norm{\sigma_{Mv}^\top \tilde \Gamma \eta}^2 - \mathbf{E} \norm{\sigma_{Mv}^\top \tilde \Gamma \eta}^2,
\end{align*}
where $\sigma_{Mv} \triangleq I_{NT} \otimes (Mv).$ By the Hanson-Wright inequality applied to Gaussian quadratic forms (presented for subGaussian forms in Theorem 6.3.2 of \citep{vershynin2020high}, and specialized to Gaussians by \citet{laurent2000adaptive}) along with a covering argument, it holds that with probability at least $1-\delta$, 
\begin{align*}
    \sup_{v \in \calS^{d-1}} \norm{\sigma_{Mv}^\top \tilde \Gamma \eta}^2 - \mathbf{E} \norm{\sigma_{Mv}^\top \tilde \Gamma \eta}^2 \leq \frac{\beta}{\norm{\mathbf{E} \sum_{t=1}^T \bmat{X_t \\ U_t} \bmat{X_t\\ U_t}^\top}}
\end{align*}
with probability at least $1-\exp\paren{-c_1 N \beta^2 \frac{\lambda_{\min}\paren{\mathbf{E} \sum_{t=1}^T \bmat{X_t \\ U_t} \bmat{X_t\\ U_t}^\top}}{\norm{\tilde \Gamma}^2 \norm{\mathbf{E} \sum_{t=1}^T \bmat{X_t \\ U_t} \bmat{X_t\\ U_t}^\top}^2} + c_2 d}$. Inverting this and applying submultiplicativity to bound $\norm{\tilde \Gamma}$ concludes the statement.

\end{proof}

The second is a self-normalized martingale bound, adapted in Lemma A.8 of \citet{lee2024active} from the standard self-normalized martingale bound in Theorem 14.7 of \citet{pena2009self}.The specific form that we use is from Lemma A.8 of \citet{lee2024active}. 
\begin{lemma}[Lemma A.8 of \citet{lee2024active}]
\label{lem: sn martingale bnd}
    Let $\curly{W_k}_{k=1}^K$ be a sequence of standard normal Gaussian random variables.  Let $\curly{Z_{k}}_{k=1}^K$ be a sequence of Gaussian random vectors assuming values in $\R^{d_{\theta}}$ such that $Z_k$ is independent from $W_\tau$ for $\tau \geq k$.  Let $\Sigma_Z \triangleq \mathbf{E} \frac{1}{K} \sum_{k=1}^K Z_k Z_k^\top$. Suppose $H \in \R^{d_{\theta} \times d_{\theta}}$ is positive definite and $\beta \in \R$ satisfies 
    \[
        0 < \beta \leq \frac{\lambda_{\min}\paren{\Sigma_Z}}{2}.
    \]
    As long as the event $\norm{\frac{1}{K} \sum_{k=1}^K Z_k Z_k^\top - \frac{1}{K} \mathbf{E} \sum_{k=1}^K Z_k Z_k^\top} \leq \beta$, then the following holds with probability at least $1-\delta$, 
    \begin{align*}
        &\norm{\left(\sum_{k=1}^K Z_k Z_k^\top\right)^{-1} \sum_{k=1}^K Z_k W_k}_H^2 \leq 2 \paren{1 + \frac{4\beta}{\lambda_{\min}(\Sigma_Z)}}\paren{ \trace\paren{(K \Sigma_Z)^{-1} H} + 2  \norm{(K \Sigma_Z)^{-1} H} \log \frac{1}{\delta}}. 
    \end{align*}
\end{lemma}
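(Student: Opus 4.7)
My plan combines the tight Gaussian MGF identity satisfied by the martingale $M_K$ with a whitening of the quadratic form by $(K\Sigma_Z)^{-1/2}$ on the covariance concentration event, and then invokes a Hanson--Wright-type tail bound.

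First, because $Z_k$ is $\calF_{k-1}$-measurable and $W_k \mid \calF_{k-1} \sim \calN(0,1)$, iterating the conditional Gaussian MGF yields $\E\exp(\lambda^\top M_K - \frac{1}{2}\lambda^\top S_K \lambda) = 1$ for every deterministic $\lambda$. On the covariance event $\calE = \{\norm{S_K/K - \Sigma_Z} \le \beta\}$, we have $S_K \preceq (1+c_0)K\Sigma_Z$ with $c_0 := \beta/\lambda_{\min}(\Sigma_Z) \le 1/2$, so replacing $S_K$ in the exponent by this deterministic upper bound shows that on $\calE$, $M_K$ is sub-Gaussian with proxy $(1+c_0)K\Sigma_Z$. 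Since $\E[M_K M_K^\top] = \E[S_K] = K\Sigma_Z$, the whitened vector $\tilde M_K := (K\Sigma_Z)^{-1/2} M_K$ has covariance exactly $I$ and sub-Gaussian proxy $(1+c_0)I$ on $\calE$.

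Second, I whiten the quadratic form itself. With $T := (K\Sigma_Z)^{-1/2} S_K (K\Sigma_Z)^{-1/2}$ and $\tilde H := (K\Sigma_Z)^{-1/2} H (K\Sigma_Z)^{-1/2}$, a similarity calculation gives
\begin{align*}
\norm{S_K^{-1}M_K}_H^2 = \tilde M_K^\top B\, \tilde M_K, \qquad B := T^{-1}\tilde H\, T^{-1}.
\end{align*}
On $\calE$ we have $(1-c_0)I \preceq T \preceq (1+c_0)I$, which yields the deterministic bounds $\trace(B) \le (1-c_0)^{-2}\trace((K\Sigma_Z)^{-1}H)$ and $\norm{B} \le (1-c_0)^{-2}\norm{(K\Sigma_Z)^{-1}H}$ via $\trace(XY) \le \norm{X}\trace(Y)$ and submultiplicativity. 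A Hanson--Wright tail bound for $\tilde M_K^\top B\,\tilde M_K$, valid since $\tilde M_K$ has identity covariance and sub-Gaussian proxy $(1+c_0)I$, after absorbing the Frobenius cross-term through $\norm{B}_F^2 \le \norm{B}\trace(B)$, produces with probability at least $1-\delta$ a bound of the form $2(1+c_0)\bigl(\trace(B) + 2\norm{B}\log(1/\delta)\bigr)$. Substituting the deterministic bounds on $\trace(B), \norm{B}$ and combining the aggregate constants yields the advertised $2(1 + 4\beta/\lambda_{\min}(\Sigma_Z))$ prefactor.

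The main technical obstacle is that $B$ is itself random and correlated with $\tilde M_K$ through the joint dependence on $(Z_1, \ldots, Z_K)$, so a direct invocation of Hanson--Wright (which implicitly assumes a deterministic quadratic-form matrix) is not immediately justified. The cleanest route I see is to work directly with the exponential martingale via the method of mixtures: choosing a Gaussian prior whose covariance is aligned with $H$ and using $\log(1+x) \le x$ on the resulting log-determinant extracts the $\trace$ form in the leading term --- rather than the crude $d_\theta \norm{(K\Sigma_Z)^{-1}H}$ that a naive mixture would produce --- and the covariance event is then used to relate $(S_K + V^{-1})^{-1}$ to $S_K^{-1} H S_K^{-1}$. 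Matching the precise prefactor $1 + 4c_0$, rather than the looser $(1+c_0)(1-c_0)^{-2}$ that the naive matrix sandwich yields, requires careful bookkeeping through this mixture argument, and is where the bulk of the technical effort lies.
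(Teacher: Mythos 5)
The paper offers no proof of this lemma at all: it is imported as Lemma A.8 of \citet{lee2024active}, described there as an adaptation of the self-normalized martingale bound of Theorem 14.7 of \citet{pena2009self}. The relevant question is therefore whether your sketch would stand as a self-contained proof, and it would not. Your primary route---whitening to $\tilde M_K^\top B \tilde M_K$ with $B = T^{-1}\tilde H T^{-1}$ and invoking a Hanson--Wright-type tail---breaks at exactly the point you flag: $B$ is a function of $S_K$, hence of $(Z_1,\dots,Z_K)$, and is correlated with $\tilde M_K$. This is not a cosmetic obstacle. The sandwich $(1-c_0)I \preceq T \preceq (1+c_0)I$ controls $\trace(B)$ and $\norm{B}$ (your bounds there are correct), but it does \emph{not} give the semidefinite comparison $B \preceq (1-c_0)^{-2}\tilde H$ unless $T$ and $\tilde H$ commute; the natural repairs (bounding $\norm{\tilde H^{1/2}T^{-1}\tilde H^{-1/2}}$, or splitting off $(T^{-1}-I)\tilde M_K$) introduce either the condition number of $H$ or an unweighted $\norm{\tilde M_K}^2$, i.e., terms like $\norm{H}\trace((K\Sigma_Z)^{-1})$ rather than $\trace((K\Sigma_Z)^{-1}H)$, which do not appear in the target bound. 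So the quadratic form cannot be decoupled from its random weight matrix by the estimates you have established.

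Your fallback---the method of mixtures with an $H$-aligned Gaussian prior---is the right family of argument (it is what the cited Peña--Lai--Shao machinery is built on), but you have only named it. The three steps that constitute the actual proof are all deferred: (i) choosing and scaling the prior covariance $V$ so that $(S_K+V^{-1})^{-1}$ can be related to $S_K^{-1}HS_K^{-1}$ on the covariance event while losing no more than the factor $1+4\beta/\lambda_{\min}(\Sigma_Z)$; (ii) extracting $\trace((K\Sigma_Z)^{-1}H)$ from the log-determinant (your $\log(1+x)\le x$ remark is the right first move, but the resulting $\trace(S_KV)$ must then be reconciled with the weighted norm on the left-hand side, and the scale of $V$ is exactly where a careless choice degrades to $d_{\theta}\norm{(K\Sigma_Z)^{-1}H}$); and (iii) producing $2\norm{(K\Sigma_Z)^{-1}H}$ as the coefficient of $\log(1/\delta)$, which does not fall out of a mixture bound whose deviation term is dimensionless. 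You state yourself that this is ``where the bulk of the technical effort lies,'' which is an accurate self-assessment: what you have is a plausible plan with its central step missing, not a proof. To close it, either carry out the mixture computation in full, or reduce to scalar self-normalized bounds along the eigendirections of $(K\Sigma_Z)^{-1/2}H(K\Sigma_Z)^{-1/2}$ and show explicitly how the trace and operator-norm terms recombine.
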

Leveraging the above two results, we can prove \Cref{thm: identification bound}. We write the least squares identification error as
\begin{align*}
    \bmat{ A(\hat \theta) & B(\hat \theta)} - \bmat{ A(\theta^\star) & B(\theta^\star)} = \sum_{t=1,n=1}^{T,N} W_{t}^n \bmat{X_t^n \\ U_t^n}^\top \left(\sum_{t=1,n=1}^{T,N} \bmat{X_t^n \\ U_t^n} \bmat{X_t^n \\ U_t^n}^\top\right).
\end{align*}
The noise covariance of $W_t^n$ can be pulled out such that for $\xi_{t,n} =\Sigma_w^{-1/2} W_{t}^n$, 
\begin{align*}
    \bmat{ A(\hat \theta) & B(\hat \theta)} - \bmat{ A(\theta^\star) & B(\theta^\star)} = \sum_{t=1,n=1}^{T,N} \Sigma_w^{1/2} \xi_{t}^n \bmat{X_t^n \\ U_t^n}^\top \left(\sum_{t=1,n=1}^{T,N} \bmat{X_t^n \\ U_t^n} \bmat{X_t^n \\ U_t^n}^\top\right).
\end{align*}
Applying the vectorization identity $\VEC(XYZ) = (Z^\top \otimes X) \VEC Y$, we find that
\begin{align*}
    &\VEC \paren{\bmat{ A(\hat \theta) & B(\hat \theta)} - \bmat{ A(\theta^\star) & B(\theta^\star)}} \\
    &= \left(\sum_{t=1,n=1}^{T,N}    \paren{\bmat{X_t^n \\ U_t^n} \otimes \Sigma_w^{-1/2}}\paren{ \bmat{X_t^n \\ U_t^n}^\top  \otimes \Sigma_w^{-1/2}}\right)^{-1}\sum_{t=1,n=1}^{T,N} \paren{\bmat{X_t^n \\ U_t^n} \otimes \Sigma_w^{-1/2}} \eta_t^n.
\end{align*}
This can be further decomposed by letting $Z_{t}^n[i]$ be the $i^{\mathsf{th}}$ column of $\bmat{X_t^n \\ U_t^n} \otimes \Sigma_w^{-1/2}$, and $\eta_t^n[i]$ be the $i^{\mathsf{th}}$ entry of $\eta_t^n$. Then 
\begin{align*}
    &\VEC \paren{\bmat{ A(\hat \theta) & B(\hat \theta)} - \bmat{ A(\theta^\star) & B(\theta^\star)}} = \left(\sum_{t=1, n=1, i=1}^{T,N,\dx} Z_t^n[i] Z_t^n[i]^\top\right)^{-1}\sum_{t=1,n=1,i=1}^{T,N, \dx} Z_t^n[i] \eta_t^n[i].
\end{align*}
Invoking \Cref{lem: covariance concentration}, the event
\begin{align*}
    \calE = \curly{\norm{\frac{1}{N} \sum_{t=1, n=1, i=1}^{T,N,\dx} Z_t^n[i] Z_t^n[i]^\top - \Sigma_Z} \leq \frac{1}{4} \lambda_{\min}(\Sigma_Z)},
\end{align*}
with $\Sigma_Z = \mathbf{E} \sum_{t=1, i=1}^{T,\dx} Z_t[i] Z_t[i]^\top$
holds with probability at least $1-\delta/2$ as long as 
\begin{align*}
   N \geq c \frac{\norm{\Gamma_x}^2 \norm{\Sigma_u} \norm{\Sigma_w}^3 \norm{\mathbf{E} \sum_{t=1}^T \bmat{X_t \\U_t}\bmat{X_t \\U_t}^\top}^2}{\lambda_{\min}\paren{\bfE  \sum_{t=1}^T \bmat{X_t \\U_t}\bmat{X_t \\U_t}^\top}^3 \lambda_{\min}(\Sigma_w)^2} \paren{\log\frac{1}{\delta} + \dx + \du},
\end{align*}
for a universal positive constant $c$.  Under this event, \Cref{lem: sn martingale bnd} implies that with probability at least $1-\delta/2$,
\begin{align*}
    & \norm{\VEC \paren{\bmat{ A(\hat \theta) & B(\hat \theta)} - \bmat{ A(\theta^\star) & B(\theta^\star)}}}_H^2 \leq  4\paren{ \trace\paren{(N \Sigma_Z)^{-1} H} + 2  \norm{(N \Sigma_Z)^{-1} H} \log \frac{2}{\delta}}.
\end{align*}
To conclude the proof, note that $\Sigma_Z = \mathsf{FI}(\theta^\star)$, and union bound over the success events. Additionally, note that $\norm{\mathbf{E} \sum_{t=1}^T \bmat{X_t \\U_t}\bmat{X_t \\U_t}^\top}$ and $\norm{\Gamma_x}$ may be bounded in terms of $\calJ(\theta^\star) = \sum_{t=0}^{T-1} \norm{A(\theta^\star)^t \bmat{I & B(\theta^\star)}}$, $\norm{\Sigma_w}$ and $\norm{\Sigma_w}$ \citep{jedra2020finite}. 

\section{Fundamental Limits}
\label{s: lower bound proof}

While good algorithm design can decrease the suboptimality gap of the learned controller, there are fundamental limits on the achievable performance. These limits are characterized by the signal-to-noise ratio of the experiment procedure, as well as the sensitivity of the LQR problem to error in the parameter estimates. To capture the signal-to-noise ratio for the experimental procedure, we define the Fisher Information matrix:
\begin{align}
    \mathsf{FI}(\theta) \triangleq \mathbf{E}_{\theta}\brac{\sum_{t=1}^T \bmat{X_t \\ U_t} \bmat{X_t \\ U_t}^\top } \otimes \Sigma_W^{-1}. \label{eq:FI}
\end{align}
To characterize the sensitivity of the LQR problem, we define the matrix $H(\theta)$ as 
\begin{align}
    \label{eq: model task Hessian}
    H(\theta) \triangleq D_{\theta} \VEC K(\theta)^\top  (\Sigma_X(\theta) \otimes (B(\theta)^\top P(\theta) B(\theta) + R)) D_{\theta} \VEC K(\theta).
\end{align}
One can verify that this matrix is the Hessian of the cost $C(K(\tilde \theta), \theta)$ with repsect to $\tilde \theta$, and evaluated at $\tilde \theta = \theta$ \citep{wagenmaker2021task}. The below theorem presents a lower bound on the  $\varepsilon$-local minimax excess cost gap in terms of these quantities. 
\begin{theorem}
    \label{thm: lower bound}
    Let $\rho(\theta', T, N)$ denote the distribution of the experiment data induced by running the aforementioned experiment procedure on the system $X_{t+1} = A(\theta') X_t + B(\theta') U_t + W_t$ for $N$ episodes of length $T$. Assume that $\rho(A(\theta^\star)) < 1$.   Consider applying any learning algorithm $\calA$ that maps a dataset \eqref{eq: dataset} to a controller $K$. Let $\varepsilon \in \R$ satisfy $0 \leq \varepsilon \leq {\varepsilon_{UB}}$, where $\varepsilon_{\mathsf{UB}} = \frac{1}{\mathsf{poly}(\norm{P(\theta^\star}, \tau_{B(\theta^\star)}, \norm{\mathsf{FI}(\theta^\star)}, \frac{1}{\lambda_{\min}(\mathsf{FI}(\theta^\star))})}$. Additionally suppose that $N \geq N_{LB}$ where $N_{LB} = \frac{1}{\varepsilon^2} \frac{1}{\lambda_{\min}(\mathsf{FI}(\theta^\star))} \mathsf{poly}(d_{\theta}, \norm{P(\theta)})$. It holds that 
    \begin{align*}
        &\sup_{\theta' \in \calB(\theta^\star, \varepsilon)} \mathbf{E}_{\mathsf{Data} \sim p(\theta', T, N)}\brac{C\paren{\calA\paren{\mathsf{Data}}, \theta'} - C(K(\theta'), \theta')} \geq \frac{1}{8} \trace\paren{H(\theta^\star) (N\mathsf{FI}(\theta^\star))^{-1}}.  
    \end{align*}
\end{theorem}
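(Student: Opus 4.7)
The plan is to use a standard Bayesian Cram\'er--Rao / Van Trees reduction combined with a local quadratic expansion of the LQR excess cost. The target bound $\tfrac{1}{8}\trace\paren{H(\theta^\star)(N\mathsf{FI}(\theta^\star))^{-1}}$ matches exactly the form produced by the multivariate Van Trees inequality with weighting matrix $H(\theta^\star)$, so the proof reduces to (a) connecting the excess cost to a weighted squared estimation error with weight $H(\theta^\star)$, and (b) invoking Van Trees with a prior concentrated on $\calB(\theta^\star, \varepsilon)$.

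First, I would lower-bound the supremum over $\calB(\theta^\star, \varepsilon)$ by an integral against a smooth prior $\pi$ supported strictly inside the ball, e.g.\ a product of one-dimensional bump densities rescaled to the interval of length $\varepsilon$. Such a prior has Fisher information $I(\pi) \lesssim d_\theta / \varepsilon^2$ and vanishing boundary values, which validates the Van Trees regularity hypotheses. Next, any controller $\hat K = \calA(\mathsf{Data})$ can be associated with a proxy parameter $\hat\theta$ via the relation $K(\hat\theta) = \hat K$ (restricted to a neighborhood where $\theta \mapsto K(\theta)$ is a local diffeomorphism, which holds on $\calB(\theta^\star, \varepsilon)$ for $\varepsilon$ small). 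Combining the performance-difference lemma \Cref{lem: performance difference} with a second-order Taylor expansion of $C(K(\tilde\theta), \theta')$ about $\tilde\theta = \theta'$ and using $\nabla_{\tilde\theta} C(K(\tilde\theta), \theta')|_{\tilde\theta=\theta'} = 0$ (optimality of $K(\theta')$) yields
\begin{align*}
C(\hat K, \theta') - C(K(\theta'), \theta') \;\geq\; \tfrac{1}{2}\norm{\hat\theta - \theta'}_{H(\theta')}^2 - L\norm{\hat\theta - \theta'}^3,
\end{align*}
where $L$ is polynomial in $\norm{P(\theta^\star)}$ and $\tau_{B(\theta^\star)}$. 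By \Cref{lem: helper lemma for RC}, for $\varepsilon \leq \varepsilon_{UB}$ small enough one has $H(\theta') \succeq \tfrac{1}{2} H(\theta^\star)$ uniformly on $\calB(\theta^\star, \varepsilon)$, and the cubic remainder is dominated by the quadratic term once $N \geq N_{LB}$ forces $\norm{\hat\theta - \theta'}$ to be small in the typical-event sense.

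Second, I would apply the multivariate Van Trees inequality to obtain
\begin{align*}
\int \mathbf{E}_{\theta'}\brac{(\hat\theta - \theta')(\hat\theta - \theta')^\top}\pi(\theta')\,d\theta' \;\succeq\; \paren{N\,\overline{\mathsf{FI}}(\pi) + I(\pi)}^{-1},
\end{align*}
where $\overline{\mathsf{FI}}(\pi) = \int \mathsf{FI}(\theta')\pi(\theta')\,d\theta'$. Continuity of $\mathsf{FI}(\theta')$ on $\calB(\theta^\star, \varepsilon)$, which follows from Lyapunov perturbation arguments analogous to those in \Cref{lem: lyap perturbation}, permits replacing $\overline{\mathsf{FI}}(\pi)$ by $2\mathsf{FI}(\theta^\star)$ at the cost of a factor of two. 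The burn-in $N \geq N_{LB}$ ensures $N\mathsf{FI}(\theta^\star) \succeq I(\pi)$, absorbing the prior contribution at the cost of another factor of two. Taking trace against $H(\theta^\star)$ and combining with the quadratic lower bound on the excess cost (which absorbs a further factor of two from $H(\theta') \succeq \tfrac{1}{2}H(\theta^\star)$) yields the stated constant $\tfrac{1}{8}$.

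The main obstacles are: (i) the implicit identification $\hat K \leftrightarrow \hat\theta$, which requires that the CE map $\theta \mapsto K(\theta)$ be locally invertible and that a controller outside its image can be assigned an equivalent $\hat\theta$ without loss --- this is the step where the assumption $\bmat{A(\theta) & B(\theta)} = \mathsf{vec}^{-1}(\theta, \dx)$ is essential; (ii) controlling the cubic remainder uniformly in $\theta' \in \calB(\theta^\star, \varepsilon)$ and over the high-probability event for $\hat\theta$, which couples the choices of $\varepsilon_{UB}$ and $N_{LB}$ through polynomial factors in $\norm{P(\theta^\star)}$, $\tau_{B(\theta^\star)}$, and $\lambda_{\min}(\mathsf{FI}(\theta^\star))^{-1}$; and (iii) verifying the smoothness/regularity hypotheses of Van Trees for the Gaussian linear dynamics likelihood, which is routine but must be handled carefully given that the experiment distribution $\rho(\theta', T, N)$ depends on $\theta'$ in a nonlinear way through the state covariance.
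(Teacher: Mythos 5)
Your overall architecture (bump prior on $\calB(\theta^\star,\varepsilon)$, Van Trees, local quadratic expansion of the excess cost, perturbation to move everything to $\theta^\star$) is the same as the paper's, but there are two genuine gaps. First, the identification $\hat K \leftrightarrow \hat\theta$ via ``$\theta\mapsto K(\theta)$ is a local diffeomorphism'' cannot work: the parameter lives in $\R^{d_\theta}$ with $d_\theta = \dx(\dx+\du)$ while the controller lives in $\R^{\du\times\dx}$, so the CE map sends a strictly higher-dimensional space to a lower-dimensional one and is never locally invertible; many distinct $\theta$ yield the same $K(\theta)$, and an arbitrary output $\hat K$ need not lie in the image at all. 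The paper sidesteps this entirely by applying the functional (Gill--Levit) form of the Van Trees inequality directly to the estimation of the functional $\theta'\mapsto \VEC K(\theta')$ by the controller $\calA(Z)$ itself, which is why the bound comes out in terms of $D_\theta\VEC K(\theta^\star)$ sandwiched around $(N\mathsf{FI}+J(\lambda))^{-1}$ and then reassembles into $H(\theta^\star)$; no proxy parameter is ever constructed. A repair of your route (e.g.\ defining $\hat\theta$ as a projection onto the image of the CE map) changes the constants and still requires quantitative lower bounds on $DK$ along the relevant directions, which you do not supply.

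Second, your treatment of the remainder/bad event is not just a routine detail. Van Trees bounds an unrestricted second moment, but the quadratic-minus-cubic lower bound on the excess cost is only valid (and only nonnegative) when the controller is close to the CE controllers on the ball; on the complementary event the pointwise bound can be negative, so one cannot simply take expectations. The paper inserts an indicator $\mathbf{1}_{\calE}$ of the good event into the Van Trees application, which costs a factor $\mathbf{P}(\calE)^2$, and then proves $\mathbf{P}(\calE)\geq 1/\sqrt{2}$ by a self-referential argument: assume the claimed lower bound fails, apply Markov's inequality to the (assumed small) expected excess cost to show the controller is close to $K(\Theta)$ with high probability, conclude $\calE$ holds often enough that the Van Trees bound kicks in, and derive a contradiction. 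Your phrase ``dominated \ldots in the typical-event sense'' is exactly the step that this bootstrap is needed to justify, and it is where the burn-in $N\geq N_{LB}$ and the restriction $\varepsilon\leq\varepsilon_{UB}$ actually earn their keep.
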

The above result follows from Theorem 2.2 of \citet{lee2023fundamental}; however, it is rewritten to demonstrate tight dependence on the system-theoretic quantities. A proof of this result is provided in below.

\begin{proof}

Denote the data by $Z$. Let $\lambda$ be a prior density over $\theta$ satisfying $\lambda(\theta) \propto (1 - \frac{1}{\varepsilon^2} \norm{\theta - \theta^\star}^2)^2$ for $\theta \in \calB(\theta^\star, \varepsilon)$. We may lower bound the minimax quantity by an expectation over the prior:
\begin{align*}
    &\sup_{\theta'\in\calB(\theta^\star, \varepsilon)} \bfE_{\mathsf{Z} \sim p(\theta', T, N)} \brac{C(\calA(Z), \theta') -C(K(\theta'), \theta') } \geq \bfE_{\Theta \sim \lambda} \bfE_{Z \sim p(\Theta, T, N)} \brac{C(\calA(Z), \Theta) -C(K(\Theta), \Theta) }.
    \end{align*}
Next, we will apply the performance difference lemma to lower bound the expected excess cost in terms of the gap between the controller output by our algorithm and the optimal controller. To do so, we condition on the event that our algorithm outputs a controller close enough to the certainty equivalent controllers within the ball. In particular, we define 
\begin{align*}
    \calE &= \curly{\sup_{\theta\in\calB(\theta^\star, \varepsilon)} \norm{\calA(Z) - K(\theta)} \leq \alpha }, \mbox{ where }
    \alpha = \inf_{\theta\in\calB(\theta^\star, \varepsilon)} \frac{1}{12 \norm{\Sigma^{K(\theta)}}^{5/2}}.
\end{align*}
Additionally define $\tilde \Psi$ and $\tilde \Sigma$ as the largest matrices in semidefinite order such that $\tilde \Sigma \preceq \Sigma^{K(\theta)}(\theta)$ and $\tilde \Psi \preceq \Psi(\theta)$ for all $\theta \in B(\theta^\star, \varepsilon)$. 
This allows us to apply \Cref{lem: performance difference} to achieve the following lower bound:
\begin{align*}
    &\sup_{\theta'\in\calB(\theta^\star, \varepsilon)} \bfE_{\mathsf{Z} \sim p(\theta', T, N)} \brac{C(\calA(Z), \theta') -C(K(\theta'), \theta') } \\
    &\geq  \bfE_{\Theta \sim \lambda} \bfE_{Z\sim p(\Theta, T, N)} \brac{ \trace\paren{(\calA(Z)  - K(\Theta)) \Sigma^{\calA(Z)}(\Theta) (\calA(Z) - K(\Theta))^\top \Psi(\Theta)} \mathbf{1}_{\calE}}, \\
    &\geq \frac{1}{2} \bfE_{\Theta \sim \lambda} \bfE_{Z \sim p(\Theta, T, N)} \brac{ \trace\paren{(\calA(Z) - K(\Theta)) \tilde \Sigma (\calA(Z) - K(\Theta))^\top \tilde \Psi} \mathbf{1}_{\calE}},
\end{align*}
where the final inequality follows from the fact that if $\calE$ holds, then \Cref{lem: lyap perturbation} ensures $\Sigma^{\calA(\mathsf{Data}}(\Theta) \succeq \frac{1}{2}\Sigma^{K(\Theta)}(\Theta),$ (\Cref{lem: cov lower bound}) and by substituting the lower bounds $\tilde \Sigma \preceq \Sigma^{K(\Theta)}(\Theta)$, $\tilde \Psi \preceq \Psi(\Theta)$. 

By application of the Van Trees inequality, as in Theorem 2.1 of \citet{lee2023fundamental}, it holds that 
\begin{align*}
    &\sup_{\theta'\in\calB(\theta^\star, \varepsilon)} \bfE_{Z \sim p(\theta', T, N)} \brac{C(\calA(Z), \theta') -C(K(\theta'), \theta') } \\
    &\geq \frac{1}{2} \trace\paren{ \paren{\tilde \Sigma \otimes \tilde \Psi}  \bfE \brac{D_{\theta} \VEC K(\Theta) \mathsf{1}_{\calE}} \paren{N\bfE\brac{\mathsf{FI}(\Theta)} + J(\lambda)}^{-1} \bfE \brac{D_{\theta} \VEC K(\Theta) \mathsf{1}_{\calE}} } \\
    &\geq \frac{1}{2} \inf_{\tilde \theta_1, \tilde \theta_2, \tilde\theta_3 \in \calB(\theta^\star, \varepsilon)} \trace\paren{ \paren{\tilde \Sigma \otimes \tilde \Psi}  D_{\theta} \VEC K(\tilde \theta_1)  \paren{N\mathsf{FI}(\tilde \theta_3) + J(\lambda)}^{-1} D_{\theta} \VEC K(\tilde \theta_2)}  \mathbf{P}(\calE)^2,
\end{align*}
where $J(\lambda) = \int \nabla \log \lambda(\theta) (\nabla \log \lambda(\theta))^\top \lambda(\theta) d\theta$ satisfies $\norm{J(\lambda)} \leq \frac{1}{\varepsilon^2} \frac{32}{d_\theta+2} \frac{\Gamma((d_\theta+5)/2)}{\Gamma(d_{\theta}/2)^2}$ (by the triangle inequality and direct calculation). Furthermore, first order Taylor expansions of $\Sigma^{K(\theta)}(\theta)\kron \Psi(\theta)$, $D_\theta \mathsf{vec} K(\theta)$, and $\mathsf{FI}(\theta)$ about $\theta^\star$, combined with the bounds of \Cref{lem: LQR Taylor expansion}, \Cref{lem: helper lemma for RC}, and \Cref{lem: lyap perturbation} we can express the above quantity as
\begin{align}\label{eq: lb perturbation}
    \frac{1}{2}  \trace\paren{ \paren{H(\theta^\star) + M_1}  (D_{\theta} \VEC K( \theta^\star) + M_2)  \paren{N\mathsf{FI}(\theta^\star) +  M_3+ J(\lambda)}^{-1} (D_{\theta} \VEC K(\theta^\star) + M_4)}  \mathbf{P}(\calE)^2,
\end{align} 
\sloppy where $\norm{M_1} \leq 1e7 \tau_{B(\theta^\star)}^2 \norm{P(\theta^\star)}^{15} \varepsilon$, $\norm{M_2} \leq 2000 \norm{P(\theta^\star)}^{15/2} \varepsilon$, $\norm{M_3} \leq  2N \norm{\mathsf{FI}(\theta^\star)} \varepsilon$ and $\norm{M_4} \leq 2000 \norm{P(\theta^\star)}^{15/2} \varepsilon$.

We will show that the above quantity is at least 
\begin{align*}
    \frac{1}{8}  \trace\paren{ \paren{H(\theta^\star)}  (D_{\theta} \VEC K( \theta^\star))  \paren{N\mathsf{FI}(\theta^\star)}^{-1} (D_{\theta} \VEC K(\theta^\star))}.
\end{align*}
To do so, assume to the contrary that the inequality does not hold. 
Under this assumption, we will show that $\bfP(\calE) \geq \frac{1}{\sqrt{2}}$. This follows by observing that $\sup_{\theta\in\calB(\theta^\star, \varepsilon)} \norm{\calA(Z) - K(\theta)}\leq \inf_{\theta\in B(\theta^\star, \varepsilon)} \norm{\calA(Z) - K(\theta)} + \sup_{\theta_1,\theta_2\in\calB(\theta^\star, \varepsilon))} \norm{K(\theta_1) - K(\theta_2)} \leq\inf_{\theta\in B(\theta^\star, \varepsilon)} \norm{\calA(Z) - K(\theta)}  + 64 \norm{P(\theta^\star)}^{7/2} \varepsilon$. By a sufficiently small choice of $\varepsilon$, we may show the desired condition holds if  $\inf_{\theta\in B(\theta^\star, \varepsilon)} \norm{\calA(Z) - K(\theta)} \leq \alpha/2$ with high probability. In particular, we may bound $\inf_{\theta\in B(\theta^\star, \varepsilon)} \norm{\calA(Z) - K(\theta)} \leq \sqrt{{\norm{\calA(Z) - K(\Theta)}^2}} \leq \sqrt{\trace((\calA(Z) - K(\Theta)) \Sigma^{\calA(Z)}(\Theta)(\calA(Z) - K(\Theta))^\top \Psi(\Theta)}$. This quantity is precisely the excess cost of the algorithm applied to dataset $Z$ on system $\Theta$. Markov's inequality then implies that this quantity exceeds $\alpha^2/4$ with probabiliy at most $\frac{4\bfE\brac{C(\calA(Z), \Theta) - C(K(\Theta), \Theta)}}{\alpha^2}$. Under our assumption, this probability can be bounded as
\begin{align*}
    P(\calE^c) \leq \frac{1}{2 \alpha^2}  \trace\paren{ \paren{H(\theta^\star)}  (D_{\theta} \VEC K( \theta^\star))  \paren{N\mathsf{FI}(\theta^\star)}^{-1} (D_{\theta} \VEC K(\theta^\star))}.
\end{align*}
For $N$ sufficiently large, as given in the statement, this implies that $P(\calE) \geq \frac{1}{\sqrt{2}}$. Then, by a sufficiently small choice of $\varepsilon$ as given in the theorem statement, \eqref{eq: lb perturbation} implies that the minimax excess cost exceeds $\frac{1}{8}  \trace\paren{ \paren{H(\theta^\star)}  (D_{\theta} \VEC K( \theta^\star))  \paren{N\mathsf{FI}(\theta^\star)}^{-1} (D_{\theta} \VEC K(\theta^\star))}$, contradicting the assumption that the excess cost falls below this quantity. 

\end{proof}

\section{Certainty Equivalence Upper Bound}
\label{s: certainty equivalence bound}
Prior work \citep{wagenmaker2021task} has demonstrated that certainty equivalence is efficient by providing upper bounds on the excess cost that match the lower bound of \Cref{thm: lower bound}. In particular, the following bound holds.
\begin{theorem}
    \label{thm: certainty equivalence bound}
    Suppose the dataset $\curly{(X_t^n, U_t^n, X_{t+1}^n)}_{t=1, n=1}^{T,N}$ is collected from N trajectories of the system \eqref{eq: linear system} via a random control input $U_t \sim \calN(0, \Sigma_u)$. Let $\hat\theta$ be the least square estimate computed by \eqref{eq: least squares}. Let also $\delta\in(0,1)$. Then it holds with probability at least $1-\delta$ that 
    \begin{align}
        &C(K_{CE}(\hat\theta), \theta^\star) - C(K(\theta^\star), \theta^\star) \nonumber\\
        &\leq 4\frac{\trace\paren{H(\theta^\star)\mathsf{FI}(\theta^\star)^{-1}}}{N} + 8\frac{\norm{H(\theta^\star)\mathsf{FI}(\theta^\star)^{-1}}}{N}\log\frac{2}{\delta} + L_{\mathsf{CE}}(\theta^\star)\frac{\norm{\mathsf{FI}(\theta^\star)^{-1}}^{3/2}}{N^{3/2}}, \label{eq:CE Upper bound}
    \end{align}
    where $L_{\mathsf{CE}}(\theta^\star) = 2e7\tau_{B(\theta^\star)}^3\norm{P(\theta^\star)}^{14}\paren{d_\theta+\log\frac{2}{\delta}}^{3/2}$, as long as the number of trajectories $N$ satisfies
     $   N \geq \max\curly{N_{\mathsf{ID}}, ~ 6e5(d_\theta+\log\frac{2}{\delta})\norm{\mathsf{FI}(\theta^\star)^{-1}}\norm{P(\theta^\star)}^{10}}.$
\end{theorem}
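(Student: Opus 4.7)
The plan is to follow the standard "sensitivity plus identification" template, but carefully instantiating the perturbation lemmas already assembled in the paper so that the leading constants match those in Theorem~\ref{thm: identification bound}. First I would apply \Cref{lem: excess cost decomposition} with $K=K_{CE}(\hat\theta)$ and $\theta=\theta^\star$, which reduces the excess cost to a quadratic trace term in $K(\hat\theta)-K(\theta^\star)$ plus cubic/quartic remainders scaling with $\|K(\hat\theta)-K(\theta^\star)\|$. To control that quadratic term, I would substitute the Taylor expansion of \Cref{lem: cost gap taylor substitution} (with $\theta_1=\theta^\star$, $\theta_2=\hat\theta$), which immediately yields
\begin{align*}
C(K_{CE}(\hat\theta),\theta^\star) - C(K(\theta^\star),\theta^\star) \;\leq\; \|\hat\theta-\theta^\star\|_{H(\theta^\star)}^2 \;+\; \textup{(remainders in $\|\hat\theta-\theta^\star\|^3, \|\hat\theta-\theta^\star\|^4$)}.
\end{align*}

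Next I would invoke \Cref{thm: identification bound} twice. Once with weight $H=H(\theta^\star)$ to bound the leading quadratic term by
$4\,\trace(H(\theta^\star)\mathsf{FI}(\theta^\star)^{-1})/N + 8\,\|H(\theta^\star)\mathsf{FI}(\theta^\star)^{-1}\|\log(2/\delta)/N$, which already produces the first two summands of \eqref{eq:CE Upper bound}. Once more with $H=I$ to bound $\|\hat\theta-\theta^\star\|^2 \lesssim (d_\theta+\log(2/\delta))\,\|\mathsf{FI}(\theta^\star)^{-1}\|/N$; raising this to the $3/2$ and $2$ powers absorbs the cubic and quartic remainders into a single $L_{\mathsf{CE}}(\theta^\star)\,\|\mathsf{FI}(\theta^\star)^{-1}\|^{3/2}/N^{3/2}$ term after pulling out the system-theoretic factors $\tau_{B(\theta^\star)}^2\|P(\theta^\star)\|^{13}$ and $\tau_{B(\theta^\star)}^2\|P(\theta^\star)\|^{17}$ that appear in \Cref{lem: cost gap taylor substitution}, plus the analogous factors from the cubic tail of \Cref{lem: excess cost decomposition}. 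The quartic contribution is dominated by the cubic one in the regime where $\|\hat\theta-\theta^\star\|\lesssim 1$, so only the cubic coefficient enters $L_{\mathsf{CE}}$.

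All of this is predicated on three prerequisites: (i) the least-squares bound of \Cref{thm: identification bound} is valid, which needs $N\geq N_{\mathsf{ID}}$; (ii) $\|\hat\theta-\theta^\star\|\leq \tfrac{1}{16}\|P(\theta^\star)\|^{-2}$ so the Riccati/Taylor lemmas apply (\Cref{lem: Riccati perturbation}, \Cref{lem: LQR Taylor expansion}, \Cref{lem: cost gap taylor substitution}); and (iii) $\|\hat\theta-\theta^\star\|\leq \tfrac{1}{256}\|P(\theta^\star)\|^{-5}$ so that \Cref{lem: CE stabilization} guarantees $K_{CE}(\hat\theta)$ stabilizes $\theta^\star$ with $\|\Sigma^{K_{CE}(\hat\theta)}(\theta^\star)\|\leq 2\|P(\theta^\star)\|$, which is what makes the cubic-tail bound of \Cref{lem: excess cost decomposition} finite and of the claimed order. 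Substituting the identification bound with $H=I$ into (ii) and (iii) and simplifying gives the stated burn-in $N\gtrsim (d_\theta+\log(2/\delta))\,\|\mathsf{FI}(\theta^\star)^{-1}\|\,\|P(\theta^\star)\|^{10}$, where the fifth power in the Riccati radius gets squared.

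The only subtle step will be the third one, bookkeeping the polynomial factors in $\tau_{B(\theta^\star)}$ and $\|P(\theta^\star)\|$ so that the cubic remainder collapses into a single clean $L_{\mathsf{CE}}$. Because \Cref{lem: excess cost decomposition} contributes $\|P(\theta^\star)\|^{2}\|\Sigma^{K_{CE}(\hat\theta)}(\theta^\star)\|\tau_{B(\theta^\star)}^3\|K_{CE}(\hat\theta)-K(\theta^\star)\|^3(\|B(\theta^\star)(K_{CE}(\hat\theta)-K(\theta^\star))\|+2\|P(\theta^\star)\|^{1/2})$, and \Cref{lem: Riccati perturbation} converts $\|K_{CE}(\hat\theta)-K(\theta^\star)\|$ into $\|P(\theta^\star)\|^{7/2}\|\hat\theta-\theta^\star\|$, one must verify the resulting power of $\|P(\theta^\star)\|$ is at most $14$ and the coefficient at most $2\cdot 10^{7}$ as stated. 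No fundamentally new estimate is needed beyond the perturbation lemmas already proved in the appendix; the work is purely in the algebraic consolidation.
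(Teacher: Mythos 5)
Your proposal follows the same route as the paper: the paper packages your steps (i)--(iii) into \Cref{lem: CE upper bound} (which chains \Cref{lem: excess cost decomposition}, \Cref{lem: cost gap taylor substitution}, \Cref{lem: CE stabilization}, and \Cref{lem: Riccati perturbation} to get $\|\hat\theta-\theta^\star\|_{H(\theta^\star)}^2 + 6e5\,\tau_{B(\theta^\star)}^3\|P(\theta^\star)\|^{14}\|\hat\theta-\theta^\star\|^3$ under the radius condition $\frac{1}{256}\|P(\theta^\star)\|^{-5}$), and then combines it with \Cref{thm: identification bound} exactly as you describe, deriving the burn-in by squaring that radius. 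The argument and all the constants you flag for verification match the paper's.
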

The above result is sharpened from Theorem 2.1 of \citet{wagenmaker2021task} to avoid a logarthmic factor of the state dimension. A proof is provided in \Cref{subsec: Proof of CE upper bound}. By inverting the above high probability tail bound, one can show that for $N$ sufficiently large, the following inequality holds:
\begin{align*}
    \mathbf{E}_{\mathsf{data}} \brac{C(K_{\mathsf{CE}}(\hat\theta), \theta^\star) - C(K(\theta^\star), \theta^\star)} \leq c \trace\paren{H(\theta^\star) \mathsf{FI}(\theta^\star)^{-1}},
\end{align*}
where $c$ is a universal positive constant. This matches the lower bound of \Cref{thm: lower bound} up to a universal constant. 

We first present a helping lemma that bounds the suboptimality gap in terms of a quadratic function of the parameter estimation error. 

\begin{lemma}
    \label{lem: CE upper bound}
    Suppose $\hat\theta$ is some parameter satisfying $\norm{\hat\theta-\theta^\star} \leq \frac{1}{256}\norm{P(\theta^\star)}^{-5}$. Then the excess cost of $K_{CE}(\hat\theta)$ would be
    \begin{align*}
        C(K_{CE}(\hat\theta), \theta^\star) - C(K(\theta^\star), \theta^\star) \leq \|\hat\theta - \theta^\star\|_{H(\theta^\star)}^2 +   L_{\mathsf{RC}}(\theta^\star)\norm{\hat\theta-\theta^\star}^3, 
    \end{align*}
    where 
    \begin{align*}
        L_{\mathsf{RC}}(\theta^\star) = 6e5\tau_{B(\theta^\star)}^3\norm{P(\theta^\star)}^{14}
    \end{align*}
\end{lemma}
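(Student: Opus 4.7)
The plan is to combine three results from the earlier perturbation toolkit: the Suboptimality Gap Bound (Lemma ``excess cost decomposition''), the CE Stabilization lemma, and the Taylor Expansion Substitution (Lemma ``cost gap taylor substitution''), applied with $K = K_{\mathsf{CE}}(\hat\theta) = K(\hat\theta)$ and $\theta = \theta^\star$.

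First I would verify the hypotheses. The assumption $\|\hat\theta - \theta^\star\| \leq \tfrac{1}{256}\|P(\theta^\star)\|^{-5}$ is stronger than the $\tfrac{1}{16}\|P(\theta^\star)\|^{-2}$ required by Lemma ``cost gap taylor substitution'' and Lemma ``Riccati perturbation''. Moreover, Lemma ``CE Stabilization'' applied to the pair $(\theta^\star, \hat\theta)$ certifies that $K(\hat\theta)$ stabilizes $\theta^\star$ and that $\|\Sigma^{K(\hat\theta)}(\theta^\star)\| \leq 2\|P(\theta^\star)\|$, so Lemma ``excess cost decomposition'' is applicable.

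Next, applying Lemma ``excess cost decomposition'' gives
\begin{align*}
    &C(K(\hat\theta),\theta^\star) - C(K(\theta^\star),\theta^\star) \\
    &\quad\leq \mathrm{tr}\paren{(K(\hat\theta)-K(\theta^\star))\Sigma_X^{K(\theta^\star)}(\theta^\star)(K(\hat\theta)-K(\theta^\star))^\top \Psi(\theta^\star)} + \mathrm{Rem},
\end{align*}
where $\mathrm{Rem}$ is the cubic-in-gain term appearing in that lemma. I would invoke Lemma ``cost gap taylor substitution'' on the leading trace (with $\theta_1 = \theta^\star$, $\theta_2 = \hat\theta$) to bound it by $\|\hat\theta - \theta^\star\|_{H(\theta^\star)}^2$ plus explicit cubic and quartic error terms in $\|\hat\theta - \theta^\star\|$ with coefficients of the form $\tau_{B(\theta^\star)}^2\|P(\theta^\star)\|^{13}$ and $\tau_{B(\theta^\star)}^2\|P(\theta^\star)\|^{17}$ respectively. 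The quartic term is absorbed into the cubic by $\|\hat\theta-\theta^\star\| \leq \tfrac{1}{256}\|P(\theta^\star)\|^{-5}$.

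To bound $\mathrm{Rem}$, I would use Lemma ``Riccati perturbation'' to estimate both $\|K(\hat\theta)-K(\theta^\star)\|$ and $\|B(\theta^\star)(K(\hat\theta)-K(\theta^\star))\|$ by $32\|P(\theta^\star)\|^{7/2}\|\hat\theta-\theta^\star\|$, and Lemma ``CE Stabilization'' to bound $\|\Sigma^{K(\hat\theta)}(\theta^\star)\|$. The smallness of $\|\hat\theta-\theta^\star\|$ lets me dominate $\|B(\theta^\star)(K(\hat\theta)-K(\theta^\star))\| + 2\|P(\theta^\star)\|^{1/2}$ by a constant multiple of $\|P(\theta^\star)\|^{1/2}$, producing an $O(\tau_{B(\theta^\star)}^3 \|P(\theta^\star)\|^{14}\|\hat\theta-\theta^\star\|^3)$ bound. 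Summing the contributions and collecting constants yields the stated $L_{\mathsf{RC}}(\theta^\star)$. The only real labor is the numerical bookkeeping to reach the particular constant $6 \times 10^5$; every structural estimate is furnished by the perturbation lemmas already proved.
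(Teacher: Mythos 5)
Your proposal is correct and follows essentially the same route as the paper's proof: decompose via the Suboptimality Gap Bound, bound the leading trace with the Taylor Expansion Substitution (yielding the $H(\theta^\star)$-quadratic form plus cubic/quartic errors), control the remainder with the Riccati Perturbation and CE Stabilization lemmas, and absorb higher-order terms using $\norm{\hat\theta-\theta^\star}\leq\frac{1}{256}\norm{P(\theta^\star)}^{-5}$. The only remaining work in both arguments is the arithmetic that produces the constant $6\times 10^{5}$.
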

\begin{proof}
    When $\|\hat\theta - \theta^\star\|\leq\frac{1}{16}\norm{P(\theta^\star)}^{-2}\leq \frac{1}{256} \norm{P(\theta^\star)}^{-5}$, 
    from \Cref{lem: Riccati perturbation}, \Cref{lem: CE stabilization}, \Cref{lem: excess cost decomposition} and \Cref{lem: cost gap taylor substitution}, we have
    \begin{align*}
        &C(K(\hat\theta), \theta^\star) - C(K(\theta^\star), \theta^\star)\\ 
        &\leq \|\hat\theta - \theta^\star\|_{H(\theta^\star)}^2 + 2 \norm{P(\theta^\star)}^2 \|\Sigma_X^{K(\hat\theta)}(\theta^\star)\|\tau_{B(\theta^\star)}^3\|(K(\hat{\theta})-K(\theta^\star))\|^3(\|B(\theta^\star)(K(\hat\theta-K(\theta^\star))\| + 2 \norm{P(\theta^\star)}^{1/2}) \\
        &\hspace{5mm}+ 2e5\tau_{B(\theta^\star)}^2\|P(\theta^\star)\|^{13}\|\hat\theta-\theta^\star\|^3 
        + 8e6\tau_{B(\theta^\star)}^2\|P(\theta^\star
        )\|^{17}\|\|\hat\theta-\theta^\star\|^4 \quad \mbox{(\Cref{lem: excess cost decomposition} and \Cref{lem: cost gap taylor substitution})} \\
        &\leq \|\hat\theta - \theta^\star\|_{H(\theta^\star)}^2 + 2^{17} \norm{P(\theta^\star)}^3\tau_{B(\theta^\star)}^3
        \norm{P(\theta^\star)}^{21/2} \norm{\hat\theta-\theta^\star}^3
        (32\norm{P(\theta^\star)}^{7/2} \norm{\hat\theta-\theta^\star} + 2 \norm{P(\theta^\star)}^{1/2}) \\
        &\hspace{5mm}+ 2e5\tau_{B(\theta^\star)}^2\|P(\theta^\star)\|^{13}\|\hat\theta-\theta^\star\|^3 
        + 8e6\tau_{B(\theta^\star)}^2\|P(\theta^\star
        )\|^{17}\|\|\hat\theta-\theta^\star\|^4 \quad \mbox{(\Cref{lem: CE stabilization} and \Cref{lem: Riccati perturbation})}\\
        &\leq \|\hat\theta - \theta^\star\|_{H(\theta^\star)}^2 + 
        \left(
        2^{18}\tau_{B(\theta^\star)}^3 + 2e5\tau_{B(\theta^\star)}^2
        \right)\norm{P(\theta^\star)}^{14}\norm{\hat\theta-\theta^\star}^3 \\
        &\hspace{5mm}+ \left(
        2^{22}\tau_{B(\theta^\star)}^3 + 8e6\tau_{B(\theta^\star)}^2
        \right)\norm{P(\theta^\star)}^{17}\norm{\hat\theta-\theta^\star}^4\\
        &\leq  \|\hat\theta - \theta^\star\|_{H(\theta^\star)}^2 +     6e5\tau_{B(\theta^\star)}^3\norm{P(\theta^\star)}^{14}\norm{\hat\theta-\theta^\star}^3, 
    \end{align*}
    where the final inequality follows from the fact that $\norm{\hat\theta-\theta^\star}\leq \frac{1}{256}\norm{P(\theta^\star)}^{-5}$.
\end{proof}

\subsection{Proof of \texorpdfstring{\Cref{thm: certainty equivalence bound}}{}}
\label{subsec: Proof of CE upper bound}
\begin{proof}
    From \Cref{lem: CE upper bound} and \Cref{thm: identification bound}, we get \eqref{eq:CE Upper bound} where we used $\trace\paren{\mathsf{FI}(\theta^\star)^{-1}}\leq\norm{\mathsf{FI}(\theta^\star)^{-1}}d_\theta$. Furthermore, from the closeness condition in \Cref{lem: CE upper bound}, i.e.  $\norm{\hat\theta-\theta^\star}\leq\frac{1}{256}\norm{P(\theta^\star)}^{-5}$, $N$ needs to satisfy the following condition:
    \begin{align*}
        &\norm{\hat\theta-\theta^\star}^2 \leq \frac{8(d_\theta+\log\frac{2}{\delta})\norm{\mathsf{FI}(\theta^\star)^{-1}}}{N}\leq\frac{1}{256^2}\norm{P(\theta^\star)}^{-10} \\
        &\iff N \geq 6e5(d_\theta+\log\frac{2}{\delta})\norm{\mathsf{FI}(\theta^\star)^{-1}}\norm{P(\theta^\star)}^{10}.
    \end{align*}
\end{proof}

\section{Proof of Robust Control Upper Bound}
\label{s: robust control proof}

\begin{lemma}
    \label{lem: Robust Control Upper Bound}
    Let $G$ denote the ellipsoid:
    \begin{align*}
        G \triangleq \{\theta ~:~ \theta=\hat \theta + \Gamma w, ~ w\in\mathcal{B}(0, 1)\}.
    \end{align*}
    Suppose $G$ is $R$-robustly stabilizable, the diameter of $G$ satisfies $\mathsf{diam}(G)\leq\frac{1}{16}\inf_{\theta\in G}\norm{P(\theta)}^{-2}$, and $\theta^\star\in G$. Then the excess cost of $K_{RC}(G)$ would be
    \begin{align*}
        C(K(G), \theta^\star) - C(K(\theta^\star), \theta^\star) \leq \sup_{\theta_1, \theta_2\in G}\paren{\norm{\theta_1-\theta_2}_{H(\theta^\star)} + L_{RC}(\theta^\star)\norm{\theta_1-\theta_2}^3}, 
    \end{align*}
    where
    \begin{align*}
        L_{\mathsf{RC}}(\theta^\star) = 6e6M\tau^3_{B(\theta^\star)}\norm{P(\theta^\star)}^{17}.
    \end{align*}
\end{lemma}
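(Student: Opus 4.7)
My proof proposal uses the minimax structure of the robust controller to reduce the bound to a pointwise certainty-equivalent perturbation bound applied over $G$. Since $\theta^\star \in G$ by hypothesis and $K_{\mathsf{RC}}(G)$ is the minimizer of the worst-case suboptimality gap over $G$, I can write
\begin{align*}
C(K_{\mathsf{RC}}(G), \theta^\star) - C(K(\theta^\star), \theta^\star)
&\leq \sup_{\theta \in G} \brac{C(K_{\mathsf{RC}}(G), \theta) - C(K(\theta), \theta)} \\
&\leq \sup_{\theta \in G} \brac{C(K(\theta'), \theta) - C(K(\theta), \theta)}
\end{align*}
for any fixed $\theta' \in G$. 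The $M$-robust stabilizability hypothesis guarantees that I can choose $\theta'$ such that $\norm{\Sigma^{K(\theta')}(\theta)} \leq M$ for all $\theta \in G$, which is what keeps the right-hand side finite and introduces the $M$-dependence in the final constant $L_{\mathsf{RC}}(\theta^\star)$.

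Next I would fix $\theta \in G$ and apply the pointwise CE perturbation analysis already developed in the appendix. Since $\norm{\theta - \theta'} \leq \mathsf{diam}(G) \leq \tfrac{1}{16} \inf_{\tilde\theta\in G}\norm{P(\tilde\theta)}^{-2}$, Lemma on Riccati Perturbation applies and the Suboptimality Gap Bound combined with the Taylor Expansion Substitution yields
\begin{align*}
C(K(\theta'), \theta) - C(K(\theta), \theta) \leq \norm{\theta' - \theta}_{H(\theta)}^{2} + \tilde L(\theta)\, \norm{\theta' - \theta}^{3},
\end{align*}
where, crucially, the cubic constant $\tilde L(\theta)$ carries a factor of $\norm{\Sigma^{K(\theta')}(\theta)} \leq M$ coming from Lemma D (Suboptimality Gap Bound), rather than $\norm{P(\theta')}$ as in the standard CE proof. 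The remaining $\norm{P(\theta)}$ and $\tau_{B(\theta)}$ factors are the same as in the CE analysis in Appendix E.

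The last step is to express the bound in terms of $H(\theta^\star)$ rather than $H(\theta)$. Using the perturbation estimate $\norm{H(\theta) - H(\theta^\star)} \leq 5\mathrm{e}6\, \tau_{B(\theta^\star)}^{2} \norm{P(\theta^\star)}^{17} \norm{\theta - \theta^\star}$ from Lemma on Perturbation of $B(\theta), H(\theta)$, together with $\norm{\theta-\theta^\star} \leq \mathsf{diam}(G)$ and $\norm{\theta'-\theta}\leq\mathsf{diam}(G)$, I get
\begin{align*}
\norm{\theta' - \theta}_{H(\theta)}^{2} \leq \norm{\theta' - \theta}_{H(\theta^\star)}^{2} + 5\mathrm{e}6\, \tau_{B(\theta^\star)}^{2} \norm{P(\theta^\star)}^{17} \norm{\theta'-\theta}^{3}.
\end{align*}
The same lemma lets me swap $\norm{P(\theta)}$ and $\tau_{B(\theta)}$ in $\tilde L(\theta)$ for $\norm{P(\theta^\star)}$ and $\tau_{B(\theta^\star)}$ at the cost of universal constants. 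Consolidating and taking the supremum over $\theta$, and then relaxing $\theta'$ to a free second variable $\theta_1$ while renaming $\theta \mapsto \theta_2$, yields the claimed bound with $L_{\mathsf{RC}}(\theta^\star) = 6\mathrm{e}6\, M\, \tau_{B(\theta^\star)}^{3} \norm{P(\theta^\star)}^{17}$.

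The main obstacle is bookkeeping: ensuring that every constant arising from the three perturbation lemmas (Riccati, Suboptimality Gap, Hessian perturbation) collapses into the single cubic term while the leading quadratic term is expressed cleanly in $H(\theta^\star)$. The conceptual leverage point is that the $M$-robust stabilizability replaces $\norm{P(\theta')}^{5}$-type quantities (which would otherwise govern the covariance of the controller $K(\theta')$ applied to the off-nominal system $\theta$) by $M$, and this is precisely where the improved burn-in of robust control over certainty equivalence comes from.
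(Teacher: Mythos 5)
Your proposal is correct and follows essentially the same route as the paper: bound the robust controller's worst-case suboptimality gap by that of the certainty-equivalent controller at the robustly-stabilizing parameter $\theta'\in G$, invoke the suboptimality gap bound with $\norm{\Sigma^{K(\theta')}(\theta)}\leq M$ in place of the $\norm{P}$-based covariance bound, apply the Taylor expansion substitution, and swap $H(\theta)$ for $H(\theta^\star)$ via the Hessian perturbation lemma before relaxing to a supremum over two free parameters. The paper merely phrases the first step as $\inf_{\tilde\theta_1}\sup_{\tilde\theta_2,\theta}$ with the covariance index decoupled from the controller index, which is the same device you use by fixing $\theta'$ and relaxing at the end.
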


\begin{proof}
    By the definition of the robust controller and \Cref{lem: excess cost decomposition},
    \begin{align*}
        &C(K(G), \theta^\star) - C(K(\theta^\star), \theta^\star) \\
        &\leq \sup_{\theta\in G}C(K(G), \theta) - C(K(\theta), \theta) = \inf_K\sup_{\theta\in G}C(K, \theta) - C(K(\theta), \theta) \\
        &\leq \inf_K\sup_{\theta\in G} \trace ((K-K(\theta))\Sigma_X^{K(\theta)}(\theta)(K-K(\theta))^T\Psi(\theta))\\
        &\hspace{5mm}+ 2 \norm{P(\theta)}^2 \norm{\Sigma_X^{K}(\theta)}\tau_{B(\theta)}^3\norm{K-K(\theta)}^3(\norm{B(\theta)(K-K(\theta))} + 2 \norm{P(\theta)}^{1/2}) \\
        &\triangleq \inf_K\sup_{\theta\in G}\mathsf{UB}(K, \Sigma^K_{\theta}, \theta),
    \end{align*}
    where we defined $\mathsf{UB}(K, \Sigma^K_{\theta}, \theta)$ as in the last equality.  
    Now restricting our policy $K$ to CE policies $K(\tilde\theta)$ yields
    \begin{align*}
        C(K(G), \theta^\star) - C(K(\theta^\star, \theta^\star) 
        &\leq \inf_{\tilde\theta\in G}\sup_{\theta\in G}\mathsf{UB}(K(\tilde\theta), \Sigma^{K(\tilde\theta)}(\theta), \theta) \\
        &\leq \inf_{\tilde\theta_1\in G}\sup_{\tilde\theta_2, \theta\in G}\mathsf{UB}(K(\tilde\theta_2), \Sigma^{K(\tilde\theta_1)}(\theta), \theta), 
    \end{align*}
    where we took infimum only over $K(\tilde\theta_1)$ in the last inequality. 
    Then since $\mathsf{diam}(G)\leq\frac{1}{16}\inf_{\theta\in G}\norm{P(\theta)}^{-2}$, $\norm{\tilde\theta_2-\theta}\leq\frac{1}{16}\norm{P(\theta)}^{-2}$ for any $\tilde\theta_2, \theta\in G$. 
    Thus we can apply \Cref{lem: cost gap taylor substitution} and the first term of $\mathsf{UB}(K(\tilde\theta_2), \Sigma^{K(\tilde\theta_1)}(\theta), \theta)$ results in 
    \begin{align*}
        &\sup_{\tilde\theta_2, \theta\in G}\trace ((K(\tilde\theta_2)-K(\theta))\Sigma_X^{K(\theta)}(\theta)(K(\tilde\theta_2)-K(\theta))^T\Psi(\theta)) \\
        &\leq \sup_{\tilde\theta_2, \theta\in G}\norm{\tilde\theta_2-\theta}_{H(\theta)}^2 + 2e5\tau_{B(\theta)}^2\norm{P(\theta)}^{13}\norm{\tilde\theta_2-\theta}^3 + 8e6\tau_{B(\theta)}^2\norm{P(\theta)}^{17}\|\norm{\tilde\theta_2-\theta}^4 .
    \end{align*} 
    From \Cref{lem: helper lemma for RC}, 
    \begin{align*}
        \sup_{\tilde\theta_2, \theta\in G}\norm{\tilde\theta_2-\theta}_{H(\theta)}^2
        &= \sup_{\tilde\theta_2, \theta\in G}\paren{\tilde\theta_2-\theta}^TH(\theta)\paren{\tilde\theta_2-\theta} \\
        &\leq  \sup_{\tilde\theta_2, \theta\in G}\norm{\tilde\theta_2-\theta}_{H(\theta^\star)} + 5e6\tau_{B(\theta^\star)}^2\norm{P(\theta^\star)}^{17}\norm{\tilde\theta_2-\theta}^3.
    \end{align*}
    
    For the second term, from \Cref{lem: Riccati perturbation} and robust stabilizability of $G$, we get
    \begin{align*}
        &\inf_{\tilde\theta_1\in G}\sup_{\tilde\theta_2, \theta\in G}2 \norm{P(\theta)}^2 \norm{\Sigma_X^{K(\tilde\theta_1)}(\theta)}\tau_{B(\theta)}^3\norm{K(\tilde\theta_2)-K(\theta)}^3\left(\norm{B(\theta)(K(\tilde\theta_2)-K(\theta))} + 2 \norm{P(\theta)}^{1/2}\right) \\
        &\leq \sup_{\tilde\theta_2, \theta\in G}2\cdot32^3M\tau^3_{B(\theta)}\norm{P(\theta)}^{21/2}\norm{\tilde\theta_2-\theta}^3\left(32\norm{P(\theta)}^{7/2}\norm{\tilde\theta_2-\theta} + 2\norm{P(\theta)}^{1/2}\right) \\
        &\leq 2^{17}M\tau^3_{B(\theta)}\norm{P(\theta)}^{11}\norm{\tilde\theta_2-\theta}^3 + 2^{21}M\tau^3_{B(\theta)}\norm{P(\theta)}^{14}\norm{\tilde\theta_2-\theta}^4.
    \end{align*}
    By applying $\norm{\tilde\theta_2-\theta}\leq\frac{1}{16}\norm{P(\theta)}^{-2}$ and grouping the term, we get
    \begin{align*}
        &C(K(G), \theta^\star) - C(K(\theta^\star, \theta^\star) \leq \sup_{\tilde\theta_2, \theta\in G}\norm{\tilde\theta_2-\theta}_{H(\theta^\star)} + 6e6M\tau^3_{B(\theta^\star)}\norm{P(\theta^\star)}^{17}\norm{\tilde\theta_2-\theta}^3.
    \end{align*}
\end{proof}

\subsection{Proof of \texorpdfstring{\Cref{thm: Robust Control upper bound}}{}}
\label{subsec: proof of RC upper bound}
\begin{proof}
    From \Cref{thm: identification bound}, 
    \begin{align*}
        (\hat\theta-\theta^\star)^\top \hat{\mathsf{FI}}N (\hat\theta - \theta^\star) &\leq 8\trace\paren{\hat{\mathsf{FI}} \times \mathsf{FI}(\theta^\star)^{-1}} + 8\norm{\hat{\mathsf{FI}}\times\mathsf{FI}(\theta^\star)^{-1}}\log\frac{2}{\delta} \\
        &\leq 16\paren{d_\theta+\log\frac{2}{\delta}},
    \end{align*}
    where we used $\hat{\mathsf{FI}}\preceq2\mathsf{FI}(\theta^\star)$. Thus $\theta^\star\in G$ with probability at least $1-\delta$. Also, from the robust stabilizability condition and the diameter condition in \Cref{lem: Robust Control Upper Bound}, i.e., $\mathsf{diam}(G)\leq\frac{1}{16}\inf_{\theta\in G}\norm{P(\theta)}^{-2}$, $N$ must satisfy
    \begin{align*}
        &\sup_{\theta_1, \theta_2\in G}\norm{\theta_1-\theta_2}^2
        \leq \frac{32\paren{d_\theta+\log\frac{2}{\delta}}}{N\lambda_{\min}\paren{\hat{\mathsf{FI}}}} 
        \leq \min\curly{\frac{1}{256}\inf_{\theta\in G}\norm{P(\theta)}^{-4}, r^2} \\
        &\iff N \geq \max\curly{\frac{2e4\norm{P(\theta^\star)}^4\paren{d_\theta+\log\frac{2}{\delta}}}{\lambda_{\min}\paren{\mathsf{FI}(\theta^\star)}}, \frac{64 \paren{d_\theta+\log\frac{2}{\delta}}}{\mathsf{r}^2 \lambda_{\min}\paren{\mathsf{FI}(\theta^\star)}}}.
    \end{align*}
    where we applied $0.5\mathsf{FI}(\theta^\star)\preceq\hat{\mathsf{FI}}(\theta^\star)$ and \Cref{lem: Riccati perturbation} to get the last inequality. 
    Furthermore,it follows from \eqref{eq: confidence ellipsoid} that
    \begin{align*}
        \sup_{\theta_1, \theta_2\in G}\norm{\theta_1-\theta_2}_{H\paren{\theta^\star}} 
        &\leq \frac{64\paren{d_\theta+\log\frac{2}{\delta}}\norm{H(\theta^\star)\mathsf{FI}(\theta^\star)^{-1}}}{N}.
    \end{align*}
    which, combined with \Cref{lem: Robust Control Upper Bound}, result in the upper bound in \Cref{thm: Robust Control upper bound}. 
\end{proof}

\section{Proof of Domain Randomization Upper Bound}
\label{s: domain randomization proofs}

We begin our proof with an intermediate results that characterizes the average excess cost incurred by the domain randomized controller relative to the optimal costs of all systems in the randomization domain.  
 
\begin{lemma}
    \label{lem: DR objective upper bound}
    Let $G$ denote the ellipsoid:
    \begin{align*}
        G \triangleq \{\theta ~:~ \theta=\hat \theta + \Gamma w, ~ w\in\mathcal{B}(0, 1)\},
    \end{align*} 
    Suppose the diameter of $G$ satisfies $\mathsf{diam(G)}\leq\frac{1}{256}\inf_{\theta\in G}\norm{P(\theta)}^{-5}$ and $\theta^\star\in G$. 
    Let $\calD$ be a distribution of system parameters over $G$. Then it holds that
    \begin{align*}
        \mathbf{E}_{\theta \sim \calD} 
        \brac{C(K_{\mathsf{DR}}(\calD), \theta) - C(K(\theta), \theta)} \leq \trace\paren{\mathbf{V}(\calD) H(\theta^\star)} + L_{\mathsf{DR,1}}(\theta^\star)  \norm{\Gamma}^3,
    \end{align*}
    where 
    \begin{align*}
        L_{\mathsf{DR,1}}(\theta^\star) = 1.7e8\tau_{B(\theta^\star)}^3\norm{P(\theta^\star)}^{17}.
    \end{align*} 
\end{lemma}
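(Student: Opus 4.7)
The plan is to exploit the optimality of $K_{\mathsf{DR}}(\calD)$ to reduce to a pointwise certainty equivalence bound. Specifically, let $\bar\theta = \mathbf{E}_{\theta\sim\calD}[\theta]$, which lies in $G$ since $G$ is convex. By the definition \eqref{eq: domain randomization},
\begin{align*}
\mathbf{E}_{\theta\sim\calD}[C(K_{\mathsf{DR}}(\calD),\theta)] \leq \mathbf{E}_{\theta\sim\calD}[C(K(\bar\theta),\theta)],
\end{align*}
so the quantity to bound is $\mathbf{E}_{\theta\sim\calD}[C(K(\bar\theta),\theta) - C(K(\theta),\theta)]$. For each $\theta \in G$, the diameter assumption on $G$ gives $\|\bar\theta - \theta\| \leq \mathsf{diam}(G) \leq \frac{1}{256}\|P(\theta)\|^{-5}$, so \Cref{lem: CE upper bound} (applied with ``true'' system $\theta$ and synthesis parameter $\bar\theta$) yields
\begin{align*}
C(K(\bar\theta),\theta) - C(K(\theta),\theta) \leq \|\bar\theta-\theta\|_{H(\theta)}^2 + L_{\mathsf{RC}}(\theta)\|\bar\theta-\theta\|^3.
\end{align*}

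Next I would take the expectation over $\theta \sim \calD$ term by term. For the quadratic piece, the standard covariance identity gives
\begin{align*}
\mathbf{E}_{\theta\sim\calD}[\|\bar\theta-\theta\|_{H(\theta^\star)}^2] = \trace\paren{H(\theta^\star) \mathbf{V}(\calD)},
\end{align*}
which is exactly the target leading term. To replace $H(\theta)$ by $H(\theta^\star)$ in the first expression, I would use the Lipschitz estimate $\|H(\theta)-H(\theta^\star)\| \leq 5\mathrm{e}6\,\tau_{B(\theta^\star)}^2\|P(\theta^\star)\|^{17}\|\theta-\theta^\star\|$ from \Cref{lem: helper lemma for RC}, so that
\begin{align*}
\|\bar\theta-\theta\|_{H(\theta)}^2 \leq \|\bar\theta-\theta\|_{H(\theta^\star)}^2 + 5\mathrm{e}6\,\tau_{B(\theta^\star)}^2\|P(\theta^\star)\|^{17}\|\theta-\theta^\star\|\,\|\bar\theta-\theta\|^2.
\end{align*}
Since $\theta^\star, \bar\theta, \theta \in G$, each of $\|\theta-\theta^\star\|$ and $\|\bar\theta-\theta\|$ is bounded by $\mathsf{diam}(G) \lesssim \|\Gamma\|$, so the error term is cubic in $\|\Gamma\|$.

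For the cubic piece $L_{\mathsf{RC}}(\theta)\|\bar\theta-\theta\|^3$, the closeness condition on $G$ together with \Cref{lem: Riccati perturbation} and \Cref{lem: helper lemma for RC} implies $\|P(\theta)\| \leq \sqrt{2}\|P(\theta^\star)\|$ and $\tau_{B(\theta)}$ is comparable to $\tau_{B(\theta^\star)}$ uniformly on $G$, so $L_{\mathsf{RC}}(\theta) \leq c\,\tau_{B(\theta^\star)}^3\|P(\theta^\star)\|^{14}$ for a universal constant $c$. Combining the two cubic contributions and using $\|\bar\theta-\theta\| \leq \|\Gamma\|$ absorbs everything into a single term of the form $L_{\mathsf{DR,1}}(\theta^\star)\|\Gamma\|^3$ with the stated constant.

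The main obstacle is book-keeping: tracking the constants and powers of $\|P(\theta^\star)\|$ and $\tau_{B(\theta^\star)}$ through the two replacements $H(\theta)\to H(\theta^\star)$ and $L_{\mathsf{RC}}(\theta)\to L_{\mathsf{RC}}(\theta^\star)$, while ensuring the diameter assumption is strong enough in each invocation (the $\frac{1}{256}\|P(\theta)\|^{-5}$ bound vs.\ the $\frac{1}{16}\|P(\theta)\|^{-2}$ bound used in the Lipschitz lemmas). No deeper structural idea is required beyond the ``plug in the mean'' trick enabled by the definition of $K_{\mathsf{DR}}(\calD)$.
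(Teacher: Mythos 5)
Your proposal is correct and follows essentially the same route as the paper: exploit the optimality of $K_{\mathsf{DR}}(\calD)$ by plugging in a fixed certainty-equivalence controller at the center of the distribution, apply \Cref{lem: CE upper bound} pointwise, use the Lipschitz bound on $H$ from \Cref{lem: helper lemma for RC} to pass from $H(\theta)$ to $H(\theta^\star)$, and absorb all remaining terms into the cubic $\norm{\Gamma}^3$ error. The only cosmetic difference is that you plug in $K(\bar\theta)$ at the mean of $\calD$ while the paper uses $K(\hat\theta)$ at the ellipsoid center; these coincide in the paper's intended usage (the distribution has mean $\hat\theta$), and your choice makes the covariance identity $\mathbf{E}\|\bar\theta-\theta\|_{H(\theta^\star)}^2=\trace(H(\theta^\star)\mathbf{V}(\calD))$ exact without that assumption.
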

\begin{proof}
    By the definition of the domain randomized controller
    \begin{align*}
        \mathbf{E}_{\theta \sim \calD} 
        \brac{C(K_{\mathsf{DR}}(\calD), \theta) - C(K(\theta), \theta)} &= \min_{K} \mathbf{E}_{\theta \sim \calD} 
        \brac{C(K , \theta) - C(K(\theta), \theta)} \\
        &= \min_{K} \mathbf{E}_{\theta \sim \calD} \trace\paren{(K-K(\theta))\Sigma^K(\theta) (K-K(\theta))^\top \Psi(\theta)}
    \end{align*}
    where the second equality holds from \Cref{lem: performance difference}. 
    Then we can plug in the certainty equivalence controller defined by $K(\hat \theta)$ to achieve an upper bound:
    \begin{align*}
        \mathbf{E}_{\theta \sim \calD} 
        \brac{C(K_{\mathsf{DR}}(\calD), \theta) - C(K(\theta), \theta)} &\leq \mathbf{E}_{\theta \sim \calD} \trace\paren{(K(\hat\theta)-K(\theta))\Sigma^{K(\hat\theta)}(\theta) (K(\hat\theta)-K(\theta))^\top \Psi(\theta)},
    \end{align*}
    where we observe that due to the restricted diameter of $G$, \Cref{lem: CE stabilization} ensures $K(\hat\theta)$ stabilizes all system instances in the support of the distribution. In particular, by substituting the bound of \Cref{lem: CE upper bound} into the above inequality, it holds that
    \begin{align}
        \mathbf{E}_{\theta \sim \calD} 
        \brac{C(K_{\mathsf{DR}}(\calD), \theta) - C(K(\theta), \theta)} \leq \mathbf{E}_{\theta \sim \calD}  \brac{\|\hat\theta - \theta\|_{H(\theta)}^2}  +   \mathbf{E}_{\theta \sim \calD}  \brac{6e5\tau_{B(\theta)}^3\norm{P(\theta)}^{14}\norm{\hat\theta-\theta}^3}. \label{eq: DR objective upper bound with theta}
    \end{align}
    The second term may be bounded by leveraging the radius of the support for $\calD$. In particular, it holds that for all $\theta\in G$, $\norm{\hat \theta - \theta}^3 \leq \norm{\Gamma}^3$. 
    From \Cref{lem: helper lemma for RC}, the first term can be bounded by
    \begin{align*}
        \norm{\hat\theta-\theta}_{H(\theta)} \leq \norm{\hat\theta-\theta}_{H(\theta^\star)} + 5e6\tau_{B(\theta^\star)}^2\norm{P(\theta^\star)}^{17}\norm{\Gamma}^3.
    \end{align*}
    Consequently the expectation evaluates to the trace of the variance of the distribution $\calD$, weighted by $H(\theta^\star)$. Combining these bounds leads to the inequality in the statement. 
\end{proof}

We now leverage the statement about the the domain randomization objective to characterize several system theoretic quantities for the system $\theta^\star$ under the the domain randomized controller. 
\begin{lemma}
    \label{lem: DR helper lemmas}
    Let $G$ denote the ellipsoid:
    \begin{align*}
        G \triangleq \{\theta ~:~ \theta=\hat \theta + \Gamma w, ~ w\in\mathcal{B}(0, 1)\}.
    \end{align*} 
    Suppose the diameter of $G$ satisfies $\mathsf{diam(G)}\leq\frac{1}{256}\inf_{\theta\in G}\norm{P(\theta)}^{-5}$ and $\theta^\star\in G$. 
    Let $\calD$ be a distribution of system parameters over $G$. Let 
    \begin{align*}
        \textnormal{DR Objective } \triangleq  \bfE_{\theta\sim\calD} \brac{\trace\paren{(K_{\mathsf{DR}}(\calD) - K(\theta)) \Sigma^{K_{\mathsf{DR}}(\calD)}(\theta)  (K_{\mathsf{DR}}(\calD) - K(\theta)) \Psi(\theta)}}.
    \end{align*}
    It holds that 
    \begin{itemize}
        \item $\bfE_{\theta\sim\calD} \norm{K_{\mathsf{DR}}(\calD)- K(\theta)}^2 \leq \textnormal{DR Objective}$
        \item $\norm{K_{\mathsf{DR}}(\calD) - K(\theta^\star)} \leq \textnormal{DR Objective} + 32 \norm{P(\theta^\star)}^{7/2} \norm{\Gamma}$
        \item $\norm{K_{\mathsf{DR}}(\calD)}\leq \textnormal{DR Objective} + 2\norm{P(\theta^\star)}^{1/2}$ 
        \item $\textnormal{DR Objective} \leq 2.6e3\tau^3_{B(\theta^\star)}\norm{P(\theta^\star)}^4 \norm{\Gamma}$
        \item $\norm{\Sigma^{K_{\mathsf{DR}}(\calD)}(\theta)}\leq2\norm{\Sigma^{K_{\mathsf{DR}}(\calD)}(\theta^\star)}$ if $\mathsf{diam}(G)\leq\frac{1}{8\norm{\Sigma^{K_{\mathsf{DR}}}(\calD)(\theta^\star)}^{3/2}\paren{1+\norm{K_{\mathsf{DR}}}}^2}$
        \item $\norm{\Sigma^{K_{\mathsf{DR}}(\calD)}(\theta^\star)} 
        \leq 2\norm{\Sigma^{K(\theta^\star)}(\theta^\star)}$ if $\mathsf{diam}(G)\leq\frac{1}{6.4e4\tau^8_{B(\theta^\star)}\norm{P(\theta^\star)}^4\norm{\Sigma^{K(\theta^\star)}(\theta^\star)}^{3/2}}$. 
    \end{itemize}
\end{lemma}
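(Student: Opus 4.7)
The plan is to work through the six bullets roughly in the listed order, since each inequality feeds into the next. Throughout, I would exploit the standing assumptions $\Sigma_w = I$, $R = I$, $Q \succeq I$, which give $\Sigma^{K}(\theta) \succeq I$ (as a discrete Lyapunov solution with $I$ on the right) and $\Psi(\theta) = B(\theta)^\top P(\theta) B(\theta) + R \succeq I$.

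For the first bullet, observe that for any $K$ and any $\theta$ in the support of $\calD$,
\begin{align*}
\trace\!\paren{(K - K(\theta))\Sigma^{K}(\theta)(K-K(\theta))^\top \Psi(\theta)} \geq \trace\!\paren{(K-K(\theta))(K-K(\theta))^\top} = \norm{K-K(\theta)}_F^2,
\end{align*}
which is at least $\norm{K-K(\theta)}^2$. Taking expectation under $\calD$ with $K = K_{\mathsf{DR}}(\calD)$ gives bullet (1). Bullet (4) follows because the DR Objective is exactly the quantity bounded in \Cref{lem: DR objective upper bound} via the performance difference lemma. Since $\calD$ is supported on a ball of radius $\norm{\Gamma}$, one has $\mathbf{V}(\calD) \preceq \norm{\Gamma}^2 I$, so $\trace(\mathbf{V}(\calD) H(\theta^\star)) \leq \norm{\Gamma}^2 \trace(H(\theta^\star))$. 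Combining with the bound $\norm{H(\theta^\star)} \lesssim \tau_{B(\theta^\star)}^2 \norm{P(\theta^\star)}^9$ from \Cref{lem: helper lemma for RC}, absorbing the cubic remainder $L_{\mathsf{DR},1}(\theta^\star)\norm{\Gamma}^3$ using the diameter hypothesis $\mathsf{diam}(G)\leq \frac{1}{256}\norm{P}^{-5}$, and simplifying yields the claimed bound.

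For bullets (2) and (3), I would first use bullet (1) together with Jensen's inequality (or an averaging argument) to conclude that there exists $\theta_0$ in the support of $\calD$ with $\norm{K_{\mathsf{DR}}(\calD) - K(\theta_0)}^2 \leq \textnormal{DR Objective}$. (This reading is consistent with the stated bound if one interprets the right-hand side as majorizing the square-root version when DR Objective $\leq 1$, which follows from bullet (4) and the diameter hypothesis.) Then the triangle inequality gives $\norm{K_{\mathsf{DR}}(\calD) - K(\theta^\star)} \leq \norm{K_{\mathsf{DR}}(\calD) - K(\theta_0)} + \norm{K(\theta_0) - K(\theta^\star)}$; the second term is bounded by $32\norm{P(\theta^\star)}^{7/2}\norm{\theta_0 - \theta^\star} \leq 32\norm{P(\theta^\star)}^{7/2}\norm{\Gamma}$ via the Riccati perturbation bound of \Cref{lem: Riccati perturbation}, giving (2). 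Adding $\norm{K(\theta^\star)} \leq \norm{P(\theta^\star)}^{1/2}$ from \Cref{lem: simplifying inequalities} and bounding the leading $\norm{\Gamma}$ term by $\norm{P(\theta^\star)}^{1/2}$ using the diameter hypothesis gives (3).

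For bullets (5) and (6), I would apply the Lyapunov perturbation identity (\Cref{lem: lyap perturbation}) twice. For bullet (5), I compare $\Sigma^{K_{\mathsf{DR}}(\calD)}(\theta)$ with $\Sigma^{K_{\mathsf{DR}}(\calD)}(\theta^\star)$ at fixed controller: the closed-loop difference is $(A(\theta)-A(\theta^\star)) + (B(\theta)-B(\theta^\star))K_{\mathsf{DR}}(\calD)$, whose norm is at most $\mathsf{diam}(G)(1+\norm{K_{\mathsf{DR}}(\calD)})$. Under the stated diameter hypothesis, the perturbation bound from \Cref{lem: cov lower bound}-style reasoning shows the two covariances are within a factor of two. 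For bullet (6), the same lemma is applied instead with fixed system $\theta^\star$ but varying controller, between $K_{\mathsf{DR}}(\calD)$ and $K(\theta^\star)$; here the closed-loop difference is $B(\theta^\star)(K_{\mathsf{DR}}(\calD) - K(\theta^\star))$, whose norm is controlled by combining bullet (2), bullet (4), and $\norm{B(\theta^\star)} \leq \tau_{B(\theta^\star)}$. The main obstacle I anticipate is the careful bookkeeping required to ensure that the diameter hypotheses in (5) and (6) really do make the Lyapunov perturbations contract to within a multiplicative factor of two --- in particular, feeding the somewhat loose bound on $\norm{K_{\mathsf{DR}}(\calD)}$ from (3) through the perturbation quadratic and confirming that the diameter thresholds quoted in the lemma are sufficient, rather than merely necessary.
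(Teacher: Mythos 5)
Your overall route matches the paper's: bullet (1) from the performance difference lemma plus $\Sigma^{K}(\theta)\succeq I$, $\Psi(\theta)\succeq I$; bullets (2)--(3) from (1), the triangle inequality, and the Riccati perturbation bound; bullet (4) by plugging the certainty-equivalence controller $K(\hat\theta)$ into the DR objective; and bullets (5)--(6) by two applications of the Lyapunov perturbation lemma, once perturbing the system at fixed controller and once perturbing the controller at fixed system, exactly as in the paper. Two bookkeeping points need repair. First, your handling of the square root in bullets (2)--(3) is stated backwards: bullet (1) gives some $\theta_0$ in the support with $\norm{K_{\mathsf{DR}}(\calD)-K(\theta_0)}\leq\sqrt{\textnormal{DR Objective}}$, and for $x\leq 1$ one has $x\leq\sqrt{x}$, so the un-squared quantity $\textnormal{DR Objective}$ does \emph{not} majorize $\sqrt{\textnormal{DR Objective}}$ in the small-objective regime; the honest conclusion is a bound with $\sqrt{\textnormal{DR Objective}}$ on the right-hand side (the paper's own one-line derivation elides the same point, so you correctly identified a soft spot, but your proposed fix does not close it). Second, in bullet (4) your bound $\trace(\mathbf{V}(\calD)H(\theta^\star))\leq\norm{\Gamma}^2\trace(H(\theta^\star))$ introduces a spurious factor of $d_\theta$ relative to the stated constant; the paper instead bounds $\mathbf{E}_{\theta\sim\calD}\norm{\hat\theta-\theta}_{H(\theta)}^2\leq\norm{H(\theta)}\sup_{\theta\in G}\norm{\hat\theta-\theta}^2\leq\norm{H(\theta)}\norm{\Gamma}^2$ directly, which is what yields the dimension-free $2.6\mathrm{e}3\,\tau_{B(\theta^\star)}^3\norm{P(\theta^\star)}^4\norm{\Gamma}$ after invoking the diameter hypothesis. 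With those two adjustments your argument coincides with the paper's.
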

\begin{proof}
    The first fact follows immediately from \Cref{lem: performance difference} along with the fact that $\Sigma^{K_{\mathsf{DR}}(\calD)}(\theta) \succeq I$ and $\Psi(\theta) \succeq I$. The second fact is derived from the first fact and \Cref{lem: Riccati perturbation}, which also yields the third fact using the diameter condition of $G$. 
    DR Objective is bounded as
    \begin{align*}
        \textnormal{DR Objective} &\leq \sup_{\theta\in G}\norm{\hat\theta-\theta}_{H(\theta)}^2 + 1.6e8\tau_{B(\theta^\star)}^3\norm{P(\theta^\star)}^{14}\norm{\hat\theta-\theta}^3 \\
        &\leq \paren{32\tau^2_{B(\theta^\star)}+ 2.5e3\tau^3_{B(\theta^\star)}}\norm{P(\theta^\star)}^4 \norm{\Gamma} 
    \end{align*}
    from \Cref{lem: helper lemma for RC} and the diameter condition. 
    It follows from \Cref{lem: lyap perturbation} that
    \begin{align*}
        &\bfE_{\theta\sim\calD}\norm{\Sigma^{K_{\mathsf{DR}}(\calD)}(\theta^\star) - \Sigma^{K_{\mathsf{DR}}(\calD)}(\theta)} \\
        &\leq \sup_{\theta\in G}\norm{\Sigma^{K_{\mathsf{DR}}(\calD)}(\theta^\star)}^{3/2}\norm{\Sigma^{K_{\mathsf{DR}}(\calD)}(\theta)}\norm{\theta-\theta^\star}\paren{1+\norm{K_{\mathsf{DR}}(\theta)}}\paren{2+\norm{\theta-\theta^\star}\paren{1+\norm{K_{\mathsf{DR}}(\theta)}}}.
    \end{align*}
    Let $\zeta(\theta) = \norm{\theta-\theta^\star}\paren{1+\norm{K_{\mathsf{DR}}(\theta)}}$, then we get the fifth fact: 
    \begin{align*}
        &\norm{\Sigma^{K_{\mathsf{DR}}(\calD)(\theta)}} 
        \leq \sup_{\theta\in G}\norm{\Sigma^{K_{\mathsf{DR}}(\calD)(\theta^\star)}} + \norm{\Sigma^{K_{\mathsf{DR}}(\calD)}(\theta^\star)}^{3/2}\norm{\Sigma^{K_{\mathsf{DR}}(\calD)}(\theta)}\zeta(\theta)(2+\zeta(\theta)) \\
        &\iff \norm{\Sigma^{K_{\mathsf{DR}}(\calD)}(\theta)} 
        \leq \sup_{\theta\in G}\frac{\norm{\Sigma^{K_{\mathsf{DR}}(\calD)}(\theta^\star)}}{1-\norm{\Sigma^{K_{\mathsf{DR}}}(\calD)(\theta^\star)}^{3/2}\zeta(\theta)(2+\zeta(\theta))} \leq 2\norm{\Sigma^{K_{\mathsf{DR}}(\calD)}(\theta^\star)}, 
    \end{align*}
    where the last inequality holds as long as $\norm{\Gamma}\leq\frac{1}{8\norm{\Sigma^{K_{\mathsf{DR}}}(\calD)(\theta^\star)}^{3/2}\paren{1+\norm{K_{\mathsf{DR}}}}^2}$. 

    Similarly, from \Cref{lem: lyap perturbation}, it holds that
    \begin{align*}
        &\norm{\Sigma^{K_{\mathsf{DR}}(\calD)}(\theta^\star) - \Sigma^{K(\theta^\star)}(\theta^\star)} \\
        &\leq \norm{\Sigma^{K(\theta^\star)}(\theta^\star)}^{3/2}\norm{\Sigma^{K_{\mathsf{DR}}(\calD)}(\theta^\star)}\norm{B(\theta^\star)\paren{K_{\mathsf{DR}}(\calD) - K(\theta^\star)}}\paren{2 + \norm{B(\theta^\star)\paren{K_{\mathsf{DR}}(\calD) - K(\theta^\star)}}} 
    \end{align*}  
    Let $\eta = \norm{B(\theta^\star)\paren{K_{\mathsf{DR}}(\calD) - K(\theta^\star)}}$, Then
    \begin{align*}
        & \norm{\Sigma^{K_{\mathsf{DR}}(\calD)}(\theta^\star)}\leq  \norm{\Sigma^{K(\theta^\star)}(\theta^\star)}+ \norm{\Sigma^{K(\theta^\star)}(\theta^\star)}^{3/2}\norm{\Sigma^{K_{\mathsf{DR}}(\calD)}(\theta^\star)}\eta(2+\eta) \\
        &\iff \norm{\Sigma^{K_{\mathsf{DR}}(\calD)}(\theta^\star)} \leq \frac{\norm{\Sigma^{K(\theta^\star)}(\theta^\star)}}{1-\norm{\Sigma^{K(\theta^\star)}(\theta^\star)}^{3/2}\eta(2+\eta)} \leq 2\norm{\Sigma^{K_{\mathsf{DR}}(\calD)}(\theta^\star)} 
    \end{align*}
    where the last inequality holds when
    \begin{align*}
        \eta(2+\eta) \leq \frac{1}{2\norm{\Sigma^{K(\theta^\star)}(\theta^\star)}^{3/2}}
    \end{align*}
    Now from the second and fourth fact, \eqref{eq: DR objective upper bound with theta} and \Cref{lem: helper lemma for RC}, $\eta$ is bounded as
    \begin{align*}
        &\eta = \norm{B(\theta^\star)\paren{K_{\mathsf{DR}}(\calD) - K(\theta^\star)}} \\
        &\leq \sup_{\theta\in G}\tau_{B(\theta^\star)}\paren{\textnormal{DR Objective} + 32 \norm{P(\theta^\star)}^{7/2} \norm{\theta-\theta^\star}} \\
        &\leq 2.5e3\tau^3_{B(\theta^\star)}\norm{P(\theta^\star)}^4 \norm{\Gamma} + 32\tau_{B(\theta^\star)}\norm{P(\theta^\star)}^{7/2}\norm{\theta-\theta^\star} = 2.6e3\tau^4_{B(\theta^\star)}\norm{P(\theta^\star)}^4\norm{\Gamma} \\
        &\iff \eta^2+2\eta \leq 3.2e4\tau^8_{B(\theta^\star)}\norm{P(\theta^\star)}^4\norm{\Gamma}
    \end{align*}
    where we applied $\mathsf{diam(G)}\leq\frac{1}{256}\inf_{\theta\in G}\norm{P(\theta)}^{-5}$ multiple times.
    Therefore the sixth inequality holds as long as
    \begin{align*}
        \norm{\Gamma} \leq \frac{1}{6.4e4\tau^8_{B(\theta^\star)}\norm{P(\theta^\star)}^4\norm{\Sigma^{K(\theta^\star)}(\theta^\star)}^{3/2}}
    \end{align*}
\end{proof}

\begin{lemma}
    \label{lem: DR suboptimality upper bound}
    Let $G$ denote the ellipsoid:
    \begin{align*}
        G \triangleq \{\theta ~:~ \theta=\hat \theta + \Gamma w, ~ w\in\mathcal{B}(0, 1)\},
    \end{align*} 
    Suppose the diameter of $G$ satisfies $\mathsf{diam}(G)\leq\frac{1}{6.4e4}\inf_{\theta\in G}\norm{P(\theta)}^{-5.5} \tau_{B(\theta^\star)}^{-8}$ and $\theta^\star\in G$. 
    Let $\calD$ be a distribution of system parameters over $G$.  Then it holds that
     \begin{align*}
        &C(K_{\mathsf{DR}}(\calD), \theta^\star) - C(K(\theta^\star), \theta^\star) \leq 2\norm{\hat\theta-\theta^\star}^2_{H(\theta^\star)} + 4\trace\paren{\mathbf{V}(\calD) H(\theta^\star)}+ L_{\mathsf{DR}}(\theta^\star)\norm{\Gamma}^3
    \end{align*}
    where
    \begin{align*}
        L_{\mathsf{DR}}(\theta^\star) = 5e8\du\tau_{B(\theta^\star)}^8\norm{P(\theta^\star)}^{17}
    \end{align*}
\end{lemma}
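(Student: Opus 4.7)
The plan is to apply the performance difference lemma (\Cref{lem: performance difference}) to express $C(K_{\mathsf{DR}}(\calD),\theta^\star) - C(K(\theta^\star),\theta^\star)$ as a weighted Frobenius norm of $K_{\mathsf{DR}}(\calD) - K(\theta^\star)$ with weighting $M_\star \triangleq \Sigma^{K_{\mathsf{DR}}(\calD)}(\theta^\star) \otimes \Psi(\theta^\star)$, and then split this norm via the elementary triangle-square inequality by inserting $K(\theta)$ for $\theta \sim \calD$:
\begin{align*}
\|K_{\mathsf{DR}}(\calD) - K(\theta^\star)\|^2_{M_\star}
&\le 2\,\bfE_{\theta\sim\calD} \|K_{\mathsf{DR}}(\calD) - K(\theta)\|^2_{M_\star} \\
&\quad + 2\,\bfE_{\theta\sim\calD}\|K(\theta) - K(\theta^\star)\|^2_{M_\star},
\end{align*}
which is valid because the left-hand side does not depend on $\theta$.

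For the first term, I plan to swap the $\theta^\star$-centred weighting $M_\star$ for the $\theta$-dependent weighting $\Sigma^{K_{\mathsf{DR}}(\calD)}(\theta)\otimes\Psi(\theta)$ that appears inside the DR Objective of \Cref{lem: DR helper lemmas}. The fifth fact of \Cref{lem: DR helper lemmas} controls $\Sigma^{K_{\mathsf{DR}}(\calD)}(\theta)$ against $\Sigma^{K_{\mathsf{DR}}(\calD)}(\theta^\star)$ (extended to PSD order using \Cref{lem: lyap perturbation}), and the second fact of \Cref{lem: helper lemma for RC} handles $\Psi(\theta)$ versus $\Psi(\theta^\star)$. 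Each swap introduces a multiplicative factor of the form $1+O(\|\Gamma\|)$ acting on a DR Objective that is itself $O(\|\Gamma\|)$ by the fourth fact of \Cref{lem: DR helper lemmas}, so the residual is $O(\|\Gamma\|^3)$. \Cref{lem: DR objective upper bound} then bounds the leading piece by $2\,\mathrm{tr}(\mathbf{V}(\calD)H(\theta^\star)) + O(\|\Gamma\|^3)$.

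For the second term, I would use \Cref{lem: LQR Taylor expansion} to write $\mathsf{vec}(K(\theta)-K(\theta^\star)) = D_\theta\mathsf{vec}K(\theta^\star)[\theta-\theta^\star] + R$ with $\|R\|\lesssim\|\theta-\theta^\star\|^2$, and swap the weighting from $M_\star$ to $\Sigma^{K(\theta^\star)}(\theta^\star)\otimes\Psi(\theta^\star)$ via the sixth fact of \Cref{lem: DR helper lemmas}. After this substitution the leading quadratic form is exactly $\|\theta-\theta^\star\|^2_{H(\theta^\star)}$ by the definition of $H(\theta^\star)$. Since $\bfE_{\theta\sim\calD}[\theta]=\hat\theta$, the bias-variance decomposition
\begin{align*}
\bfE_{\theta\sim\calD}\|\theta-\theta^\star\|^2_{H(\theta^\star)} = \|\hat\theta-\theta^\star\|^2_{H(\theta^\star)} + \mathrm{tr}(\mathbf{V}(\calD)H(\theta^\star))
\end{align*}
yields the contribution $2\|\hat\theta-\theta^\star\|^2_{H(\theta^\star)} + 2\,\mathrm{tr}(\mathbf{V}(\calD)H(\theta^\star))$, with the Taylor remainder and weighting-swap residuals collapsing into $O(\|\Gamma\|^3)$. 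Summing the two contributions recovers the declared $2\|\hat\theta-\theta^\star\|^2_{H(\theta^\star)} + 4\,\mathrm{tr}(\mathbf{V}(\calD)H(\theta^\star))$ plus cubic slack.

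The main obstacle will be bookkeeping: each of the two weighting-matrix swaps must be executed in positive semidefinite order rather than merely in operator norm, and the diameter hypothesis on $G$ must be checked to be tight enough that every Lyapunov and Riccati perturbation along the way collapses into the single cubic remainder with constants matching those declared in $L_{\mathsf{DR}}(\theta^\star)$. The dimension factor $d_u$ in $L_{\mathsf{DR}}(\theta^\star)$ enters only when $\mathrm{tr}(\cdot)$ is bounded by $d_u\|\cdot\|_{\mathsf{op}}$ during the reduction from PSD weighting bounds back to the scalar DR Objective; keeping track that this factor does not leak into the leading $\mathrm{tr}(\mathbf{V}(\calD)H(\theta^\star))$ term is the main subtlety.
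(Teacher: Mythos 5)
Your proposal is correct and follows essentially the same route as the paper: the same performance-difference decomposition with $K(\theta)$ inserted via the triangle-square inequality, the same weighting-matrix swaps (to the DR objective on one side and the CE cost gap on the other) controlled by \Cref{lem: DR helper lemmas} and \Cref{lem: helper lemma for RC}, and the same bias--variance split $\bfE_{\theta\sim\calD}\|\theta-\theta^\star\|^2_{H(\theta^\star)} = \|\hat\theta-\theta^\star\|^2_{H(\theta^\star)} + \trace(\mathbf{V}(\calD)H(\theta^\star))$ at the end. Your closing remark about the $\du$ factor entering only through the residual trace-to-operator-norm reductions is exactly how the paper keeps it out of the leading term.
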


\begin{proof}
    By the fact that $K_{\mathsf{DR}}(\calD)$ stabilizes the system $\theta^\star$, we may write by \Cref{lem: performance difference}
    \begin{equation}
    \begin{aligned}
        \label{eq: DR excess cost decomposiotion}
        &C(K_{\mathsf{DR}}(\calD), \theta^\star) - C(K(\theta^\star), \theta^\star) = \trace\paren{(K_{\mathsf{DR}}(\calD) - K(\theta^\star)) \Sigma^{K_{\mathsf{DR}}(\calD)}(\theta^\star)  (K_{\mathsf{DR}}(\calD) - K(\theta^\star)) \Psi(\theta^\star)} \\
        &\leq 2 \bfE_{\theta\sim\calD} \brac{\trace\paren{(K_{\mathsf{DR}}(\calD) - K(\theta)) \Sigma^{K_{\mathsf{DR}}(\calD)}(\theta^\star)  (K_{\mathsf{DR}}(\calD) - K(\theta)) \Psi(\theta^\star)}} \\
        &+ 2 \bfE_{\theta\sim\calD} \brac{\trace\paren{(K(\theta) - K(\theta^\star)) \Sigma^{K_{\mathsf{DR}}(\calD)}(\theta^\star)  (K(\theta) - K(\theta^\star)) \Psi(\theta^\star)}},
    \end{aligned}
    \end{equation}
    where the inequality follows from the Cauchy-Schwarz. Next, we massage the first term to have the form of the domain randomization objective, and the second term to have the form of the certainty equivalence objective. To this end, first note that by \Cref{lem: helper lemma for RC} and $\Psi(\theta)\succeq I$, we get
    \begin{align*}
        \Psi(\theta^\star) \preceq \Psi(\theta)\paren{1 + 15 \tau_{B(\theta^\star)}^2 \norm{P(\theta^\star)}^3 \norm{\theta - \theta^\star}}.
    \end{align*}
    Additionally, write expand 
    \begin{align*}
       \Sigma^{K_{\mathsf{DR}}(\calD)}(\theta^\star) &\preceq \Sigma^{K_{\mathsf{DR}}(\calD)}(\theta) + I \times \norm{\Sigma^{K_{\mathsf{DR}}(\calD)}(\theta^\star) - \Sigma^{K_{\mathsf{DR}}(\calD)}(\theta)}  \mbox{ and }\\
       \Sigma^{K_{\mathsf{DR}}(\calD)}(\theta^\star) &= \Sigma^{K(\theta^\star)}(\theta^\star) + I \times \norm{\Sigma^{K_{\mathsf{DR}}(\calD)}(\theta^\star) - \Sigma^{K(\theta^\star)}(\theta^\star)}.
    \end{align*}
    Substituting the above set of inequalities into \eqref{eq: DR excess cost decomposiotion} we achieve the bound 
    \begin{align}
        &C(K_{\mathsf{DR}}(\calD), \theta^\star) - C(K(\theta^\star), \theta^\star) \nonumber \\
        &\leq 2 \times  \mbox{DR objective}\times \paren{1 + 15 \tau_{B(\theta^\star)}^2 \norm{P(\theta^\star)}^3 \norm{\theta - \theta^\star}} \nonumber\\
        &+ 2 \times \mbox{Expected CE cost gap} \nonumber\\
        & + 2 \du \norm{\Psi(\theta^\star)} \bfE_{\theta\sim\calD}\brac{\norm{\Sigma^{K_{\mathsf{DR}}(\calD)}(\theta^\star) - \Sigma^{K_{\mathsf{DR}}(\calD)}(\theta)} \norm{K_{\mathsf{DR}}(\calD) - K(\theta)}^2} \nonumber\\
        &+ 2 \du \norm{\Psi(\theta^\star)} \norm{\Sigma^{K_{\mathsf{DR}}(\calD)}(\theta^\star) - \Sigma^{K(\theta^\star)}(\theta^\star)}\bfE_{\theta\sim\calD}\brac{ \norm{K(\theta^\star) - K(\theta)}^2}, \label{eq:DR suboptimality gap}
    \end{align}
    where 
    \begin{align*}
        &\mbox{Expected CE cost gap}  = \bfE_{\theta\sim\calD} \brac{\trace\paren{(K(\theta) - K(\theta^\star)) \Sigma^{K(\theta^\star)}(\theta^\star)  (K(\theta) - K(\theta^\star)) \Psi(\theta^\star)}}, \\
         &\mbox{DR Objective }=  \bfE_{\theta\sim\calD} \brac{\trace\paren{(K_{\mathsf{DR}}(\calD) - K(\theta)) \Sigma^{K_{\mathsf{DR}}(\calD)}(\theta)  (K_{\mathsf{DR}}(\calD) - K(\theta)) \Psi(\theta)}}.
    \end{align*}
    We let $D(\theta)$ denote the DR objective.
    From \Cref{lem: DR objective upper bound}, the first term of \eqref{eq:DR suboptimality gap} becomes
    \begin{align}
        &2\times D(\theta)\times \paren{1 + 15 \tau_{B(\theta^\star)}^2 \norm{P(\theta^\star)}^3 \norm{\theta - \theta^\star}}\nonumber\\
        &\leq 2 \times \paren{\trace\paren{\mathbf{V}(\calD) H(\theta^\star)} + L_{\mathsf{DR,1}}(\theta^\star)  \norm{\Gamma}^3} \times \paren{1 + 15 \tau_{B(\theta^\star)}^2 \norm{P(\theta^\star)}^3\norm{\theta - \theta^\star}} \nonumber\\
        &\leq 2\trace\paren{\mathbf{V}(\calD) H(\theta^\star)} + 3.5e8\tau_{B(\theta^\star)}^5\norm{P(\theta^\star)}^{17}\norm{\Gamma}^3 \label{eq: first term of DR suboptimality gap}
    \end{align}
    where we used \Cref{lem: helper lemma for RC} and applied the closeness condition to $\norm{\theta-\theta^\star}$ in the last inequality. 
    Next the second term of \eqref{eq:DR suboptimality gap} follows from \Cref{lem: CE upper bound} that
    \begin{align}
        2 \times \mbox{Expected CE cost gap} \leq \bfE_{\theta\sim\calD}\brac{2\norm{\theta-\theta^\star}^2_{H(\theta^\star)}} + 1.2e6\tau_{B(\theta^\star)}^3\norm{P(\theta^\star)}^{14}\norm{\Gamma}^3 \label{eq:second term of DR suboptimality gap}
    \end{align}
    Then from the fifth fact of \Cref{lem: DR helper lemmas}, 
    \begin{align*}
        &\bfE_{\theta\sim\calD}\norm{\Sigma^{K_{\mathsf{DR}}(\calD)}(\theta^\star) - \Sigma^{K_{\mathsf{DR}}(\calD)}(\theta)} \\
        &\leq\sup_{\theta\in G}3\norm{\Sigma^{K(\theta^\star)}(\theta^\star)}^{5/2}\paren{1+\norm{K_{\mathsf{DR}}(\calD)}}\norm{\theta-\theta^\star}\paren{2 + \frac{1}{8\norm{\Sigma^{K_{\mathsf{DR}}}(\calD)(\theta^\star)}^{3/2}\paren{1+\norm{K_{\mathsf{DR}}(\calD)}}}}
        \\
        &\leq \sup_{\theta\in G}5\norm{\Sigma^{K(\theta^\star)}(\theta^\star)}^{5/2}\paren{1+\norm{K_{\mathsf{DR}}(\calD)}}\norm{\theta-\theta^\star}
    \end{align*}
    since $\norm{\Sigma^{K(\theta^\star)}(\theta^\star)} \geq 1$. Also from the sixth fact of \Cref{lem: DR helper lemmas}
    \begin{align*}
        &\norm{\Sigma^{K_{\mathsf{DR}}(\calD)}(\theta^\star) - \Sigma^{K(\theta^\star)}(\theta^\star)} \leq \sup_{\theta\in G}6.4e4\norm{P(\theta^\star)}^{13/2}\norm{\theta-\theta^\star}
    \end{align*}
    By combining with \Cref{lem: Riccati perturbation}, the third term of \eqref{eq:DR suboptimality gap} becomes
    \begin{align}
        &2 \du \norm{\Psi(\theta^\star)} \bfE_{\theta\sim\calD}\brac{\norm{\Sigma^{K_{\mathsf{DR}}(\calD)}(\theta^\star) - \Sigma^{K_{\mathsf{DR}}(\calD)}(\theta)} \norm{K_{\mathsf{DR}}(\calD) - K(\theta)}^2} \nonumber \\
        &\leq \du \norm{\Psi(\theta^\star)}  \paren{8e4\tau_{B(\theta^\star)}^2\paren{1+\norm{K_{\mathsf{DR}}(\theta)}}\norm{P(\theta^\star)}^{23/2} + 3.4e8\tau_{B(\theta^\star)}^3\norm{P(\theta^\star)}^{39/2}\norm{\Gamma}}\norm{\Gamma}^3 \nonumber  \\
        &\leq 5e6\du\tau^8_{B(\theta)}\norm{P(\theta^\star)}^{33/2}\norm{\Gamma}^3 \label{eq:third term of DR suboptimality gap}
    \end{align}
    where the last inequality follows from \Cref{lem: simplifying inequalities} and noting that $1+\norm{K_{\mathsf{DR}}(\theta)}\leq12\tau_{B(\theta^\star)}^4\norm{P(\theta^\star)}^{1/2}$ from \Cref{lem: DR helper lemmas}.
    Furthermore, the fourth term of \eqref{eq:DR suboptimality gap} is given by
    \begin{align}
        &2 \du \norm{\Psi(\theta^\star)} \norm{\Sigma^{K_{\mathsf{DR}}(\calD)}(\theta^\star) - \Sigma^{K(\theta^\star)}(\theta^\star)}\bfE_{\theta\sim\calD}\brac{ \norm{K(\theta^\star) - K(\theta)}^2} \leq 1.4e8 \du \tau^2_{B(\theta)}\norm{P(\theta^\star)}^{31/2}\norm{\Gamma}^3 \label{eq:fourth term of DR suboptimality gap}
    \end{align}
    where the last inequatliy follows from \Cref{lem: Riccati perturbation} and \Cref{lem: simplifying inequalities}. 
    Finally, from \eqref{eq: first term of DR suboptimality gap}, \eqref{eq:second term of DR suboptimality gap}, \eqref{eq:third term of DR suboptimality gap} and \eqref{eq:fourth term of DR suboptimality gap}, we get
    \begin{align*}
         &C(K_{\mathsf{DR}}(\calD), \theta^\star) - C(K(\theta^\star), \theta^\star) \\
         &\leq 2\bfE_{\theta\sim\calD}{\norm{\theta-\theta^\star}^2_{H(\theta^\star)}} + 2\trace\paren{\mathbf{V}(\calD) H(\theta^\star)} + L_{\mathsf{DR}}(\theta^\star)\norm{\Gamma}^3
    \end{align*}
    where
    \begin{align*}
        L_{\mathsf{DR}}(\theta^\star) &= \paren{3.5e8\tau_{B(\theta^\star)}^5 + 1.2e6\tau_{B(\theta^\star)}^3 + 5e6\du\tau^8_{B(\theta)} +  1.4e8\du \tau^2_{B(\theta)}}\norm{P(\theta^\star)}^{17}\\
        &= 5e8\du\tau_{B(\theta^\star)}^8\norm{P(\theta^\star)}^{17}
    \end{align*}
    We conclude the proof by noting that
    \begin{align*}
        2\bfE_{\theta\sim\calD}\norm{\theta-\theta^\star}^2_{H(\theta^\star)} = 
        2\norm{\hat\theta - \theta^\star}_{H(\theta^\star)}^2 + 
        2\trace\paren{\mathbf{V}(\calD) H(\theta^\star)}.
    \end{align*}
\end{proof}

\subsection{Proof of \texorpdfstring{\Cref{lem: domain randomization general}}{}}
\label{subsec: Proof of DR upper bound}
\begin{proof}
    As proved in \Cref{subsec: proof of RC upper bound}, $\theta^\star\in G$ with probability at least $1-\delta$. 
    From the diameter condition in \Cref{lem: DR suboptimality upper bound}, i.e., $\mathsf{diam(G)}\leq\frac{1}{6.4e4}\inf_{\theta\in G}\norm{P(\theta)}^{-5.5} \tau_{B(\theta^\star)}^{-8}$, $N$ must satisfy
    \begin{align*}
        & \sup_{\theta_1, \theta_2\in G}\norm{\theta_1-\theta_2}^2\leq \frac{32\paren{d_\theta+\log\frac{2}{\delta}}}{N\lambda_{\min}\paren{\hat{\mathsf{FI}}}} \leq \frac{1}{(6.4e4)^2}\inf_{\theta\in G}\norm{P(\theta)}^{-11} \tau_{B(\theta^\star)}^{-16}.  \\
        &\iff N \geq \frac{5e6\norm{P(\theta^\star)}^{11} \tau_{B(\theta^\star)}^{16}\paren{d_\theta+\log\frac{2}{\delta}}}{\lambda_{\min}\paren{\mathsf{FI}(\theta^\star)}}
    \end{align*} 
    where we applied $0.5\mathsf{FI}(\theta^\star)\preceq\hat{\mathsf{FI}}$ and \Cref{lem: Riccati perturbation} to get the last inequality.
    As long as this holds, we get \eqref{eq:DR upper bound} from \Cref{thm: identification bound} and \Cref{lem: DR suboptimality upper bound}. 
\end{proof}

\subsection{Proof of \texorpdfstring{\Cref{thm: Domain Randomization Upper Bound}}{}}
\label{subsec: Proof of DR upper bound w/ uniform distribution}
\begin{proof}
    Since the variance of uniform distribution over the ellipsoid $\curly{w \colon w^T\Gamma w\leq1, w\in\R^d}$ is given as $\frac{1}{d+2}\Gamma^{-1}$, 
    \begin{align*}
        \trace(V(\calD)H(\theta^\star)) &\leq \frac{16\paren{d_\theta+\log\frac{2}{\delta}}}{Nd_\theta}\trace\paren{\hat{\mathsf{FI}}^{-1}H(\theta^\star)} \leq \frac{32\paren{1+\log\frac{2}{\delta}/d_\theta}}{N}\trace\paren{{\mathsf{FI}}(\theta^\star)^{-1}H(\theta^\star})
    \end{align*}
    where we used $0.5\hat{\mathsf{FI}}\preceq {\mathsf{FI}}(\theta^\star)$ in the last inequality. Then from \Cref{lem: domain randomization general}, we get \eqref{eq:DR upper bound w/ unif dist}. 
\end{proof}

\section{Implementation Details}
\label{s: implementation details}
\subsection{Linear System}
We explain the details about the linear system experiments in \Cref{s: numerical}. The goal is to compare the three controller synthesis methods considered in this work (\Cref{s: methods}) to study show how the suboptimality gap of the LQR cost changes. 
All the code is implemented in JAX \citep{jax2018github}. 
First we identify the least square estimate $\hat\theta$ by solving \eqref{eq: least squares} given collected data by running a variable number of experiments with $U_t\sim\mathcal{N}(0, I)$ and $W_t\sim\mathcal{N}(0, I)$, and followed by the controller design. 
\begin{itemize}
\item 
For the certainty equivalence controller synthesis, we simply synthesize a controller by solving the \textit{Algebraic Ricatti Equation}. 
\item 
For the domain randomization controller synthesis, we implement \Cref{alg: dr lqr} via stochastic gradient descent. 
We choose $\mathcal{D}$ as a uniform distribution over the confidence ellipsoid $G$. We set $M = 10000$ and $\eta=0.0005$ in \Cref{alg: dr lqr}. We use the closed-form expression of the policy gradient as described in Lemma $1$ of \cite{fazel2018global} to implement gradient descent. 
\item 
For the robust controller synthesis, we adopt the scenario-approach \citep{calafiore2006scenario}, where we sample 30 points from the ellipsoid $G$ \eqref{eq: confidence ellipsoid}, by formulating as the semidefinite programming problem with linear matrix inequalities \citep{caverly2019lmi} and solve the convex optimization problem using CVXPY \citep{diamond2016cvxpy} using the SCS solver \citep{ocpb:16}. 
\end{itemize}

Since these randomization procedures result in high variance, we perform $500$ trials and report the median and quartiles over those trials in \Cref{fig:result dr lqr}. 

\subsection{Pendulum}
The pendulum is goverend by the following dynamics
\begin{align*}
    X_{t+1} = f(X_t, U_t+W_t; \theta^\star),
\end{align*}
where the state space $X_t\in\R^2$ consists of angle $\psi_t$(rad) and angular velocity $\dot\psi_t$(rad/s), and the action $U_t\in\R$ is the actuation torque $\tau_t$(N m) applied directly to the free end of the pendulum, which is corrupted by i.i.d Gaussian noise $W_t \sim \mathcal{N}(0, 1)$. The unknown parameter $\theta\in\R^3$ accounts for the mass $m$ and the length $l$ of the pole, and the gravitiy $g$, which take values $1.0$, $1.0$, and $9.81$, respectively.  

As in the linear experiments, first we identify the estimate $\hat\theta$ by solving the following least squares problem:
\begin{align*}
    \hat\theta \triangleq \argmin_\theta \sum_{n=1}^N\sum_{t=1}^T\norm{X_{t+1}^n - f(X_t, U_t; \theta)}^2.
\end{align*}
The data consists of varying numbers of length $10$ trajectories collected from the pendulum starting from the downward position ($\theta = \pi, \dot \theta =0$) with $U_t\sim\mathcal{N}(0, 1)$ and $W_t\sim\mathcal{N}(0, 1)$ 
This estimation procedure is followed by the sampling-based Model Predictive Controller synthesis based on the cross-entropy method \citep{botev2013cross} with the different design methods.
\begin{itemize}
    \item For the certainty equivalence control, actions are selected to  minimize the cost of the rollout in the receding horizon which is simulated with $\hat\theta$ as if it were the ground truth.
    \item For the domain randomization control, we first design a sampling distribution $\calD$ as a uniform distribution over a sphere centered at the estimate $\hat \theta$, with radius given by a hyperparameter $2.0$ divided by the number of trajectories used to identify $\theta$. We then sample $K=15$ instances $\theta_1, \dots, \theta_{K} \in\mathcal{D}$. The controller then plans sequences of actions which minimize the average cost of all the rollouts simulated with $\theta_1, \dots, \theta_{K}$. 
\end{itemize}
We set the stage cost $C(X_t, U_t)$ as
\begin{align*}
    C(X_t, U_t) = \left\{\begin{array}{ll}
        \psi_t^2 + 0.1\dot\psi_t^2 + 2.0\tau_t^2 & -\pi/4 \leq \psi \leq \pi/4 \\
        50.0 & \text{otherwise}
    \end{array}\right.
\end{align*}
This cost design promotes the controller to keep the pendulum upright using the minimal control effort. CE with an inaccurate parameter estimate might underestimate control effort required to avoid letting the pendulum fall.
On the other hand, DR plans such that it keeps the pendulum upright for all sampled systems $\theta_1, \dots, \theta_K$, therefore is able to keep upright position even when the estimates are poor.

For other hyperparameters and detailed implementation, refer to the code \footnote{Codes can be found at \url{https://github.com/Tesshuuuu/domain-randomization-l4dc2025}}.

\end{document}